%
\documentclass{article}%
%
\usepackage{amsmath,amsthm}%
\usepackage{amsfonts, color}%
\usepackage{amssymb}%
\usepackage{graphicx}
\usepackage{verbatim} 
\usepackage[T1]{fontenc} 
\usepackage[utf8]{inputenc} 
\usepackage{authblk} 
\usepackage{booktabs}
\newtheorem{theorem}{Theorem}

\newtheorem{corollary}{Corollary}

\newtheorem{definition}{Definition}

\newtheorem{lemma}{Lemma}

\usepackage[margin=1in]{geometry}

\begin{document}

\title{Conditional Heteroskedasticity of Return Range Processes}
\author[1]{Yan Sun\thanks{Corresponding author. Email: yan.sun@usu.edu. Phone: (435)797-2861. Fax: (435)797-1822.}} 
\author[2]{Jennifer Loveland\thanks{jennifer.ellsworth@aggiemail.usu.edu}} 
\author[3]{Isaac Blackhurst\thanks{Isaac.Blackhurst@gmail.com}}
\affil[1,2,3]{Department of Mathematics $\&$ Statistics\\  
Utah State University\\
3900 Old Main Hill\\
Logan, Utah 84322-3900}

\date{}
\maketitle

\begin{abstract}
Price range contains important information about the asset volatility, and has long been considered an important indicator for it. In this paper, we propose to jointly model the [low, high] price range as a random interval and introduce an interval-valued GARCH (Int-GARCH) model for the corresponding [low, high] return range process. Model properties are presented under the general framework of random sets, and the parameters are estimated by a metric-based conditional least squares (CLS) method. Our empirical analysis of the daily return range data of Dow Jones component stocks yields very interesting results. 
\end{abstract}

\section{Introduction}
Assets volatility plays an essential role in modern finance. It provides a measure of variability for the asset price over a certain period of time, and is a key parameter in many financial applications such as financial derivatives pricing, risk assessment, and portfolio management. Because volatility is not observable, statistical modeling that produces accurate volatility estimation and prediction is of great assistance. The squared return of log prices, being the classical estimator of variance, used to be the ``ideal'' proxy of volatility. As a result, many volatility models built on returns were proposed and, for a long time, had been very popular and successful. The celebrated ARCH (Engle 1982) and GARCH (Bollerslev 1986) models are examples of this type. Recently, as the high-frequency transaction data become available, price changes can be practically monitored in a continuous way, and the traditional low-frequency (e.g., daily) return is no longer quite representative of the volatility. For example, a small return does not necessarily imply low volatility, as the price may fluctuate a lot and close at a similar level to the opening. On the other hand, a big return could only be the result of a very different opening price from the previous day's closing. Therefore, the return-based models, using too little information, are likely to produce inefficient or even incorrect estimates of the volatility. 

In fact, since closing price is only a ``snapshot'' among numerous prices during a day, there is nothing special about it and return needs not to be calculated solely from it. With the availability of high-frequency data, intuitively, the log difference between any two observed prices in two consecutive days can be called a ``return''. This idea leads to a naturally generalized concept of daily return, which is an interval that includes all the ``snapshot'' returns. Let $y_t\left(s\right)$ be the log price of an asset at time $s$ on day $t$. Ideally $s$ should be a continuous time index. But since price can only be observed at discrete times even for high-frequency data, $s$ is assumed to be a discrete index here. We define the interval-valued daily return as
\begin{eqnarray}
  r_t&=&\left[\min_{s,w}\left\{y_t\left(s\right)-y_{t-1}\left(w\right)\right\}, \max_{s,w}\left\{y_t\left(s\right)-y_{t-1}\left(w\right)\right\}\right]\nonumber\\
  &=&\left[\min_{s}\left\{y_t\left(s\right)\right\}-\max_{w}\left\{y_{t-1}\left(w\right)\right\}, \max_{s}\left\{y_t\left(s\right)\right\}-\min_{w}\left\{y_{t-1}\left(w\right)\right\}\right].\label{def:rt-1}
\end{eqnarray}
Namely, $r_t$ is the range of ``snapshot'' returns during one day, which, apparently, contains more information about the daily volatility than the traditional closing-to-closing return. Our goal is to build a volatility model that reveals the dynamics of $r_t$ as an interval. 

Throughout the rest of the paper, we will view $r_t$ as a random interval and model its dynamics under the framework of random sets. To facilitate our presentation, we now briefly introduce the basic notations and definitions in the random set theory. (See, e.g., Kendall 1974, Matheron 1975, Artstein and Vitale 1975, Molchanov 2005, Sun and Ralescu 2014.) Let $(\Omega,\mathcal{L},P)$ be a probability space. Denote by $\mathcal{K}\left(\mathbb{R}^d\right)$ or $\mathcal{K}$ the collection of all non-empty compact subsets of $\mathbb{R}^d$. In the space $\mathcal{K}$, a linear structure is defined by Minkowski addition and scalar multiplication, i.e.,
\begin{equation}\label{set-lin}
  A+B=\left\{a+b: a\in A, b\in B\right\},\ \ \ \ \lambda A=\left\{\lambda a: a\in A\right\},
\end{equation}
$\forall A, B\in\mathcal{K}$ and $\lambda\in\mathbb{R}$. A natural metric for the linear space $\mathcal{K}$ is the Hausdorff metric $\rho_H$, which is defined as
\begin{equation*}
  \rho_H\left(A,B\right)=\max\left(\sup\limits_{a\in A}\rho\left(a,B\right), \sup\limits_{b\in B}\rho\left(b,A\right)\right),\ \forall A,B\in\mathcal{K},
\end{equation*}
where $\rho$ denotes the Euclidean metric.
A random compact set is a Borel measurable function $A: \Omega\rightarrow\mathcal{K}$, $\mathcal{K}$ being equipped with the Borel $\sigma$-algebra induced by the Hausdorff metric. For each $A\in\mathcal{K}\left(\mathbb{R}^d\right)$, the function defined on the unit sphere $S^{d-1}$:
\begin{equation*}
  s_A\left(u\right)=\sup_{a\in A}\left<u, a\right>,\ \ \forall u\in S^{d-1},
\end{equation*}
is called the support function of A. If $A(\omega)$ is convex almost surely, then $A$ is called a random compact convex set. Much of the random sets theory has focused on compact convex sets via their support functions. Especially, a one-dimensional random compact convex set is called a random interval.

Under the linear structure (\ref{set-lin}), the random interval $r_t$ can be alternatively defined as
\begin{equation}\label{def:rt-2}
r_t=\left[\min_{s}\left\{y_t\left(s\right)\right\}, \max_{s}\left\{y_t\left(s\right)\right\}\right]-\left[\min_{w}\left\{y_{t-1}\left(w\right)\right\}, \max_{w}\left\{y_{t-1}\left(w\right)\right\}\right].
\end{equation}
Namely, $r_t$ can be viewed as the return of daily price ranges. An immediate observation from equation (\ref{def:rt-2}) is
\begin{equation}\label{rt-prc-range}
  \left|r_t\right|=\left|\left[\min_{s}\left\{y_t\left(s\right)\right\}, \max_{s}\left\{y_t\left(s\right)\right\}\right]\right|+\left|\left[\min_{w}\left\{y_{t-1}\left(w\right)\right\}, \max_{w}\left\{y_{t-1}\left(w\right)\right\}\right]\right|,
\end{equation} 
where $\left|\cdot\right|$ denotes the Lebesgue measure or the length of an interval. That is, the length of $r_t$ is the sum of the high-low log price ranges of the two corresponding days. Feller (1951) derived the asymptotic distribution of the range of a Wiener process. This, together with the classical stochastic volatility models, provides the theoretical evidence that the log price range contains rich information of the integrated variance, and consequently the daily volatility too. It in turn explains from another perspective that $r_t$ is an invaluable resource for estimating the daily volatiity.

There has been a great deal of effort in the literature on volatility modeling using the high-low (log) price range of either low-frequency or high-frequency data (Garman and Klass 1980, Parkinson 1980, Rogers and Satchell 1991, Kunitomo 1992, Alizadel et al. 2002, Chou 2005, Engle and Gallo 2006, Brandt and Jones 2006, Christensen and Podolskij 2007). One common characteristic of the existing methods is that they consider the high-low price range as an indicator of the volatility and include it explicitly into the modeling by functions of certain forms. In this paper we take a different approach. We focus on the interval-valued [low, high] return range as opposed to the point-valued high-low price range. It is seen from the above discussion that the information from the high-low price range is already contained in the return range. In addition, the return range also includes the return itself somewhere in the interval. Therefore, careful modeling of the [low, high] return range as a whole is expected to produce promising estimate of the volatility that accounts for information from both the return and the price range. To this end, we propose to model the return range process $\left\{r_t\right\}$ by developing a generalized interval-valued GARCH (Int-GARCH) model. Under this new framework, the classical point-valued GARCH process becomes the degenerate case of our Int-GARCH model, where the return range $r_t$ has zero length. Our theoretical results are two-fold. We first show that under certain conditions our Int-GARCH model achieves a weak stationarity that is characterized by a time invariant mean and variance. Then, under the assumption of weak stationarity, we define and give the explicit formula of the autocorrelation function (ACF) of the Int-GARCH process. We propose a conditional least squares (CLS) method to estimate the model parameters, which is implemented by a Newton-Raphson algorithm. Simulation results are consistent with our theoretical findings and our CLS estimates are very satisfactory.

For empirical study, we analyze the Dow Jones component stocks data using our proposed Int-GARCH model. It is hard to judge how well it estimates the volatility, but we manage to demonstrate the characteristics and advantages of our Int-GARCH via comparisons to both GARCH and realized volatility (RV). It is shown that Int-GARCH model, by implicitly utilizing the intraday data (i.e., daily price range), has the advantage of RV to reflect the intraday price variability, which the return-based models such as GARCH are usually insufficient for. The idea to embed high-frequency volatility measures such as RV (Andersen et al. 2001, Barndorff-Nielsen and Shephard 2002) and realized kernel (Barndorff-Nielsen et al. 2008) into low-frequency models is not new. For example, the GARCHX (Engle 2002), HEAVY (Shephard and Sheppard 2010), and RealGARCH (Hansen et al. 2012) models are all important contributions along this direction. As certainly these compound models improve over the daily models with intraday information, their strong dependence on the realized measures makes them sensitive to any uncertainty or bias in these measures. For instance, it is well known that RV suffers from microstructure noises (see, e.g., Hansen and Lunde 2006, Brandi and Russell 2006, 2008). So, in order for such a compound model with RV to achieve its optimal performance, caution needs to be taken, either to choose the right sampling frequency or the appropriate remedy such as pre-averaging (Jacod et al. 2009) and subsampling (Zhang et al. 2004), to ensure that RV well measures the intraday volatility. Based on this discussion and the results of our empirical analysis, we believe that our Int-GARCH model makes the contribution of systematically integrating the low and high frequency volatility measures in an effective and operationally simple way, without obvious vulnerability to the microstructure noises. We present in our study that it is especially useful when assets are frequently traded and the intraday variation is significant.

The rest of the paper is organized as follows. We formally introduce our Int-GARCH model in Section 2. Main theoretical results are presented in Section 3 and Section 4. Section 5 proposes the conditional least squares method for estimating model parameters and carefully investigates its performances by a simulation study. Empirical study with the Dow Jones stocks data, as well as a detailed discussion, are reported in Section 6. We finish with concluding remarks in Section 7. Proofs and useful lemmas are deferred to the Appendix.

\section{The Int-GARCH model}
We assume observing an interval-valued time series $\left\{r_t\right\}_{t=1}^{T}$ of the form
$$r_t=[\lambda_t-\delta_t, \lambda_t+\delta_t],\ t=1,2,\cdots,T.$$
That is, $\left\{\lambda_t\right\}_{t=1}^{T}$ and $\left\{\delta_t\right\}_{t=1}^{T}$ are the associated center and radius processes, both of which are observable. A practical example of $\left\{r_t\right\}_{t=1}^{T}$ we consider in this paper is the daily return range process. Let $\mathcal{F}_t$ denote the information set up to time t, i.e.
$$\mathcal{F}_t=\sigma\left\{r_s: s\leq t\right\}.$$
We are concerned with the conditional variance of $r_t$ given $\mathcal{F}_{t-1}$.\\
\indent The GARCH model depicts the conditional variance of a point-valued return process as a linear function of the past returns and variances. This was inspired by the fact that assets returns usually exhibit volatility clustering: large variations in prices tend to cluster together, resulting in separate dynamic and tranquil periods of the market. Extending this spirit to the interval-valued process $\left\{r_t\right\}$, one would expect, conceptually, a model such like
\begin{equation}\label{model}
  H_t^2=g\left(H_s^2, r_s^2: s\leq t-1\right),
\end{equation} 
where $H_t^2$ denotes the conditional variance of $r_t$ given $\mathcal{F}_{t-1}$, and $g$ is linear in $H_s^2$ and $r_s^2$. We point out that $r_s^2$ in (\ref{model}) is only a notation, representing an estimate of the past variance. Actually, the square of a set has not been formally defined in the literature yet. To realize such a model, we need to have a mathematical definition for the conditional variance of a random interval $r_t$ and explicitly express it in terms of the observable random variables 
$\lambda_t$ and $\delta_t$.

The variance of a compact convex random set was originally introduced by Lyashenko (1982) and further studied in K\"orner (1995). Applying these results to the random interval $r_t$, via straightforward calculations, we obtain
\begin{eqnarray}
  \text{E}(r_t)&=&\left[\text{E}(\lambda_t)-\text{E}(\delta_t), \text{E}(\lambda_t)+\text{E}(\delta_t)\right],\label{meanA}\\
  \text{Var}(r_t)&=&\text{Var}(\lambda_t)+\text{Var}(\delta_t)\label{varA}.
\end{eqnarray}
The interpretation of (\ref{varA}) is obvious: the variance of a random interval consists of the variances from both the center and the radius. Considering (\ref{varA}) and assuming $\text{E}\lambda_t=0$, a reasonable function $g$ in (\ref{model}) seems to imply
\begin{equation*}
  H_t^2=\mu+\sum_{i=1}^{p}\alpha_i\left[\lambda_{t-i}^2+\delta_{t-i}^2-\left(E\delta_{t-i}\right)^2\right]
  +\sum_{i=1}^{q}\beta_iH_{t-i}^2,
\end{equation*}
where $p>0, q\geq 0$. Since the unconditional mean of $\delta_t$ is a constant and therefore can be absorbed into the parameter $\mu$, the above equation is simplified to
\begin{equation*}
  H_t^2=\mu+\sum_{i=1}^{p}\alpha_i\left[\lambda_{t-i}^2+\delta_{t-i}^2\right]+\sum_{i=1}^{q}\beta_iH_{t-i}^2.
\end{equation*}
To give more flexibility to our model, we allow for different degrees of dependence of $H_t$ on the past centers and radii. In addition, we propose to model volatility $H_t$ directly, instead of via the conditional variance $H_t^2$. 

Given the above discussion, we propose the following Int-GARCH $(p,q,w)$ model for the return range process:
\begin{eqnarray}
  r_t&=&h_t\cdot v_t,\label{igarch_1}\\
  v_t&=&[\epsilon_t-\eta_t, \epsilon_t+\eta_t],\label{igarch_2}\\
  \epsilon_t&\sim&N(0,1),\label{igarch_3}\\
  \eta_t&\sim&\Gamma(k,1),\label{igarch_4}\\
  h_t&=&\mu+\sum_{i=1}^{p}\alpha_i|\lambda_{t-i}|+\sum_{i=1}^{q}\beta_i\delta_{t-i}+\sum_{i=1}^{w}\gamma_ih_{t-i},\label{igarch_5}
\end{eqnarray}
where $p>0, q>0, w\geq0$, and $\left\{\alpha_i: i=1,\cdots,p\right\}, \left\{\beta_i: i=1,\cdots q\right\}, \left\{\gamma_i: i=1,\cdots,w\right\}$ are positive constants. In (\ref{igarch_1}), ``$\cdot$'' denotes the scalar multiplication. Although we pose parametric assumptions on the error terms $\epsilon_t$ and $\eta_t$ to simplify our presentation here, they are not really necessary. In practice, it is best to use the true data generating distributions, which vary from data to data. So, in replacement of (\ref{igarch_3})-(\ref{igarch_4}), a relaxed yet sufficient specification for $\epsilon_t$ and $\eta_t$ is
\begin{eqnarray}
  &&\text{E}\left(\epsilon_t|\mathcal{F}_{t-1}^{\epsilon, \eta}\right)=0,\label{igarch_3*}\\
  &&\text{E}\left(\eta_t|\mathcal{F}_{t-1}^{\epsilon, \eta}\right)=k,\label{igarch_4*}\\
  &&\eta_t>0\label{igarch_5*}.
\end{eqnarray}
We can get some insights into this interval-valued model by breaking it down into point-valued models. For example, let $hl_t$ denote the high-low log price range on day $t$. Then, in view of (\ref{rt-prc-range}), it is easily derived from the Int-GARCH model that 
\begin{equation*}
  hl_t+hl_{t-1}=2h_t\eta_t,
\end{equation*}
or equivalently,
\begin{equation}\label{range-model}
  hl_t=2h_t\eta_t-hl_{t-1}.
\end{equation}
Separately, denoting by $w_t$ the close-to-close return on day $t$, we have
\begin{equation}\label{return-model}
  w_t=h_t\left(\epsilon_t+a\eta_t\right),
\end{equation}
for some $a\in\left[-1, 1\right]$. The reduced model (\ref{range-model})-(\ref{return-model}) implies
\begin{eqnarray*}
  &&\text{Var}\left(hl_t|\mathcal{F}_{t-1}\right)=4kh_t^2,\\
  &&\text{Var}\left(w_t|\mathcal{F}_{t-1}\right)=\left(1+a^2k\right)h_t^2.
\end{eqnarray*}
The return $w_t$ can be anywhere in the return range $r_t$, so the exact value of $a$ in the above equations is not known. But it should not be important anyway, as a single close-to-close return is not necessarily informative about the volatility. The idea of Int-GARCH is to combine all these factors and calculate the conditional variance of the whole interval $r_t$ as 
\begin{equation*}
  H_t^2=\text{Var}(h_t\epsilon_t)+\text{Var}(h_t\eta_t)=h_t^2(k+1)\propto h_t^2,
\end{equation*}
where $H_t$ is used as the estimated volatility. Finally, we notice that when $k\to 0$, the radius $\delta_t\to 0$, and the model (\ref{igarch_1})-(\ref{igarch_5}) reduces to the usual point-valued GARCH model, except that the linear dependence on the past is specified for the conditional standard deviation.

\section{Distribution of Int-GARCH (1,1,1)}
Similar to the GARCH model, the Int-GARCH(1,1,1) process is the simplest but most often an effective model for analyzing interval-valued time series with conditional heteroskedasticity. In this section, we derive several important distribution properties of Int-GARCH (1,1,1). Before we present our theoretical results, we first notice that for the Int-GARCH (1,1,1) process,
\begin{eqnarray*}
  h_t&=&\mu+\alpha_1|\lambda_{t-1}|+\beta\delta_{t-1}+\gamma_1h_{t-1}\\
  &=&\mu+\alpha_1|\epsilon_{t-1}|h_{t-1}+\beta\eta_{t-1}h_{t-1}+\gamma_1h_{t-1}\\
  &=&\mu+\left(\alpha_1|\epsilon_{t-1}|+\beta\eta_{t-1}+\gamma_1\right)h_{t-1}.
\end{eqnarray*}
Therefore, defining the i.i.d. random variables $x_t=\alpha_1|\epsilon_{t-1}|+\beta\eta_{t-1}+\gamma_1$, $t\in\mathbb{N}$, $h_t$ can be re-written as
\begin{equation}\label{def_x}
  h_t=\mu+x_th_{t-1}.
\end{equation}
We will use (\ref{def_x}) throughout this section to simplify notations.

 \subsection{Weak stationarity}
It is derived in K\"orner (1995) that the covariance between two random intervals is the sum of the covariances between the two centers and two radii. This implies
\begin{equation}\label{covAB}
  \text{Cov}(r_t, r_s)=\text{Cov}(\lambda_t, \lambda_s)+\text{Cov}(\delta_t, \delta_s),\ s,t\in\mathbb{N}.
\end{equation}
We are ready to extend the notion of weak stationarity to interval-valued time series in the obvious way. 
\begin{definition}
An interval-valued time series $\left\{r_t\right\}$ is said to be weakly stationary, or second-moment stationary, if its unconditional mean $\text{E}\left(r_t\right)$ and covariance 
$\text{Cov}\left(r_t, r_{t+s}\right)$ exist and are independent of time $t$ for all integers $s$, where $\text{E}\left(r_t\right)$ and $\text{Cov}\left(r_t, r_{t+s}\right)$ are given in (\ref{meanA}) and (\ref{covAB}), respectively.
\end{definition}
The existences of $\text{E}\left(r_t\right)$ and $\text{Var}\left(r_t\right)$ are closely related to the those of $\text{E}\left(h_t\right)$ and $\text{E}\left(h_t^2\right)$, respectively. In fact, $Eh_t^2<\infty$ implies the existences of the first two moments of $r_t$. We give precise conditions in the following two theorems.
\begin{theorem}\label{thm:mean} 
Consider the Int-GARCH model (\ref{igarch_1})-(\ref{igarch_5}) with $p=q=w=1$. Assume $\left\{r_t\right\}$ starts from its infinite past with a finite mean. Then, $Eh_t<\infty$
if and only if $Ex_t<\infty$, i.e.
$$\alpha_1\sqrt{\frac{2}{\pi}}+\beta_1k+\gamma_1<1.$$
When this condition is satisfied, 
\begin{equation}\label{mean_h}
 Eh_{t}=\frac{\mu}{1-\alpha_{1}\sqrt{2/\pi}-\beta_{1}k-\gamma_{1}},
\end{equation}
and
\begin{equation}\label{mean_e}
 Er_{t}=\left[-kE\left(h_{t}\right), kE\left(h_{t}\right)\right].
\end{equation}
 \end{theorem}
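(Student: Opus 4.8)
The plan is to turn the scalar recursion (\ref{def_x}), $h_t=\mu+x_th_{t-1}$, into an explicit series in the i.i.d.\ nonnegative variables $x_t$. Unfolding it $n$ times yields
\begin{equation*}
h_t=\mu\sum_{j=0}^{n-1}\left(\prod_{i=0}^{j-1}x_{t-i}\right)+\left(\prod_{i=0}^{n-1}x_{t-i}\right)h_{t-n},
\end{equation*}
where the empty product equals $1$. All terms are nonnegative, so the partial sums increase in $n$; since $\{r_t\}$ (equivalently $\{h_t\}$) is assumed to be launched from its infinite past with a finite mean, letting $n\to\infty$ identifies
\begin{equation*}
h_t=\mu\sum_{j=0}^{\infty}\prod_{i=0}^{j-1}x_{t-i}\quad\text{a.s.}
\end{equation*}
The one point here that needs care is that the remainder $\left(\prod_{i=0}^{n-1}x_{t-i}\right)h_{t-n}$ converges to $0$; this is where the infinite-past-with-finite-mean hypothesis is used, since that term is nonnegative and monotone in $n$ with expectation $\left(Ex_t\right)^nEh_{t-n}$, which vanishes once $Ex_t<1$.

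For the ``if'' direction and the formula, I would apply the monotone convergence theorem and the independence of the $x_t$ to obtain $Eh_t=\mu\sum_{j=0}^{\infty}\left(Ex_t\right)^j$, and then compute, using $\epsilon_t\sim N(0,1)$ and $\eta_t\sim\Gamma(k,1)$,
\begin{equation*}
Ex_t=\alpha_1E|\epsilon_{t-1}|+\beta_1E\eta_{t-1}+\gamma_1=\alpha_1\sqrt{2/\pi}+\beta_1k+\gamma_1.
\end{equation*}
The geometric series then converges precisely when this quantity is $<1$, and its sum is $\mu/\left(1-\alpha_1\sqrt{2/\pi}-\beta_1k-\gamma_1\right)$, giving (\ref{mean_h}). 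For the ``only if'' direction the full series is not even needed: dropping the nonnegative remainder in the finite unfolding gives $h_t\ge\mu\sum_{j=0}^{n-1}\prod_{i=0}^{j-1}x_{t-i}$ for every $n$, hence $Eh_t\ge\mu\sum_{j=0}^{n-1}\left(Ex_t\right)^j$; if $Ex_t\ge1$ the right-hand side diverges as $n\to\infty$ (recall $\mu>0$), so $Eh_t=\infty$, which proves the equivalence by contraposition.

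It remains to establish (\ref{mean_e}). Since $h_t$ is a function of $\{(\epsilon_s,\eta_s):s\le t-1\}$ by (\ref{igarch_5}), it is independent of $(\epsilon_t,\eta_t)$; writing $\lambda_t=h_t\epsilon_t$ and $\delta_t=h_t\eta_t$, independence gives $E\lambda_t=Eh_t\cdot E\epsilon_t=0$ and $E\delta_t=Eh_t\cdot E\eta_t=kEh_t$, both finite under the stated condition. Substituting into (\ref{meanA}) then yields $Er_t=\left[-kEh_t,kEh_t\right]$. I expect the only genuine obstacle to be the justification that the remainder term in the unfolding vanishes almost surely, which rests essentially on the infinite-past-with-finite-mean hypothesis; the moment identities $E|\epsilon_t|=\sqrt{2/\pi}$ and $E\eta_t=k$ and the conditioning step for $Er_t$ are routine.
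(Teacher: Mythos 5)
Your proposal is correct and follows essentially the same route as the paper's own proof: unfold the recursion $h_t=\mu+x_th_{t-1}$ into a partial geometric sum plus a product remainder, take expectations using the i.i.d.\ structure of the $x_t$ and the finite-mean-from-infinite-past hypothesis to kill the remainder, sum the geometric series, and compute $Er_t$ from the independence of $h_t$ and $(\epsilon_t,\eta_t)$. Your treatment is in fact slightly more careful than the paper's on the almost-sure vanishing of the remainder term and on the lower-bound argument for the ``only if'' direction, but the underlying argument is identical.
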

 
\begin{theorem}\label{thm:var}
 Consider the Int-GARCH(1,1,1) model $\left\{r_t\right\}$ as in Theorem \ref{thm:mean}. Assume $\left\{r_t\right\}$ starts from its infinite past with a finite variance. Then  
 $E\left(h_{t}^{2}\right)<\infty$ if and only if $E\left(x_{t}^{2}\right)<E\left(x_{t}\right)<1$, i.e.
 \begin{eqnarray*}
 && (i)\ \alpha_1\sqrt{\frac{2}{\pi}}+\beta_1k+\gamma_1<1;\\
 && (ii)\ \frac{\alpha_{1}^{2}+\beta_{1}^{2}\left(k+k^{2}\right)+\gamma_{1}^{2}+2\alpha_{1}\beta_{1}\sqrt{\dfrac{2}{\pi}}k+2\alpha_{1}\gamma_{1}\sqrt{\dfrac{2}{\pi}}+2\beta_{1}\gamma_{1}k}
 {\alpha_1\sqrt{\frac{2}{\pi}}+\beta_1k+\gamma_1}<1.
 \end{eqnarray*}
 When these conditions are satisfied,
 \begin{eqnarray}\label{mean_h2}
   E\left(h_{t}^{2}\right)=\mu^{2}\dfrac{C_{1}+1}{\left(C_{2}-1\right)\left(C_{1}-1\right)},
 \end{eqnarray}
 and
 \begin{eqnarray}\label{var_e}
   Var\left(r_{t}\right)=\left(1+k+k^{2}\right)E\left(h_{t}^{2}\right)-k^{2}\left[E\left(h_{t}\right)\right]^{2},
 \end{eqnarray}
 where $E\left(h_t\right)$ is given in (\ref{mean_h}), and $C_{1}=E\left(x_{t}\right)$, $C_{2}=E\left(x_{t}^{2}\right)$.
\end{theorem}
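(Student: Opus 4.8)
The plan is to push everything through the scalar linear recursion (\ref{def_x}), $h_t=\mu+x_th_{t-1}$, using that the $x_t$ are i.i.d., that $x_t=\alpha_1|\epsilon_{t-1}|+\beta_1\eta_{t-1}+\gamma_1$ is independent of $h_{t-1}$ and of every $x_s$ with $s<t$ (because $h_{t-1}$ depends only on $\{\epsilon_s,\eta_s:s\le t-2\}$), and that $\mu>0$ and $x_t\ge\gamma_1\ge0$ give $h_t>0$. Since $\{r_t\}$ runs from the infinite past, $m_1:=E(h_t)$ and $m_2:=E(h_t^2)$ are time-invariant, and by Theorem \ref{thm:mean}, once $C_1:=E(x_t)<1$ one has $m_1=\mu/(1-C_1)$. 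Set $C_2:=E(x_t^2)$.

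For necessity, suppose $m_2<\infty$. As $h_t>0$, Cauchy--Schwarz gives $m_1\le m_2^{1/2}<\infty$, so Theorem \ref{thm:mean} forces $C_1<1$, which is (i). Squaring (\ref{def_x}) gives $h_t^2=\mu^2+2\mu x_t h_{t-1}+x_t^2 h_{t-1}^2$; taking expectations (all terms nonnegative, $x_t$ independent of $h_{t-1}$) yields the fixed-point identity
\begin{equation*}
m_2=\mu^2+2\mu C_1 m_1+C_2 m_2 .
\end{equation*}
Its left-hand side being finite and strictly positive forces $(1-C_2)m_2=\mu^2+2\mu C_1m_1>0$, hence $C_2<1$; this is the moment restriction recorded in (ii). Substituting $m_1=\mu/(1-C_1)$ and solving gives $m_2=\mu^2(1+C_1)/[(1-C_1)(1-C_2)]$, i.e.\ (\ref{mean_h2}).

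For sufficiency, assume (i)--(ii), so $C_1<1$ and $C_2<1$. Iterating (\ref{def_x}) gives $h_t=\mu\sum_{j=0}^{n}\pi_j+\pi_{n+1}h_{t-n-1}$ with $\pi_0=1$ and $\pi_j=x_tx_{t-1}\cdots x_{t-j+1}$; the remainder has mean $E(\pi_{n+1}h_{t-n-1})=C_1^{n+1}m_1\to 0$, and since the partial sums $\mu\sum_{j\le n}\pi_j$ are nondecreasing and dominated by $h_t$, one gets $h_t=\mu\sum_{j\ge 0}\pi_j$ almost surely. Nonnegativity and monotone convergence give $m_2=\mu^2\sum_{j,l\ge 0}E(\pi_j\pi_l)$, and independence yields $E(\pi_j\pi_l)=C_2^{\min(j,l)}C_1^{|j-l|}$; separating diagonal from off-diagonal terms,
\begin{equation*}
\sum_{j,l\ge 0}E(\pi_j\pi_l)=\sum_{j\ge 0}C_2^{j}\Bigl(1+2\sum_{m\ge 1}C_1^{m}\Bigr)=\frac{1}{1-C_2}\cdot\frac{1+C_1}{1-C_1},
\end{equation*}
both geometric series converging because of (i)--(ii); this yields $m_2<\infty$ and formula (\ref{mean_h2}). (A tidier variant avoids the double sum: truncate to the process started $N$ steps back from a fixed value, note its second moment obeys the fixed-point recursion above with a one-step lag in $N$, and let $N\to\infty$ by monotone convergence.) The variance formula (\ref{var_e}) then follows from (\ref{varA}): with $\lambda_t=h_t\epsilon_t$, $\delta_t=h_t\eta_t$, and $h_t$ independent of $(\epsilon_t,\eta_t)$, using $E\epsilon_t=0$, $E\epsilon_t^2=1$, $E\eta_t=k$, $E\eta_t^2=k+k^2$ gives $\text{Var}(\lambda_t)=m_2$ and $\text{Var}(\delta_t)=(k+k^2)m_2-k^2m_1^2$, whose sum is $(1+k+k^2)m_2-k^2m_1^2$.

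The main obstacle is the ``if'' direction: the fixed-point identity only determines $m_2$ once it is known to be finite, so establishing $m_2<\infty$ genuinely requires either the $L^2$-summability computation for the telescoping series --- with the attendant care about interchanging expectation and summation and about evaluating $\sum_{j,l}C_2^{\min(j,l)}C_1^{|j-l|}$ correctly --- or the truncation/monotone-limit argument, together with the bookkeeping that validates the infinite-series representation of $h_t$ almost surely. Once $m_2<\infty$ is in hand, solving the fixed-point equation and assembling the variance are routine.
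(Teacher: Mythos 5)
Your proposal takes a genuinely different route from the paper's. The paper squares the full $N$-step expansion $h_t=\mu\bigl(1+\sum_{i=1}^N\prod_{j=1}^i x_{t-j}\bigr)+h_{t-(N+1)}\prod_{j=1}^{N+1}x_{t-j}$, takes expectations term by term, and reads the existence conditions off the resulting geometric and cross-term series. You instead split the work between a one-step fixed-point identity $m_2=\mu^2+2\mu C_1m_1+C_2m_2$ (for necessity and for solving for $m_2$) and the a.s.\ series representation $h_t=\mu\sum_{j\ge0}\pi_j$ with $E(\pi_j\pi_l)=C_2^{\min(j,l)}C_1^{|j-l|}$ (for sufficiency). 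Your computations are correct: the double sum evaluates to $(1+C_1)/[(1-C_1)(1-C_2)]$, which reproduces (\ref{mean_h2}), and the variance assembly via $\mathrm{Var}(\lambda_t)+\mathrm{Var}(\delta_t)$ matches the paper.

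There is, however, one concrete mismatch in the necessity direction. Your fixed-point argument yields $C_2<1$, and you then assert that this ``is the moment restriction recorded in (ii).'' It is not: condition (ii) reads $C_2/C_1<1$, i.e.\ $E(x_t^2)<E(x_t)$, which is strictly stronger than $C_2<1$ when $0<C_1<1$. So, as written, you have not proved the ``only if'' half of the stated equivalence. Carried through honestly, your argument shows that $E(h_t^2)<\infty$ holds if and only if $C_1<1$ and $C_2<1$ (and since $C_2\ge C_1^2$ by Jensen, this reduces to $C_2<1$); parameters with $C_1<C_2<1$ are attainable (e.g.\ $\gamma_1=0$, $\beta_1$ small, $\alpha_1\sqrt{2/\pi}\approx 0.7$, giving $C_1=0.7$, $C_2\approx 0.77$), and there the second moment is finite even though (ii) fails. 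The discrepancy traces to the paper's own proof, which asserts that $\frac{C_1}{C_1-1}\bigl[C_1^N\sum_{i=1}^{N-1}(C_2/C_1)^i-\sum_{i=1}^{N-1}C_2^i\bigr]$ has a finite limit ``if and only if $|C_2|<|C_1|$''; in fact when $C_1<C_2<1$ the first bracketed term behaves like a constant times $C_2^N\to0$, so the limit exists and yields the same closed form. You should either state explicitly that you are proving the corrected equivalence $E(h_t^2)<\infty\iff C_2<1$, or acknowledge that the theorem's condition (ii) is sufficient but not necessary; identifying $C_2<1$ with (ii) papers over a real discrepancy between your (correct) analysis and the statement being proved.
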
 
So far we have found equivalent conditions for the existence of a time-invariant (unconditional) mean and variance. In order to guarantee weak stationarity, by definition, we still need to find conditions under which the (unconditional) covariances are finite and time-invariant. According to the following Theorem \ref{thm:cov}, these conditions turn out to be the same as those for the existence of variance. This is not surprising as 
\begin{eqnarray*}
  |\text{Cov}\left(r_t, r_{t+h}\right)| \leq
  |\text{Cov}\left(\lambda_t, \lambda_{t+h}\right)|+|\text{Cov}\left(\delta_t, \delta_{t+h}\right)|
  \leq \text{Var}\left(\lambda_t\right)+\text{Var}\left(\delta_t\right),
\end{eqnarray*}
and, for a time series model with specified recursive structure, the existence of any unconditional moment usually implies time-invariance too. We summarize this conclusion in Corollary \ref{cor:stat} following Theorem \ref{thm:cov}. 

 \begin{theorem}\label{thm:cov} 
Consider the Int-GARCH(1,1,1) process $\left\{r_t\right\}$. Under the assumptions of Theorem \ref{thm:var}, the covariance of any two random intervals $r_t$ and $r_{t+s}$ is given by 
\begin{align*}
\mbox{Cov}\left(r_{t},r_{t+s}\right) & =\begin{cases}
\left(1+k+k^{2}\right)E\left(h_{t}^{2}\right)-k^{2}\left[E\left(h_{t}\right)\right]^{2}, & s=0;\\
kE\left(h_{t}h_{t+s}\eta_{t}\right)-k^{2}\left[E\left(h_{t}\right)\right]^{2}, & |s|>0,
\end{cases}
\end{align*}
where $E\left(h_t\right)$ and $E\left(h_t^2\right)$ are given in (\ref{mean_h}) and (\ref{mean_h2}), respectively, and $E\left(h_{t}h_{t+s}\eta_{t}\right)$ is calculated explicitly in Lemma \ref{eta-X and h-h-eta} (see Appendix).
\end{theorem}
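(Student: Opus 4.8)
\emph{Proof strategy.} The plan is to push the computation down to the center and radius processes and then use that $\epsilon_t$ and $\eta_t-k$ are martingale differences for the filtration generated by the errors. First I would record the structural facts. Since $h_t>0$, equations (\ref{igarch_1})--(\ref{igarch_2}) show that $r_t$ has center $\lambda_t=h_t\epsilon_t$ and radius $\delta_t=h_t\eta_t$; iterating (\ref{def_x}) into the infinite past, $h_t$ is a fixed measurable functional of $\{(\epsilon_s,\eta_s):s\le t-1\}$, hence $\mathcal{F}_{t-1}^{\epsilon,\eta}$-measurable and (equivalently) $\mathcal{F}_{t-1}$-measurable, while $\eta_t=\delta_t/h_t$ is $\mathcal{F}_t$-measurable. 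By Theorem \ref{thm:mean}, $\text{E}(\lambda_t)=0$ and $\text{E}(\delta_t)=k\,\text{E}(h_t)$; and since the hypotheses of Theorem \ref{thm:var} give finite $\text{E}(h_t^2)$ (hence finite $\text{E}(h_t)$), Cauchy--Schwarz together with the independence of the errors from the past makes every second moment below finite, e.g.\ $\text{E}(h_th_{t+s}\eta_t)\le\sqrt{\text{E}(\eta_t^2)}\,\sqrt{\text{E}(h_t^2)\,\text{E}(h_{t+s}^2)}<\infty$.

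Next I would invoke the K\"orner decomposition (\ref{covAB}), $\text{Cov}(r_t,r_{t+s})=\text{Cov}(\lambda_t,\lambda_{t+s})+\text{Cov}(\delta_t,\delta_{t+s})$, and split into cases. For $s=0$ the right-hand side equals $\text{Var}(\lambda_t)+\text{Var}(\delta_t)=\text{Var}(r_t)$, which is exactly (\ref{var_e}) of Theorem \ref{thm:var}; this is the first branch. For $s\ne0$, by the symmetry $\text{Cov}(r_t,r_{t+s})=\text{Cov}(r_{t+s},r_t)$ we may assume $s>0$. Then $t\le t+s-1$, so $h_t\epsilon_t$, $h_t\eta_t$ and $h_{t+s}$ are all $\mathcal{F}_{t+s-1}^{\epsilon,\eta}$-measurable, and conditioning on $\mathcal{F}_{t+s-1}^{\epsilon,\eta}$ while applying (\ref{igarch_3*}) and (\ref{igarch_4*}) yields
\begin{align*}
\text{E}(\lambda_t\lambda_{t+s})&=\text{E}\left(h_t\epsilon_t\,h_{t+s}\,\text{E}\left(\epsilon_{t+s}\mid\mathcal{F}_{t+s-1}^{\epsilon,\eta}\right)\right)=0,\\
\text{E}(\delta_t\delta_{t+s})&=\text{E}\left(h_t\eta_t\,h_{t+s}\,\text{E}\left(\eta_{t+s}\mid\mathcal{F}_{t+s-1}^{\epsilon,\eta}\right)\right)=k\,\text{E}(h_th_{t+s}\eta_t).
\end{align*}
Hence $\text{Cov}(\lambda_t,\lambda_{t+s})=0$ and $\text{Cov}(\delta_t,\delta_{t+s})=k\,\text{E}(h_th_{t+s}\eta_t)-k^2[\text{E}(h_t)]^2$, and adding the two contributions gives the second branch. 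Time-invariance is immediate: $\{(\epsilon_t,\eta_t)\}$ is i.i.d.\ and $h_t$ is the \emph{same} functional of the infinite error past for every $t$, so the joint law of $(h_t,\eta_t,h_{t+s})$ depends only on $s$ and $\text{E}(h_t)$, $\text{E}(h_t^2)$, $\text{E}(h_th_{t+s}\eta_t)$ do not depend on $t$ (this is Corollary \ref{cor:stat}); one then substitutes the closed forms (\ref{mean_h}) and (\ref{mean_h2}).

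Everything above is routine bookkeeping with conditional expectations; the real work is the deferred Lemma \ref{eta-X and h-h-eta}, which makes $\text{E}(h_th_{t+s}\eta_t)$ explicit. There I would iterate $h_m=\mu+x_mh_{m-1}$ down to $m=t+1$, writing $h_{t+s}=\mu\bigl(1+\sum_{j=1}^{s-1}x_{t+s}\cdots x_{t+s-j+1}\bigr)+x_{t+s}\cdots x_{t+1}\,h_t$, and observe that the only factor correlated with $\eta_t$ or with $h_t$ is $x_{t+1}=\alpha_1|\epsilon_t|+\beta_1\eta_t+\gamma_1$, all the other $x_l$ being i.i.d.\ copies independent of $(\eta_t,h_t)$ while $h_t$ is independent of $\eta_t$. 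Taking expectations then collapses $\text{E}(h_th_{t+s}\eta_t)$ into a geometric sum in $C_1=\text{E}(x_t)$ plus one correction term built from the mixed moments $\text{E}(x_{t+1}\eta_t)$ and $\text{E}(h_t\eta_t)=k\,\text{E}(h_t)$, with $\text{E}(x_{t+1}\eta_t)$ evaluated from $\text{E}|\epsilon_t|=\sqrt{2/\pi}$ and $\text{E}(\eta_t^2)=k+k^2$. The main obstacle is keeping this independence bookkeeping straight and verifying convergence of the geometric series, which holds because $C_1=\text{E}(x_t)<1$ under condition (i) of Theorem \ref{thm:var}.
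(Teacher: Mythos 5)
Your proposal is correct and follows essentially the same route as the paper: both decompose $\mbox{Cov}(r_t,r_{t+s})$ via (\ref{covAB}) into the center and radius covariances, kill the cross terms using the independence (equivalently, martingale-difference) structure of $\epsilon_{t+s}$ and $\eta_{t+s}$ given the past, and defer the explicit evaluation of $E(h_th_{t+s}\eta_t)$ to the recursive expansion of $h_{t+s}$ in Lemma \ref{eta-X and h-h-eta}. The only differences are cosmetic (conditioning on $\mathcal{F}_{t+s-1}^{\epsilon,\eta}$ rather than factoring by independence, and an explicit reduction to $s>0$ by symmetry, which the paper leaves implicit).
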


\begin{corollary}\label{cor:stat} 
The Int-GARCH(1,1,1) process is weakly stationary, or second-moment stationary, if and only if $E\left(x_{t}^{2}\right)<E\left(x_{t}\right)<1$.
\end{corollary}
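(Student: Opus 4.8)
The plan is to read the corollary off Theorems~\ref{thm:mean}--\ref{thm:cov}, treating the two directions of the equivalence in turn. For \emph{necessity}, suppose $\{r_t\}$ is weakly stationary; then by definition $E(r_t)$ and $\mathrm{Cov}(r_t,r_t)=\mathrm{Var}(r_t)$ must be finite. Writing $r_t=h_t\cdot[\epsilon_t-\eta_t,\epsilon_t+\eta_t]$ with $h_t$ independent of $(\epsilon_t,\eta_t)$, we have $\sup_{a\in r_t}|a|=h_t(|\epsilon_t|+\eta_t)$ and $\delta_t=h_t\eta_t$, so the existence of $E(r_t)$ is equivalent to $Eh_t<\infty$ (since $E(|\epsilon_t|+\eta_t)\in(0,\infty)$) and, given that, finiteness of $\mathrm{Var}(r_t)=\mathrm{Var}(\lambda_t)+\mathrm{Var}(\delta_t)$ is equivalent to $E(h_t^2)<\infty$ (since $E\eta_t^2\in(0,\infty)$). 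The ``only if'' halves of Theorems~\ref{thm:mean} and \ref{thm:var} then yield $Ex_t<1$ and conditions (i) and (ii) of Theorem~\ref{thm:var}, and the latter are precisely $Ex_t<1$ together with (using $Ex_t>0$) $E(x_t^2)<Ex_t$, i.e.\ $E(x_t^2)<E(x_t)<1$. The same conclusion can be read directly from the closed forms (\ref{mean_e}) and (\ref{var_e}), which are finite exactly under these conditions.

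For \emph{sufficiency}, assume $E(x_t^2)<E(x_t)<1$. Theorems~\ref{thm:mean} and \ref{thm:var} then give that $E(r_t)$ and $\mathrm{Var}(r_t)$ exist with the values in (\ref{mean_e}) and (\ref{var_e}), which do not involve $t$, and Theorem~\ref{thm:cov} gives that $\mathrm{Cov}(r_t,r_{t+s})$ exists for every $s$; so it only remains to check that $\mathrm{Cov}(r_t,r_{t+s})$ does not depend on $t$. The $s=0$ case is the variance, already done, and for $|s|>0$ Theorem~\ref{thm:cov} reduces this to showing that $E(h_th_{t+s}\eta_t)$ depends only on $s$. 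Here I would use the recursion (\ref{def_x}), $h_t=\mu+x_th_{t-1}$: since $\{x_t\}$ is i.i.d.\ and positive with $Ex_t<1$, the nonnegative series $h_t=\mu\sum_{j\ge0}x_tx_{t-1}\cdots x_{t-j+1}$ (empty product $=1$) has expectation $\mu/(1-Ex_t)<\infty$, hence converges almost surely and in $L^1$ by monotone convergence, and it is the ``infinite-past'' solution of (\ref{def_x}). This writes $h_t=G\bigl((\epsilon_{t-1},\eta_{t-1}),(\epsilon_{t-2},\eta_{t-2}),\dots\bigr)$ for one fixed measurable map $G$; as $\{(\epsilon_u,\eta_u)\}$ is i.i.d., the sequence $\{(h_t,\epsilon_t,\eta_t)\}_t$ is therefore strictly stationary. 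In particular, for $s>0$ the triple $(h_t,h_{t+s},\eta_t)$ is a fixed function of $\bigl(h_t,(\epsilon_t,\eta_t),\dots,(\epsilon_{t+s-1},\eta_{t+s-1})\bigr)$, whose joint law is the product of the ($t$-free) law of $h_t$ and the law of $s$ i.i.d.\ pairs, hence $t$-free, while $s<0$ follows by the symmetry of covariance. Thus $E(h_th_{t+s}\eta_t)=E(h_0h_s\eta_0)$ for all $t$, and weak stationarity follows.

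The moment bookkeeping in the necessity direction and the identification of Theorem~\ref{thm:var}'s inequalities with ``$E(x_t^2)<E(x_t)<1$'' are routine. The one genuinely substantive step is the time-invariance of the covariance in the sufficiency direction — that the ``infinite-past'' solution of the stochastic recursion (\ref{def_x}) is strictly stationary — where the almost-sure convergence of the geometric-type series (guaranteed by $Ex_t<1$, which also forces $E\log x_t<0$ by Jensen) together with the i.i.d.\ structure of $\{x_t\}$ does the work. If one prefers not to invoke stationarity of the recursion, an equivalent route is to iterate (\ref{def_x}) forward, write $h_{t+s}=\mu\bigl(1+x_{t+s}+\cdots+x_{t+s}\cdots x_{t+1}\bigr)+x_{t+s}\cdots x_{t+1}h_t$, and expand $E(h_th_{t+s}\eta_t)$ using the independence of the blocks $\{x_u:u\le t\}$ and $\{x_u:t<u\le t+s\}$ and of $\eta_t$ from $\{x_u:u\le t\}$; every resulting term is then a product of moments of i.i.d.\ variables and the ($t$-free) moments $Eh_t,E(h_t^2)$ from Theorems~\ref{thm:mean}--\ref{thm:var} — essentially the computation of Lemma~\ref{eta-X and h-h-eta}.
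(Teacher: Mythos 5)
Your proposal is correct and takes essentially the same route as the paper, whose own proof is the one-liner ``immediate from Theorems \ref{thm:mean}, \ref{thm:var}, and \ref{thm:cov}'': you read necessity off the ``only if'' halves of Theorems \ref{thm:mean} and \ref{thm:var} and sufficiency off their closed-form, $t$-free moment formulas together with the covariance formula of Theorem \ref{thm:cov}. The only point where you supply more than the paper does is the time-invariance of $E\left(h_{t}h_{t+s}\eta_{t}\right)$, which the paper settles implicitly via the explicit $t$-free expression of Lemma \ref{eta-X and h-h-eta} --- exactly the alternative computation you sketch at the end --- while your primary argument via strict stationarity of the infinite-past solution of (\ref{def_x}) is a valid (and cleaner) substitute.
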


 \subsection{Auto-correlation function (ACF)}
 The notion of the variance and covariance for compact convex random sets were naturally extended to the correlation coefficient of two random sets $A$ and $B$, which is defined as
 \begin{equation}\label{corr}
   \text{Corr}\left(A,B\right)=\frac{\text{Cov}\left(A,B\right)}{\sqrt{\text{Var}\left(A\right)\text{Var}\left(B\right)}}.
 \end{equation} 
Based on this definition, we calculate the auto-correlation function (ACF) of the Int-GARCH(1,1,1) process and give the result in the corollary below. 
\begin{corollary}\label{cor:acf} 
Under the assumptions of Theorem \ref{thm:var}, the auto-correlation function of the Int-GARCH(1,1,1) process
$\left\{ r_{t}\right\} $ is 
\[
\rho(s)=\begin{cases}
1, & s=0\\
\dfrac{kE\left(h_{t}h_{t+s}\eta_{t}\right)-k^{2}\left[E\left(h_{t}\right)\right]^{2}}{\left(1+k+k^{2}\right)E\left(h_{t}^{2}\right)-k^{2}\left[E\left(h_{t}\right)\right]^{2}}, & |s|>0.
\end{cases}
\]
 \end{corollary}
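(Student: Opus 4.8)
The plan is to derive Corollary \ref{cor:acf} as an immediate consequence of Theorem \ref{thm:var} and Theorem \ref{thm:cov}, since the ACF is by construction just a ratio of the already-computed covariance and variance. By the definition of the correlation coefficient for random sets in (\ref{corr}), under weak stationarity we have $\rho(s)=\text{Cov}(r_t,r_{t+s})/\sqrt{\text{Var}(r_t)\text{Var}(r_{t+s})}$, and since the process is weakly stationary (Corollary \ref{cor:stat}), $\text{Var}(r_t)=\text{Var}(r_{t+s})=\text{Var}(r_t)$ is time-invariant, so the denominator collapses to $\text{Var}(r_t)$.

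First I would record the $s=0$ case trivially: $\rho(0)=\text{Var}(r_t)/\text{Var}(r_t)=1$. Then for $|s|>0$, I would substitute the off-diagonal covariance formula from Theorem \ref{thm:cov}, namely $\text{Cov}(r_t,r_{t+s})=kE(h_th_{t+s}\eta_t)-k^2[E(h_t)]^2$, into the numerator, and substitute the variance formula $\text{Var}(r_t)=(1+k+k^2)E(h_t^2)-k^2[E(h_t)]^2$ from (\ref{var_e}) into the denominator. This yields exactly the displayed expression for $\rho(s)$. One small point to verify is that the denominator $\text{Var}(r_t)$ is strictly positive so that the ratio is well-defined; this holds because under the stationarity conditions $E(h_t^2)>0$ (as $\mu>0$ forces $h_t\geq\mu>0$) and $(1+k+k^2)E(h_t^2)-k^2[E(h_t)]^2\geq (1+k+k^2)[E(h_t)]^2-k^2[E(h_t)]^2=(1+k)[E(h_t)]^2>0$ by Jensen's inequality.

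There is essentially no obstacle here: the corollary is a restatement of Theorem \ref{thm:cov} divided through by the variance, and all the analytic content (existence of moments, the explicit value of $E(h_th_{t+s}\eta_t)$ via Lemma \ref{eta-X and h-h-eta}, time-invariance) has already been established in the preceding results. The only thing to be careful about is invoking the right hypotheses --- the assumptions of Theorem \ref{thm:var}, equivalently $E(x_t^2)<E(x_t)<1$ --- which guarantee both the finiteness and the time-invariance needed for the ratio to make sense as a genuine autocorrelation.
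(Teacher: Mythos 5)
Your proposal is correct and matches the paper's own (very short) proof: both simply divide the covariance expression from Theorem \ref{thm:cov} by the $s=0$ variance, using the time-invariance guaranteed under the hypotheses of Theorem \ref{thm:var}. Your additional check that the denominator is strictly positive (via $h_t\geq\mu>0$ and Jensen's inequality) is a small refinement the paper omits, but it does not change the argument.
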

 
We plot the ACF for a specific Int-GARCH(1,1,1) model (Model I in the simulation) in Figure 1. We see that the centers are uncorrelated. This has been verified by (\ref{eqn2}) in the proof of Theorem \ref{thm:cov}. (We will elaborate more on this in the subsequent section.) The radii, or the lengths of the intervals, have a relatively persistent auto-correlation, which coincides with the phenomenon of ``volatility clustering''. This long-term dependence of radii carries over to the intervals as a whole, and results in a slow-dying ACF of the interval-valued process. \\

\begin{figure}[ht]
\centering
\includegraphics[ height=1.800in, width=2.300in]{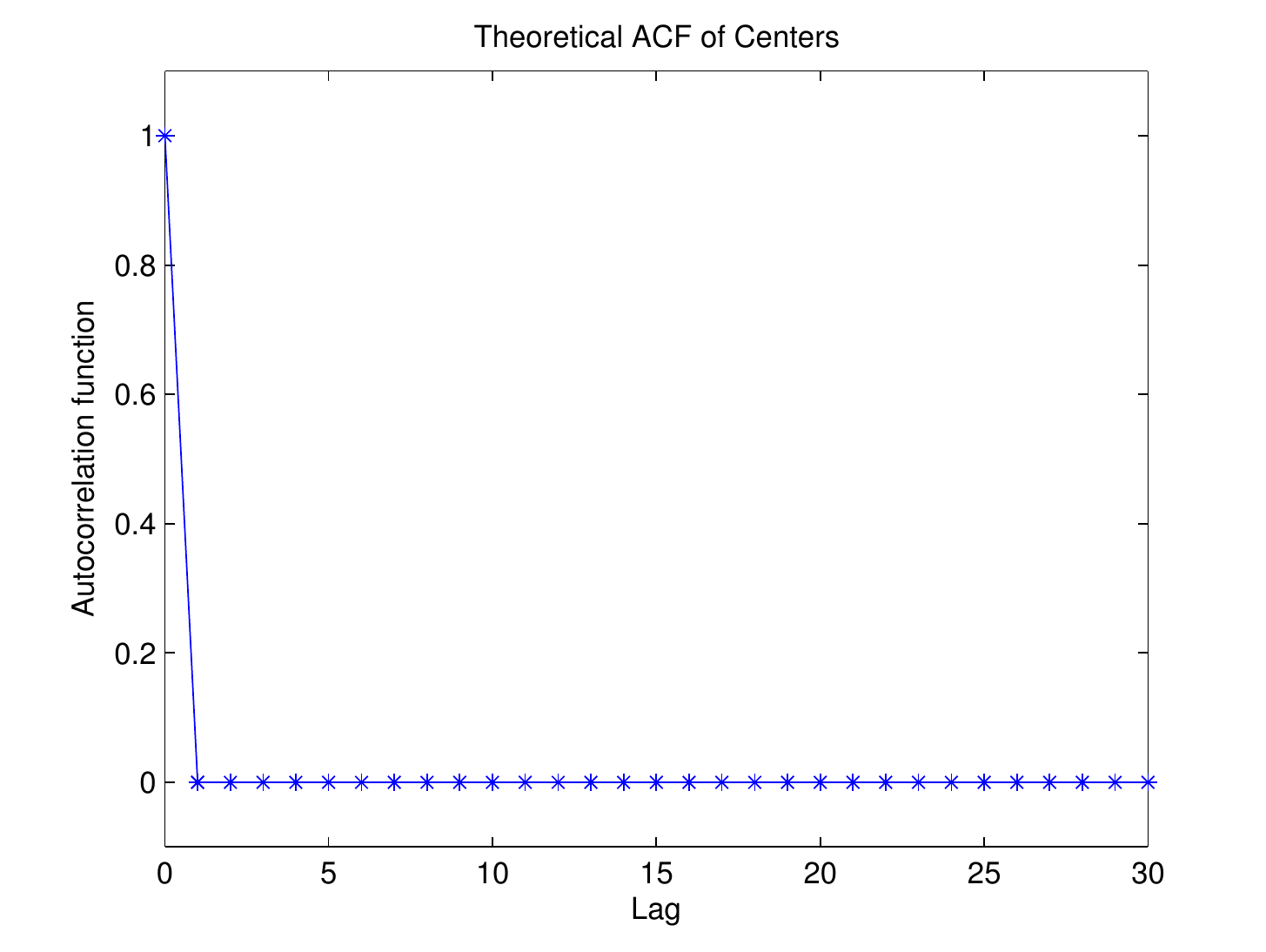}
\includegraphics[ height=1.800in, width=2.300in]{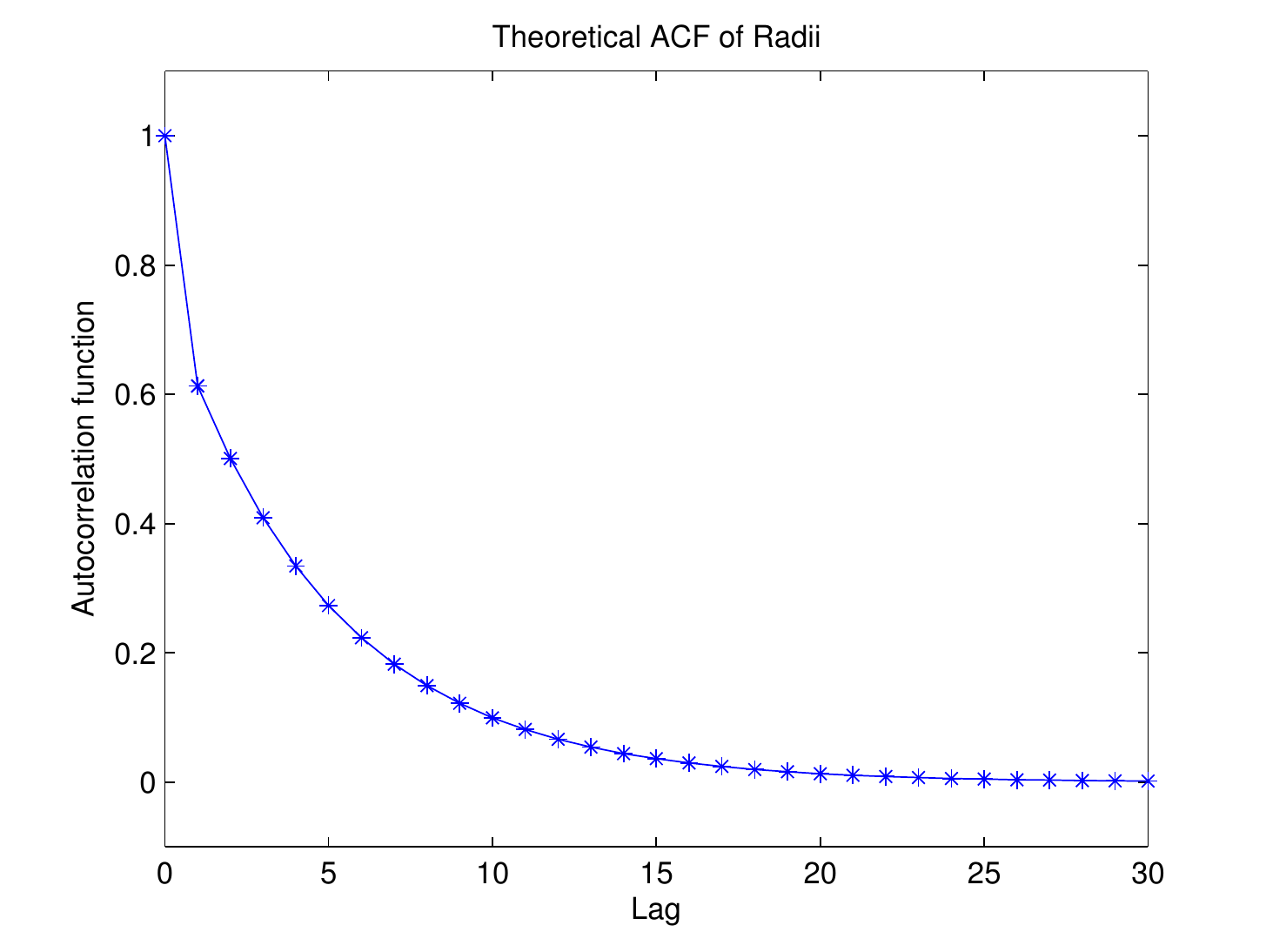}\\
\includegraphics[ height=1.800in, width=2.300in]{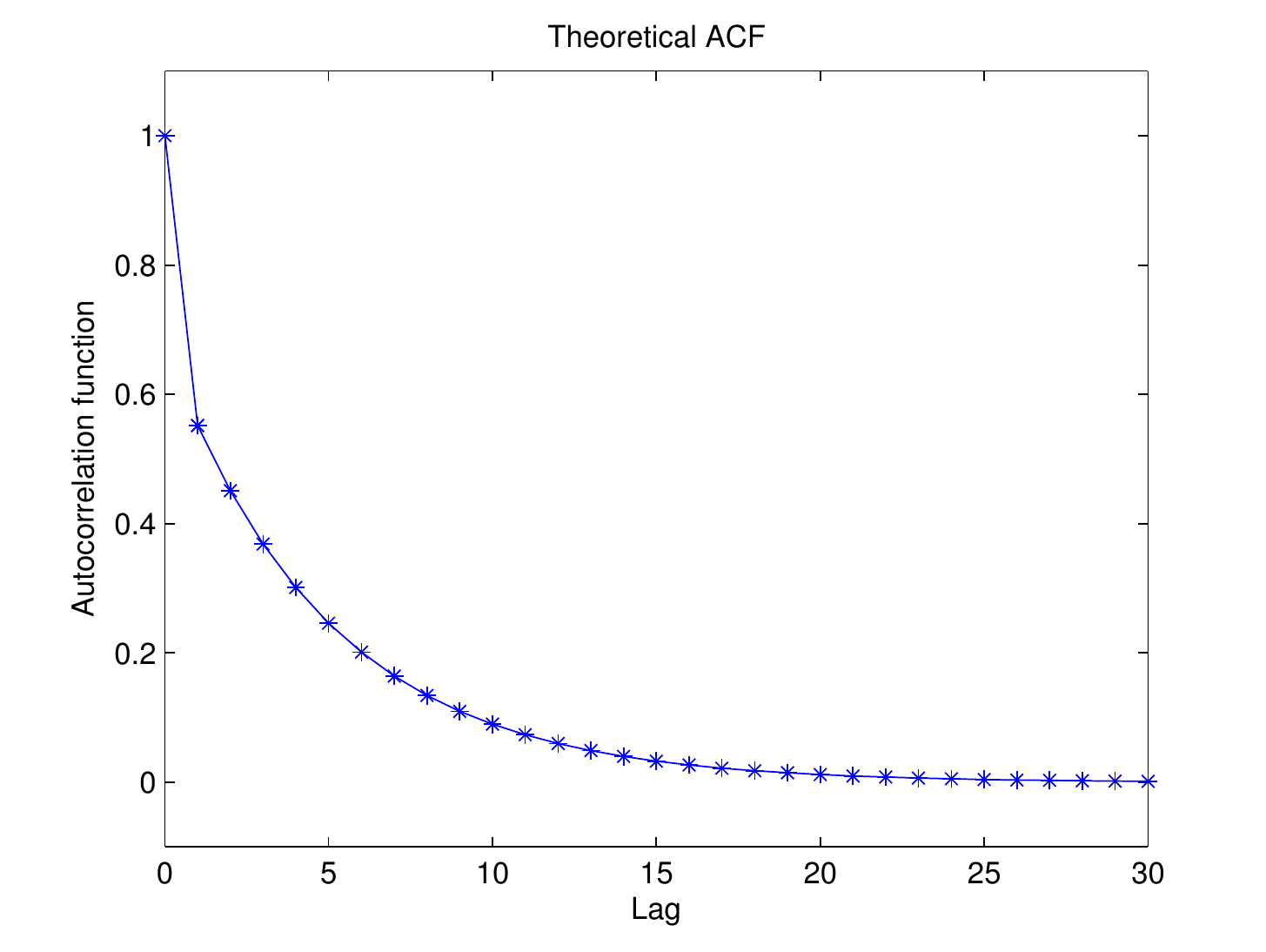}
\caption{Theoretical auto-correlation function of Model I.}
\label{fig:realACF_1}
\end{figure}

\section{The general Int-GARCH (p,q,w)}
\subsection{Mean stationarity}
We provide the necessary and sufficient conditions of mean stationarity for the general Int-GARCH (p,q,w) model in the following theorem.
\begin{theorem}\label{thm:mean-gen} 
Consider the general Int-GARCH model (\ref{igarch_1})-(\ref{igarch_5}). Define
\begin{equation}
  x_{i,t}=\alpha_i|\epsilon_t|I_{\left\{1\leq i\leq p\right\}}+\beta_i|\eta_t|I_{\left\{1\leq i\leq q\right\}}+\gamma_iI_{\left\{1\leq i\leq w\right\}},\label{def:X_it}
\end{equation}
and
\begin{equation}
  E\left(X_{i,t}\right)=\mu_i,\label{def:mu_i}
\end{equation}
where $i=1,2,\cdots,k=\max\left\{p,q,w\right\}$. Assume $\left\{r_t\right\}$ starts from its infinite past with a finite mean. Then, $Eh_t<\infty$
if and only if $\sum_{i=1}^{k}\mu_i<1$. When this condition is satisfied, 
\begin{equation}\label{mean_h}
 Eh_{t}=\frac{\mu}{1-\sum_{i=1}^{k}\mu_i},
\end{equation}
and
\begin{equation}\label{mean_e}
 Er_t=\left[-kE\left(h_{t}\right), kE\left(h_{t}\right)\right].
\end{equation}
 \end{theorem}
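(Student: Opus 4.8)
The plan is to follow the blueprint of Theorem~\ref{thm:mean}: collapse the $(p,q,w)$ recursion into a single random-coefficient relation with lags $1,\dots,k$ where $k=\max\{p,q,w\}$ (not to be confused with the $\Gamma$-parameter $k$ of (\ref{igarch_4}); context disambiguates), unfold it, and take expectations, exploiting that the lag-$i$ noise is independent of the corresponding past volatility. Substituting $\lambda_{t-i}=\epsilon_{t-i}h_{t-i}$ and $\delta_{t-i}=\eta_{t-i}h_{t-i}$ into (\ref{igarch_5}) and grouping by lag gives
\[
h_t=\mu+\sum_{i=1}^{k}x_{i,\,t-i}\,h_{t-i},
\]
with $x_{i,s}$ as in (\ref{def:X_it}). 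Two bookkeeping facts: since $i\le k$ forces $i\le p$, $i\le q$, or $i\le w$, at least one coefficient survives in every $x_{i,\cdot}$, so each $\mu_i=E(x_{i,t})=\alpha_i\sqrt{2/\pi}\,I_{\{i\le p\}}+\beta_i\,E(\eta_t)\,I_{\{i\le q\}}+\gamma_i I_{\{i\le w\}}>0$; and $x_{i,t-i}$ is a function of $(\epsilon_{t-i},\eta_{t-i})$ alone, hence independent of the $\mathcal F^{\epsilon,\eta}_{t-i-1}$-measurable $h_{t-i}$. Also $h_t\ge\mu>0$, so every term above is nonnegative.

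Iterating the displayed relation $N$ times expresses
\[
h_t=\mu\sum_{n=0}^{N-1}\ \sum_{(i_1,\dots,i_n)\in\{1,\dots,k\}^n}\ \prod_{j=1}^{n}x_{i_j,\,t-(i_1+\cdots+i_j)}\ +\ R_N ,
\]
where $R_N\ge0$ collects the length-$N$ terms, each of the form $\big(\prod_{j=1}^{N}x_{i_j,\,t-(i_1+\cdots+i_j)}\big)\,h_{t-(i_1+\cdots+i_N)}$. Along any such tuple the innovation times $t-(i_1+\cdots+i_j)$ are strictly decreasing, so its factors are mutually independent and independent of the trailing $h$; taking expectations and using the elementary identity $\sum_{(i_1,\dots,i_n)\in\{1,\dots,k\}^n}\prod_{j=1}^{n}\mu_{i_j}=S^n$ with $S:=\sum_{i=1}^{k}\mu_i$ gives $Eh_t=\mu\sum_{n=0}^{N-1}S^n+ER_N$ in $[0,\infty]$. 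Assume first $Eh_t<\infty$; since $\{r_t\}$ runs from the infinite past its mean is time-invariant, so $Eh_s\equiv m<\infty$ for all $s$, whence $ER_N=mS^N$ and $m=\mu\sum_{n=0}^{N-1}S^n+mS^N$ for every $N\ge1$. Because $\mu>0$, the partial sums $\mu\sum_{n=0}^{N-1}S^n$ are unbounded in $N$ when $S\ge1$, contradicting $m<\infty$; hence $S<1$, and letting $N\to\infty$ gives $m=\mu/(1-S)$, which is (\ref{mean_h}) and proves the necessity of $\sum_{i=1}^{k}\mu_i<1$.

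For the converse, assume $S<1$. The series $g_t:=\mu\sum_{n\ge0}\sum_{(i_1,\dots,i_n)}\prod_{j=1}^{n}x_{i_j,\,t-(i_1+\cdots+i_j)}$ has $Eg_t=\mu\sum_{n\ge0}S^n=\mu/(1-S)<\infty$ by Tonelli, hence is a.s.\ finite, and a rearrangement (justified by absolute convergence) shows $g_t=\mu+\sum_{i=1}^{k}x_{i,t-i}g_{t-i}$; identifying it with the infinite-past process gives $Eh_t=\mu/(1-S)<\infty$, closing the ``if and only if''. For $p=q=w=1$ this reproduces Theorem~\ref{thm:mean}. Finally, $r_t=h_tv_t$ with $h_t>0$ has center $\lambda_t=h_t\epsilon_t$ and radius $\delta_t=h_t\eta_t$; independence of $(\epsilon_t,\eta_t)$ from $h_t$ together with $E\epsilon_t=0$ and $E\eta_t=k$ gives $E\lambda_t=0$ and $E\delta_t=kEh_t$, so (\ref{meanA}) yields $Er_t=[-kEh_t,\ kEh_t]$, which is (\ref{mean_e}).

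The true obstacle is the converse just sketched, specifically identifying the infinite-past volatility with $g_t$: one must exclude the possibility that the process is a strictly larger nonnegative solution of $h_t=\mu+\sum_{i}x_{i,t-i}h_{t-i}$, equivalently that $R_N\not\to0$. This is handled by the monotonicity and nonnegativity built into the recursion --- the partial sums increase to $g_t$, while every nonnegative solution dominates each partial sum, so $g_t$ is the minimal solution and, being integrable, must equal any finite-mean solution --- but it is the only step beyond routine bookkeeping; the rest is the expansion identity for $S^n$ and keeping track of the independence along the strictly decreasing chain of innovation times.
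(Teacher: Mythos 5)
Your proof follows the same route as the paper's: rewrite (\ref{igarch_5}) as $h_t=\mu+\sum_{i=1}^{k}x_{i,t-i}h_{t-i}$, unfold the recursion $N$ steps, take expectations along the strictly decreasing chain of innovation times, and use the multinomial identity $\sum_{(i_1,\dots,i_n)}\prod_{j}\mu_{i_j}=\bigl(\sum_{i}\mu_i\bigr)^{n}$, then compute $Er_t$ from the independence of $(\epsilon_t,\eta_t)$ and $h_t$. The only difference is that you make explicit the necessity (via nonnegativity of the remainder) and the sufficiency (via the minimal-solution identification), steps the paper leaves implicit by bounding the remainder with $\max_{l}E(h_{t-l})\cdot S^{N+1}$ and invoking finiteness of $E(h_{-\infty})$; your version is correct and, if anything, slightly more careful.
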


 \subsection{Relationship to GARCH}
 It is well known that the ARCH/GARCH types of models are heteroskedastic but serially uncorrelated. From the proof of Theorem \ref{thm:cov}, the centers in the Int-GARCH(1,1,1) model are also uncorrelated. And, in general, the centers in any Int-GARCH(p,q,w) model are uncorrelated. This is not a coincidence. By noticing 
 $\lambda_t=\epsilon_th_t$, we see that $h_t$ is the conditional standard deviation of the center $\lambda_t$. The GARCH(p,q) process models $h_t$ by
 \begin{equation}
   h_t^2=\mu+\sum_{i=1}^{p}\alpha_i\lambda_{t-i}^2+\sum_{i=0}^{q}\beta_ih_{t-i}^2,\ p>0,\ q\geq 0. 
 \end{equation}
Compared to this, except for using $h_t$ directly instead of $h_t^2$, our Int-GARCH(p,q,w) specification for $h_t$ (\ref{igarch_5}) has only one extra term, which is a linear combination of the past radii $\left\{\delta_{t-i}: i=1,\cdots,q\right\}$. Therefore, the center of our Int-GARCH model is just a GARCH process with one external variable that is the half-range of the price. Our Int-GARCH process as a whole further systematically models the interaction between the average and the range of prices under the random sets 
framework.

\section{The conditional least squares (CLS) estimate}\label{CLSE}
We propose a conditional least squares method to estimate the parameters of the Int-GARCH model. Throughout this section, we denote by $\vec{\theta}$ the parameter vector that contains all the Int-GARCH parameters $\left\{k; \mu; \alpha_1, \cdots, \alpha_p; \beta_1, \cdots, \beta_q; \gamma_1, \cdots, \gamma_w\right\}$ as its components. Conditioning on the past, the expected value of the current interval $r_t$ is
\begin{eqnarray*}
\hat{r_{t}} & = & E\left[r_{t}\mid\mathcal{F}_{t-1}\right]\\
 & = & E\left\{\left[h_{t}\left(\varepsilon_{t}-\eta_{t}\right),h_{t}\left(\varepsilon_{t}+\eta_{t}\right)\right]|\mathcal{F}_{t-1}\right\}\\
 & = & \left[E\left\{h_t\left(\varepsilon_{t}-\eta_{t}\right)|\mathcal{F}_{t-1}\right\}, E\left\{h_t\left(\varepsilon_{t}+\eta_{t}\right)|\mathcal{F}_{t-1}\right\}\right]\\
 & = & \left[-kh_{t},kh_{t}\right],
\end{eqnarray*}
where
\begin{equation}\label{pred_ht}
   h_t=\mu+\sum_{i=1}^{p}\alpha_{i}\left|\lambda_{t-i}\right|+\sum_{i=1}^{q}\beta_{i}\delta_{t-i}+\sum_{i=0}^{w}\gamma_{i}h_{t-i}.
\end{equation}
A widely used metric in the space $\mathcal{K}_{C}$ of compact convex subsets of $\mathbb{R}^d$ is given by
\begin{equation}\label{delta-metric}
\delta\left(A, B\right)=\left[d\int_{S^{d-1}}|s_A(u)-s_B(u)|^2\mu\left(\mathrm{d}u\right)\right]^{\frac{1}{2}},
\end{equation}
where $A, B\in\mathcal{K}_{C}$ and $\mu$ is the normalized Lebesgue measure on $S^{d-1}$ (see, e.g., K\"{o}rner 1995). Letting $A=r_t, B=\hat{r}_t$ in (\ref{delta-metric}), we have
\begin{equation*}
\delta\left(r_t, \hat{r}_t\right)_2^2
=\frac{1}{2}\left\{\left[\lambda_{t}-\delta_{t}+kh_t\right]^{2}+\left[\lambda_{t}+\delta_{t}-kh_t\right]^{2}\right\} ,
\end{equation*}
Our CLS estimate $\hat{\vec{\theta}}$ is defined such that the sum of the squared $\delta$-metric between $r_t$ and $\hat{r}_t$ is minimized, i.e.
\begin{equation}\label{clse}
  \hat{\vec{\theta}}=\arg\min_{\vec{\theta}}\left\{\sum_{t=1}^{T}\left\|r_t-\hat{r}_t\right\|_2^2\right\}:=\arg\min_{\vec{\theta}}\left\{L(\vec{\theta})\right\}.
\end{equation}
Notice that the recursive formula in (\ref{pred_ht}) requires starting values $h_0$ and $r_0$. We assume that the process $\left\{r_t\right\}$ starts from its infinite past with a finite mean and variance, and therefore it is reasonable to let $h_0=E\left(h_t\right)$. An alternative is to let $h_0=0$, assuming $\left\{r_t\right\}$ starts from a constant interval $r_0$. Based on our experience, the choice of $h_0$ does not affect the accuracy of the final estimate, given a reasonably large sample. In either case, we let $r_0=E\left(r_t\right)$.

In principle, an estimator that makes full use of the distribution information such as the maximum likelihood estimate (MLE) is more desired, especially in the situation with heteroskedasticity. However, the (conditional) likelihood for the Int-GARCH model involves distribution functions of both $\epsilon_t$ and $\eta_t$, and so the (conditional) MLE is computationally much more expensive than the CLS. As will be seen in the empirical analysis section, it is the interval length that plays the dominant role in the model. This implies that the distribution information of $\eta_t$, which is largely reflected on the value of $k$, is dominating that of $\epsilon_t$. By switching from conditional MLE to CLS, we lose the minor information of $\epsilon_t$, but our gain of computational efficiency is substantial. In the rest of the section, we will restrict our attention to the Int-GARCH(1,1,1) model. We first give explicit formulae for calculating the CLS estimate using a Newton-Raphson algorithm. This necessitates the computation of initial parameters. An initialization scheme is provided based on the method of moments. Finally, we carry out a simulation study to examine the performance of our proposed CLS method, and the results are very satisfactory. 

\subsection{A Newton-Raphson algorithm}
The parameter vector in the Int-GARCH(1,1,1) model is $\vec{\theta}=\left[k, \mu, \alpha_1, \beta_1, \gamma_1\right]^{T}$. Plugging (\ref{pred_ht}) in (\ref{clse}), the conditional least squares function $L\left(\vec{\theta}\right)$ is explicitly expressed as
\begin{align}
L\left(\vec{\theta}\right)= & \frac{1}{2}\sum_{t=1}^{T}\left\{ \left[\lambda_{t}-\delta_{t}+k\left(\mu+\alpha_{1}\left|\lambda_{t-1}\right|+\beta_{1}\delta_{t-1}+\gamma_{1}h_{t-1}\right)\right]^{2}\right.\nonumber\\
 & +\left.\left[\lambda_{t}+\delta_{t}-k\left(\mu+\alpha_{1}\left|\lambda_{t-1}\right|+\beta_{1}\delta_{t-1}+\gamma_{1}h_{t-1}\right)\right]^{2}\right\}. 
\end{align}
Consequently, the gradient vector $L^{\prime}\left(\vec{\theta}\right)$ and the Hessian matrix $L^{\prime\prime}\left(\vec{\theta}\right)$ are found to be
\begin{eqnarray*}
L^{\prime}\left(\vec{\theta}\right) & = & 2\sum_{t=1}^{T}\left\{ \left(-\delta_{t}+kh_{t-1}\right)\left[\begin{array}{c}
k\\
k\left|\lambda_{t-1}\right|\\
k\delta_{t-1}\\
kh_{t-1}
\end{array}\right]\right\} ,
\end{eqnarray*}
and 
\begin{eqnarray*}
L^{\prime\prime}\left(\vec{\theta}\right) & =2\sum_{t=1}^{T} & \left[\begin{array}{cccc}
k^{2} & k^{2}\left|\lambda_{t-1}\right| & k^{2}\delta_{t-1} & k^{2}h_{t-1}\\
k^{2}\left|\lambda_{t-1}\right| & k^{2}\left|\lambda_{t-1}\right|^{2} & k^{2}\delta_{t-1}\left|\lambda_{t-1}\right| & k^{2}h_{t-1}\left|\lambda_{t-1}\right|\\
k^{2}\delta_{t-1} & k^{2}\left|\lambda_{t-1}\right|\delta_{t-1} & k^{2}\delta_{t-1}^{2} & k^{2}h_{t-1}\delta_{t-1}\\
k^{2}h_{t-1} & k^{2}\left|\lambda_{t-1}\right|h_{t-1} & k^{2}\delta_{t-1}h_{t-1} & k^{2}h_{t-1}^{2}
\end{array}\right].
\end{eqnarray*}
The Newton-Raphson Algorithm thus consists of an iterative computation of the following formula 
\[
\vec{\theta}^{(k+1)}=\vec{\theta}^{(k)}-\left[L''\left(\vec{\theta}^{(k)}\right)\right]^{-1}\cdot L'\left(\vec{\theta}^{(k)}\right),
\]
where $L'\left(\vec{\theta}^{(k)}\right)$and $\left[L''\left(\vec{\theta}^{(k)}\right)\right]^{-1}$ are the gradient vector and inverse Hessian matrix evaluated at the $k^{th}$ step estimate $\vec{\theta}^{(k)}$, respectively.

\subsection{\label{sub:Initialization-of-the}Parameter initialization}
We use the method of moments to get an initial estimate of $\vec{\theta}$. Notice that
\begin{eqnarray}
  E\left(\delta_t\right) &=& E\left(h_t\delta_t\right)=E\left(h_t\right)E\left(\delta_t\right)=kE\left(h_t\right),\label{ini_1}\\
  E\left|\lambda_{t}\right| &=& E\left|h_{t}\varepsilon_{t}\right|=E\left(h_{t}\right)E\left|\varepsilon_{t}\right|=\sqrt{2/{\pi}}E\left(h_{t}\right),\label{ini_2}\\
  E\left(h_{t}\right) &=& \dfrac{\mu}{1-\alpha_{1}\sqrt{2/\pi}-\beta_{1}k-\gamma_{1}}.\label{ini_3}
\end{eqnarray}
Equating $E\left|\lambda_t\right|$ with its sample mean $\overline{\left|\lambda_t\right|}$ in (\ref{ini_2}), we get the moment estimate of $E(h_t)$. That is,
\begin{equation*}
  \overline{E\left(h_t\right)}=\sqrt{\frac{\pi}{2}}\overline{\left|\lambda_t\right|}.
\end{equation*}
Then, replacing $E\left(h_{t}\right)$ by $ \overline{E\left(h_t\right)}$ and equating $E\left(\delta_t\right)$ with its sample mean $\overline{\left|\delta_t\right|}$ in (\ref{ini_1}), we obtain an initial estimate of $k$. Similarly, $\mu$ is initialized by replacing $E\left(h_{t}\right)$ by $\overline{E\left(h_t\right)}$ and a rough guessing (for example, 0.4) of
$1-\alpha_{1}\sqrt{2/\pi}-\beta_{1}k-\gamma_{1}$ in (\ref{ini_3}). Finally, initial estimates of $\alpha_1$, $\beta_1$, and $\gamma_1$ are obtained by setting each of 
$\alpha_{1}\sqrt{2/\pi}$, $\beta_{1}k$, and $\gamma_{1}$ to be a small value, which we choose to be 0.2. In conclusion, our initial parameters are given by
\begin{eqnarray*}
  k^0 &=& \sqrt{\frac{2}{\pi}}\frac{\overline{\delta_t}}{\overline{|\lambda_t|}},\\
  \mu^0 &=& 0.4\left(\sqrt{\frac{\pi}{2}}\overline{\left|\lambda_t\right|}\right),\\
  \alpha_1^0 &=& 0.2\sqrt{\frac{\pi}{2}},\\
  \beta_1^0 &=& 0.2\left(\frac{1}{k^0}\right),\\
  \gamma_1^0 &=& 0.2.
\end{eqnarray*}

\subsection{Simulation}\label{simulation}
We generate four sets of parameters using the initial values $h_{0}=0$ and $r_{0}=E\left(r_{t}\right)$, each of which will result in a weakly stationary Int-GARCH process. The exact parameter values are listed in Table \ref{tab:simu}. Plots of the simulated data are shown in Fig \ref{fig:data_simu}. 

\begin{figure}[ht]
\centering
\includegraphics[ height=1.800in, width=2.300in]{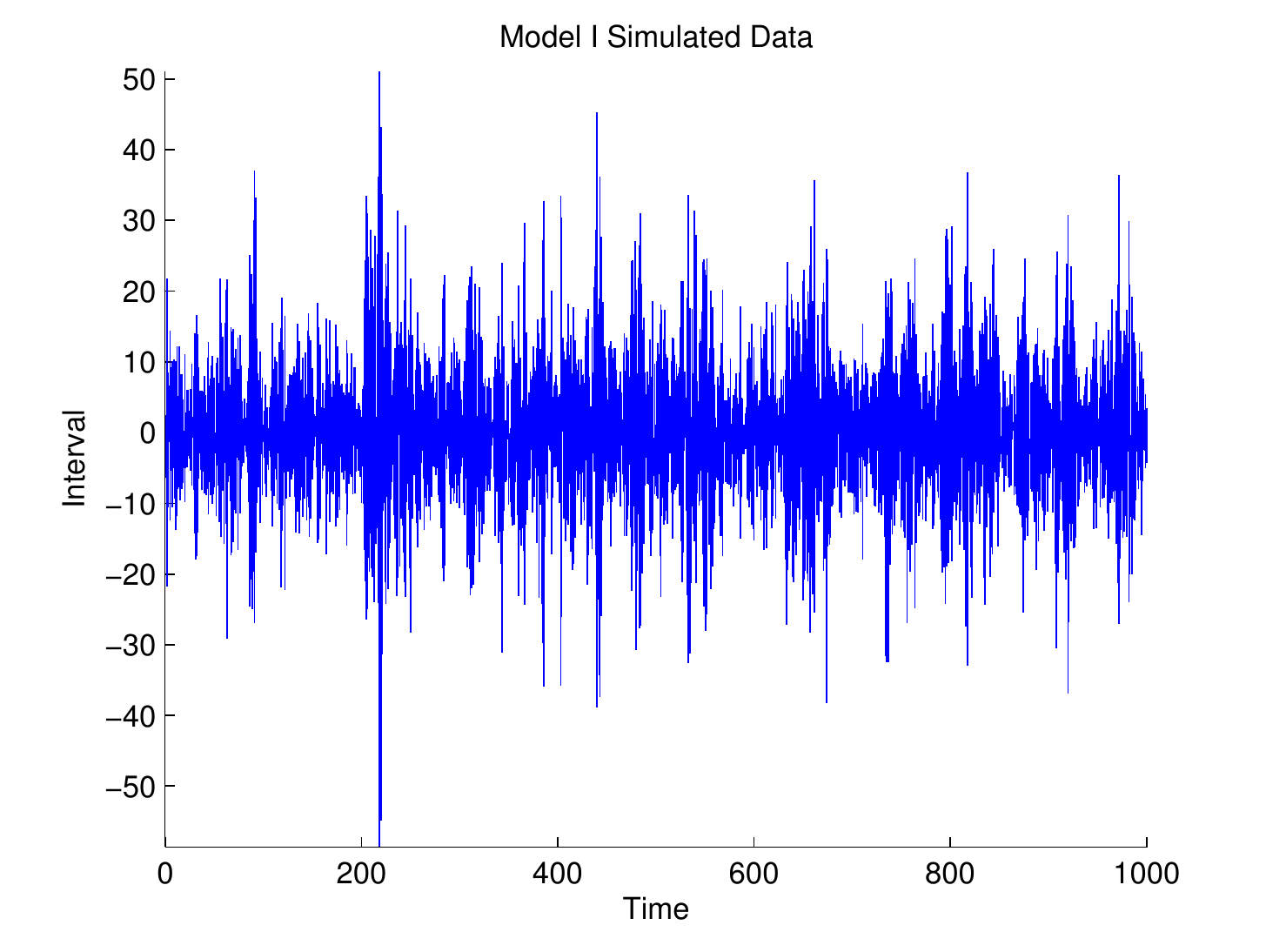}
\includegraphics[ height=1.800in, width=2.300in]{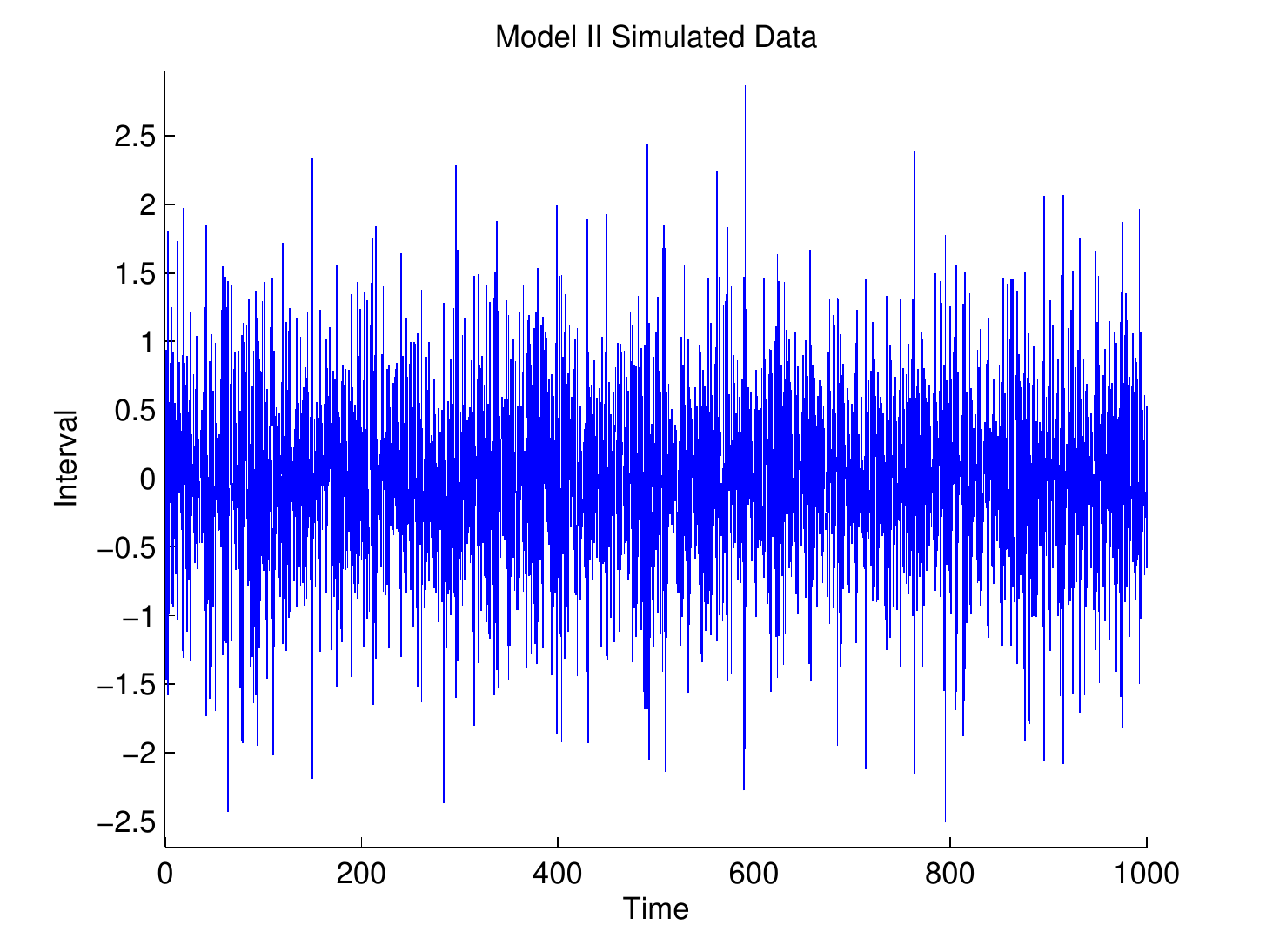}
\caption{Plots of simulated data sets each with $T=1000$.}
\label{fig:data_simu}
\end{figure}

Denote
\begin{eqnarray*}
  \gamma(s) &=& \mbox{Cov}\left(r_{t},r_{t+s}\right),\\
  \gamma_{\lambda}(s) &=& \mbox{Cov}\left(\lambda_{t},\lambda_{t+s}\right),\\
  \gamma_{\delta}(s) &=& \mbox{Cov}\left(\delta_{t},\delta_{t+s}\right).
\end{eqnarray*}
Recall that the theoretical ACF of $\left\{r_t\right\}$ is
\begin{eqnarray*}
\rho(s) & = & \dfrac{\gamma(s)}{\gamma(0)}=\dfrac{\gamma_{\lambda}(s)+\gamma_{\delta}(s)}{\gamma_{\lambda}(0)+\gamma_{\delta}(0)}.
\end{eqnarray*}
We consequently define the sample ACF of $\left\{ r_{t}\right\} $ as
\[
\rho(s)=\dfrac{\hat{\gamma_{\lambda}}(s)+\hat{\gamma_{\delta}}(s)}{\hat{\gamma_{\lambda}}(0)+\hat{\gamma_{\delta}}(0)},
\]
where $\hat{\gamma_{\lambda}}(s)$ and $\hat{\gamma_{\delta}}(s)$ are the sample auto-covariance functions of $\left\{ \lambda_{t}\right\}$ and $\left\{ \delta_{t}\right\} $, respectively. Figure \ref{fig:acf_1} and \ref{fig:acf_2} show the sample ACF's for simulated data sets with 3000 observations from models I and II, respectively.\\

\begin{figure}[ht]
\centering
\includegraphics[ height=1.800in, width=2.300in]{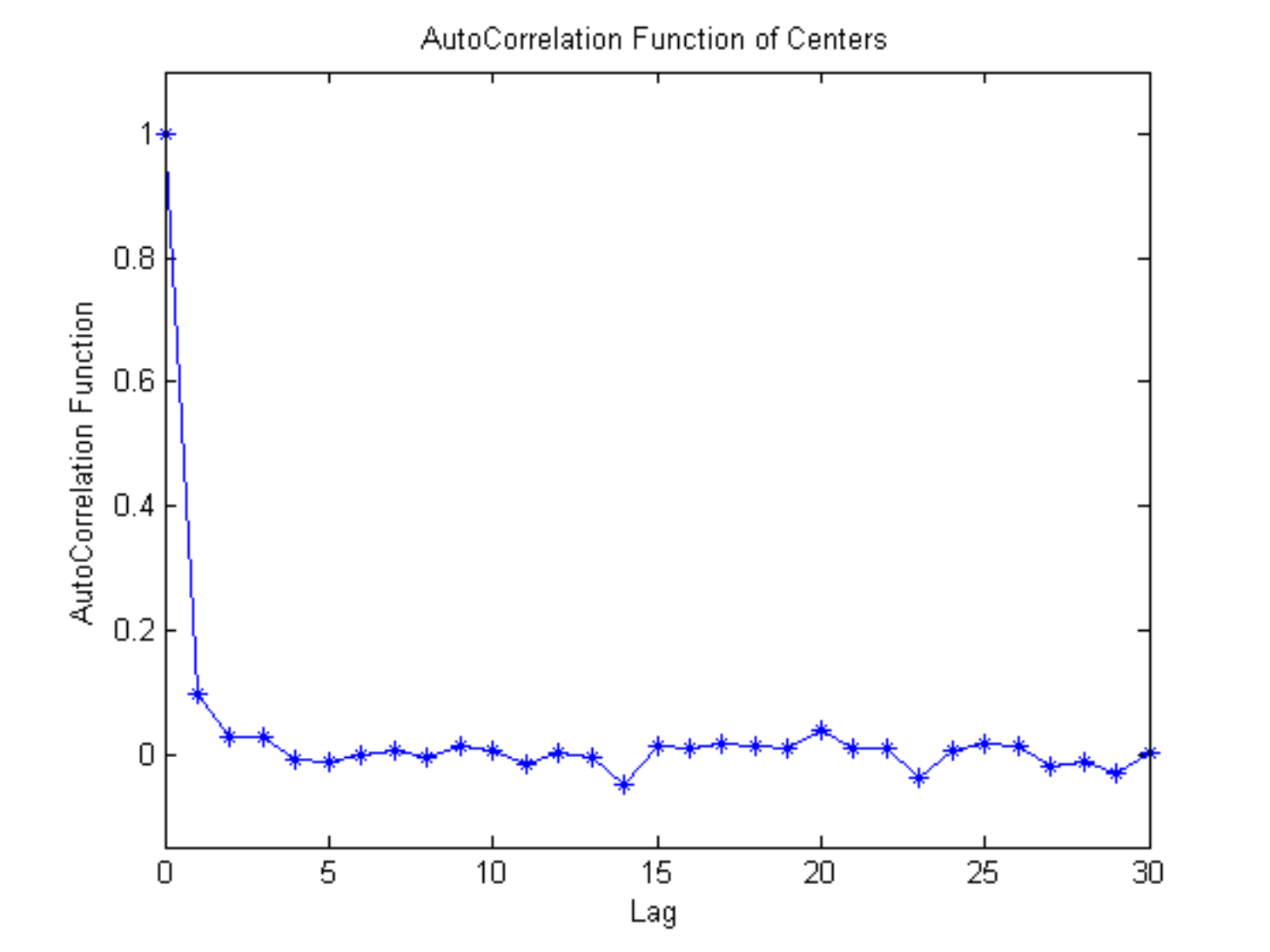}
\includegraphics[ height=1.800in, width=2.300in]{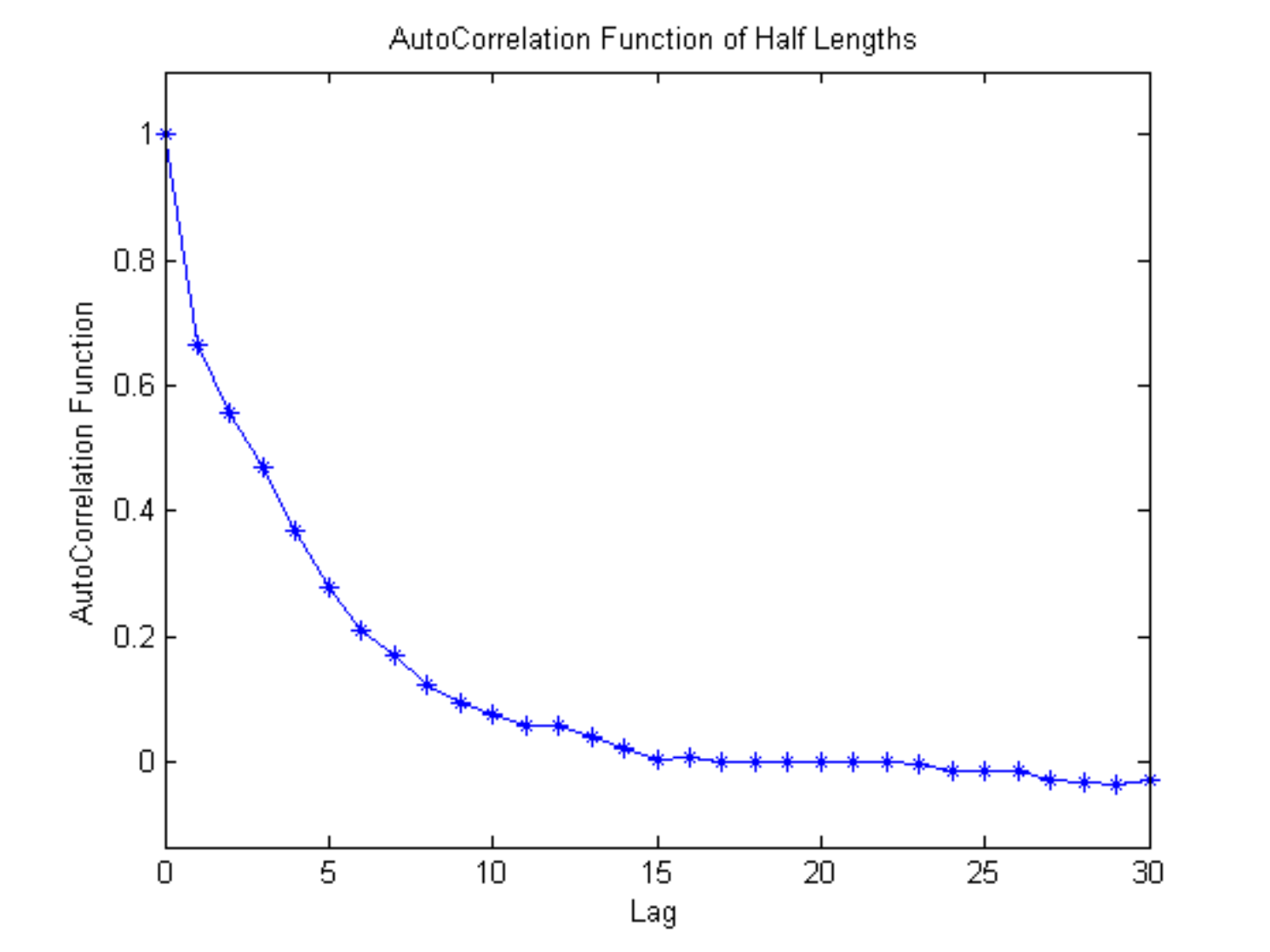}\\
\includegraphics[ height=1.800in, width=2.300in]{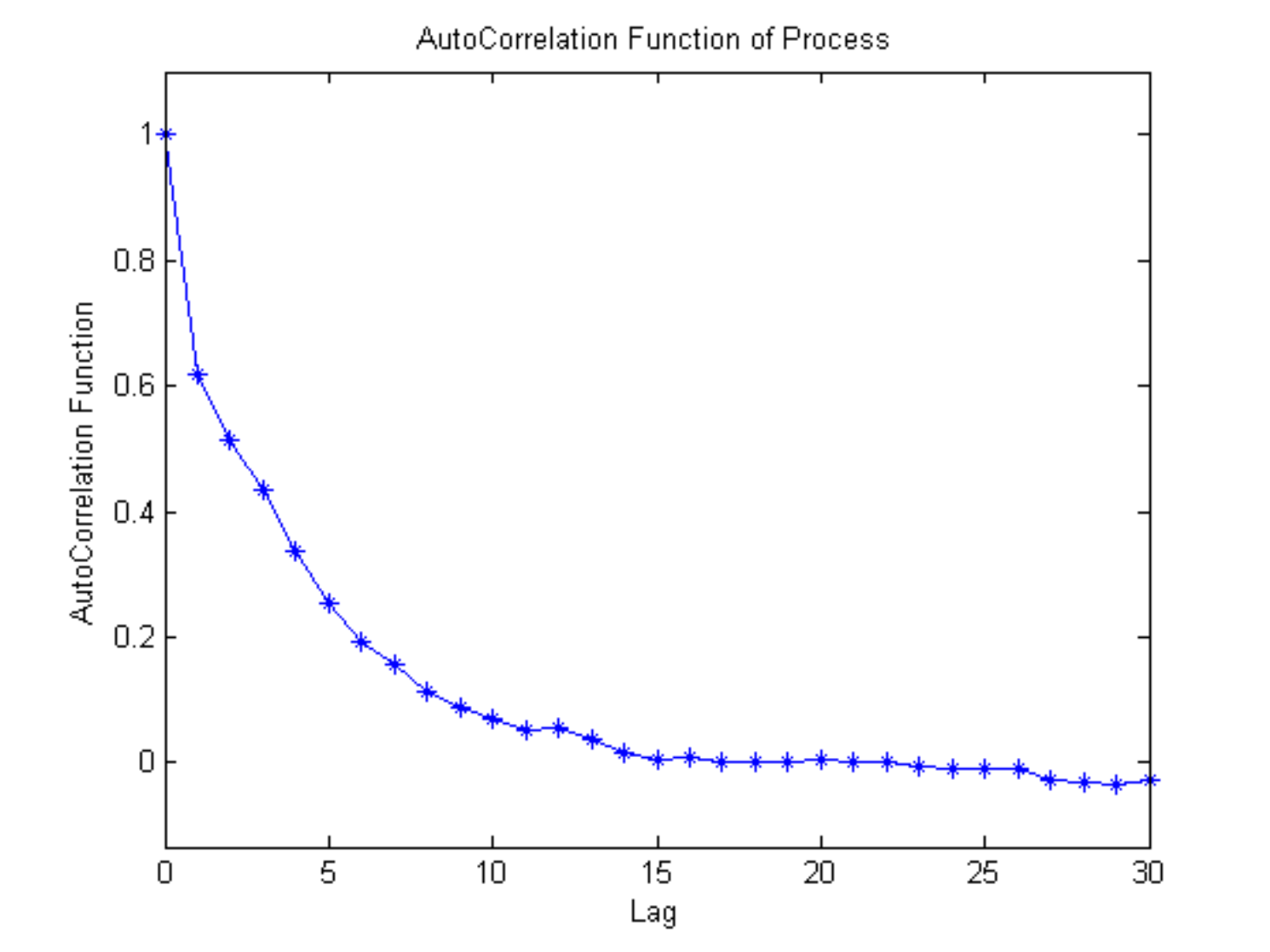}
\caption{Sample auto-correlation functions of a simulated data set from Model I.}
\label{fig:acf_1}
\end{figure}

\begin{figure}[ht]
\centering
\includegraphics[ height=1.800in, width=2.300in]{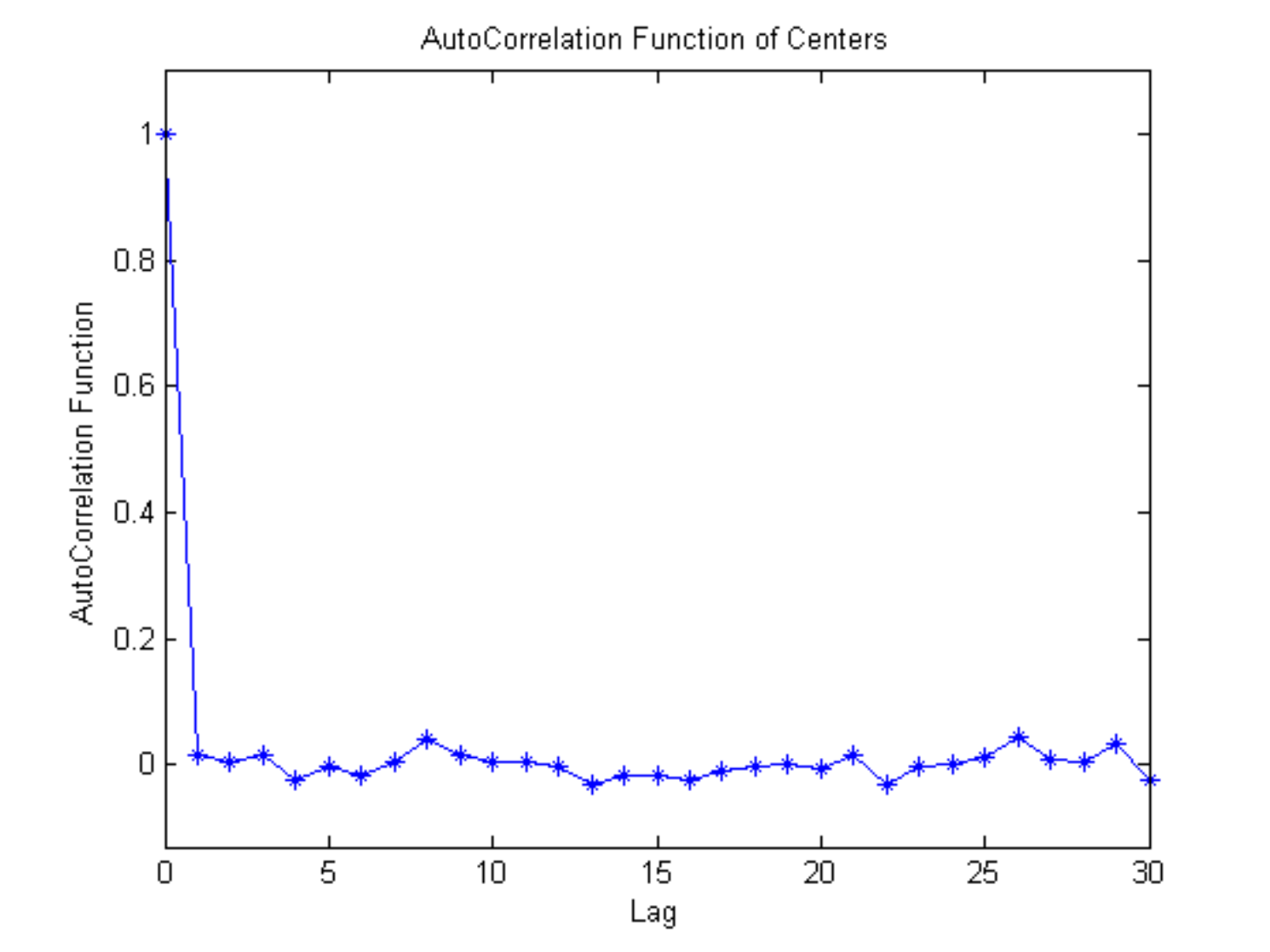}
\includegraphics[ height=1.800in, width=2.300in]{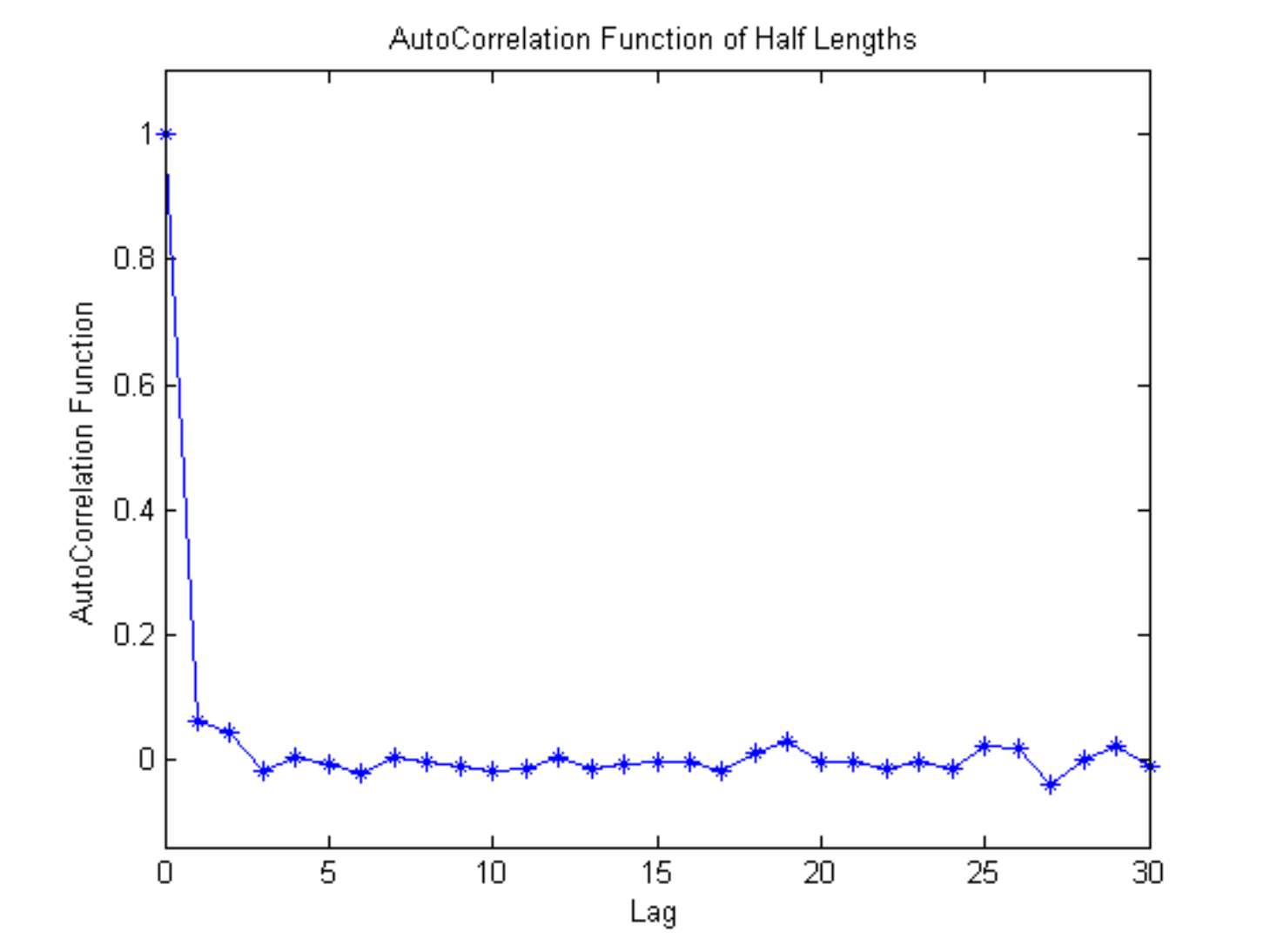}\\
\includegraphics[ height=1.800in, width=2.300in]{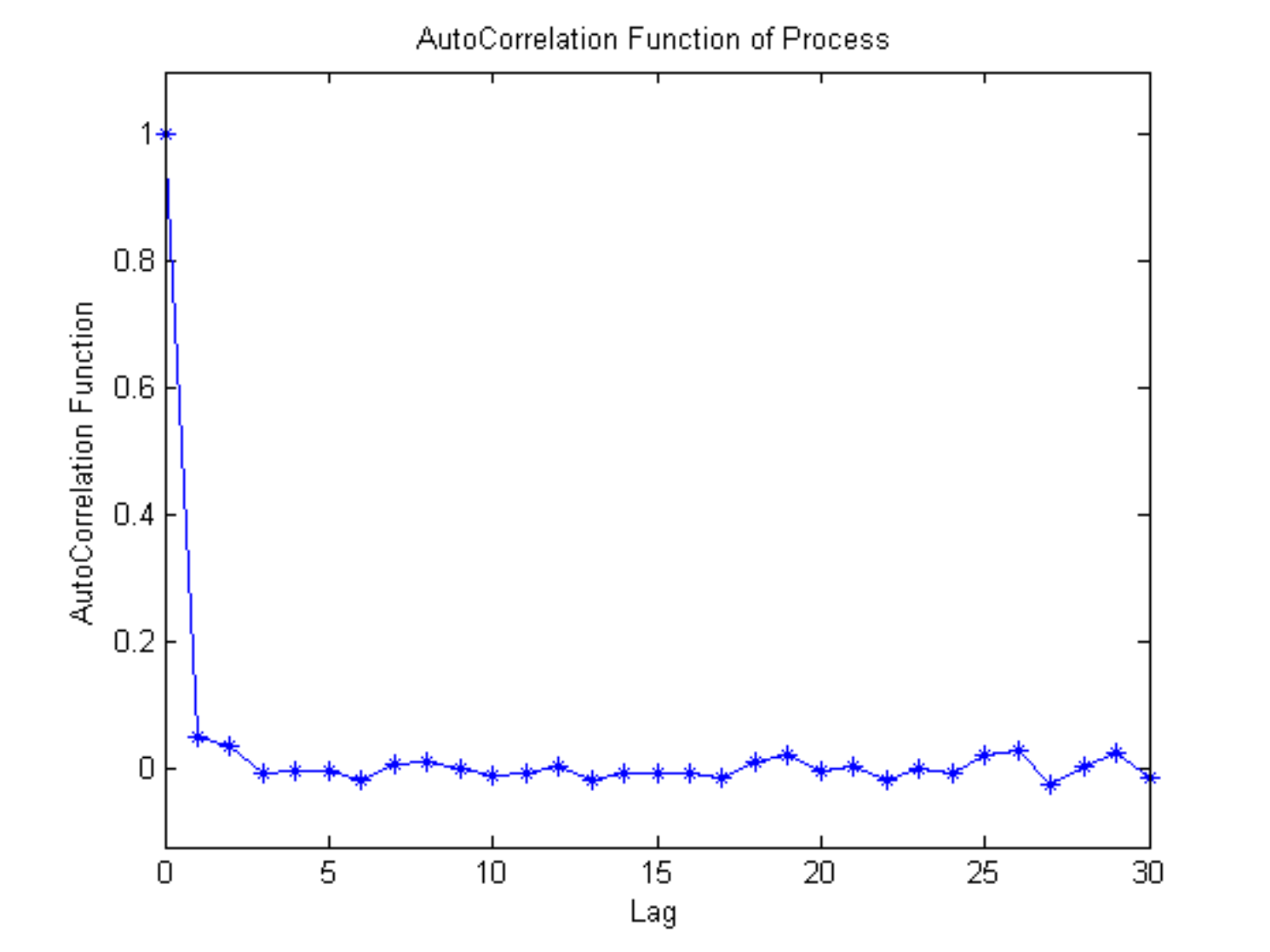}
\caption{Sample auto-correlation functions of a simulated data set from Model II.}
\label{fig:acf_2}
\end{figure}

\indent For each of the four models, we simulate a data set with 3000 observations and calculate the CLSE of the parameters using the Newton-Raphson algorithm. We found from our experience that excluding $k$ from the algorithm to search for the CLSE for $\left\{\mu, \alpha_1, \beta_1, \gamma_1\right\}$ results in the algorithm being more stable. Since the initial estimate for $k$ as described in Section \ref{sub:Initialization-of-the} turns out to be good enough, it is used as the final estimate. An alternative strategy is to update k and the rest parameters in separate steps. We repeat the process of simulation and estimation for 100 times independently, and the overall results are reported in Table \ref{tab:simu}. We see that our conditional least squares estimates are very close to the true parameters with small empirical standard errors.

\begin{table}
\center
\caption{Average Result of 100 Repetitions}
\label{tab:simu}
\bigskip
\begin{tabular}{cccccc}
\toprule 
Model  & \multicolumn{2}{c}{Parameters} & Mean Estimate  & Mean Bias  & Empirical Standard Error \tabularnewline
\midrule 
 & $k$  & 4.7162  & 4.7173  & 0.0677  & 0.0832\tabularnewline
 & $\mu$  & 0.4724  & 0.4917  & 0.0671  & 0.0842\tabularnewline
I & $\alpha_{1}$  & 0.2637  & 0.2664  & 0.0206  & 0.0251\tabularnewline
 & $\beta_{1}$  & 0.0906  & 0.0887  & 0.0055  & 0.0063\tabularnewline
 & $\gamma_{1}$  & 0.1796  & 0.1778  & 0.0383  & 0.0475\tabularnewline
\midrule 
 & $k$  & 2.7330  & 2.7370  & 0.0396  & 0.0491\tabularnewline
 & $\mu$  & 0.1385  & 0.1397  & 0.0110  & 0.0139\tabularnewline
II & $\alpha_{1}$  & 0.2572  & 0.2621  & 0.0180  & 0.0222\tabularnewline
 & $\beta_{1}$  & 0.0202  & 0.0192  & 0.0059  & 0.0073\tabularnewline
 & $\gamma_{1}$  & 0.1459  & 0.1398  & 0.0521  & 0.0651\tabularnewline
\midrule 
 & $k$  & 5.4871  & 5.5012  & 0.0672  & 0.0908\tabularnewline
 & $\mu$  & 0.5331  & 0.5359  & 0.0453  & 0.0574\tabularnewline
III & $\alpha_{1}$  & 0.1782  & 0.1751  & 0.0127  & 0.0154\tabularnewline
 & $\beta_{1}$  & 0.0253  & 0.0254  & 0.0027  & 0.0036\tabularnewline
 & $\gamma_{1}$  & 0.1396  & 0.1364  & 0.0538  & 0.0669\tabularnewline
\midrule 
  & $k$  & 1.9108  & 1.9103  & 0.0286  & 0.0358\tabularnewline
 & $\mu$  & 0.3640  & 0.3654  & 0.0384  & 0.0458\tabularnewline
IV & $\alpha_{1}$  & 0.2642  & 0.2652  & 0.0211  & 0.0269\tabularnewline
 & $\beta_{1}$  & 0.0228  & 0.0216  & 0.0083  & 0.0101\tabularnewline
 & $\gamma_{1}$  & 0.0705  & 0.0704  & 0.0745  & 0.0884\tabularnewline
\bottomrule
\end{tabular}
\end{table}


\section{Real data analysis}
In this section, we apply our Int-GARCH model to analyze the Dow Jones Industrial Average Index component stocks. We obtained the daily data of totally 3019 trading days, from January 3, 2000 to December 31, 2011. Figure \ref{fig:plot-return} shows plots of the return range data for 4 randomly selected stocks: BA, JPM, KO, and TRV. To reveal more information in these data, we highlight the days when the return range is long (above $75\%$ quantile of the entire data) but the average return is small (absolute value of the center is below $25\%$ quantile). We see that most of such days occurred in the years 2000-2004 and 2008-2010. As we discussed before, these days show large variability in price, but the point-valued volatility models such as GARCH tend to underestimate it. The sample ACF of the BA return range is displayed in Figure \ref{fig:acf-BA}. It is very similar to the theoretical ACF of model I in Figure \ref{fig:realACF_1} from Section \ref{simulation}, which indicates the feasibility of our Int-GARCH model. We estimate the parameters by the CLS method we proposed in Section \ref{CLSE} and list the results for 10 randomly selected stocks in Table \ref{tab:est_model}. The fitted models show some patterns. 1) $\alpha_1$ is much smaller than $\beta_1$ in magnitude, indicating that the return range is of much more importance than a ``snapshot" return, in terms of their contributions to the volatility. 2) $\gamma_1$ is either very small or exactly 0; most likely an Int-ARCH model is sufficient here.

\begin{figure}[ht]
\centering
\includegraphics[ height=1.800in, width=2.300in]{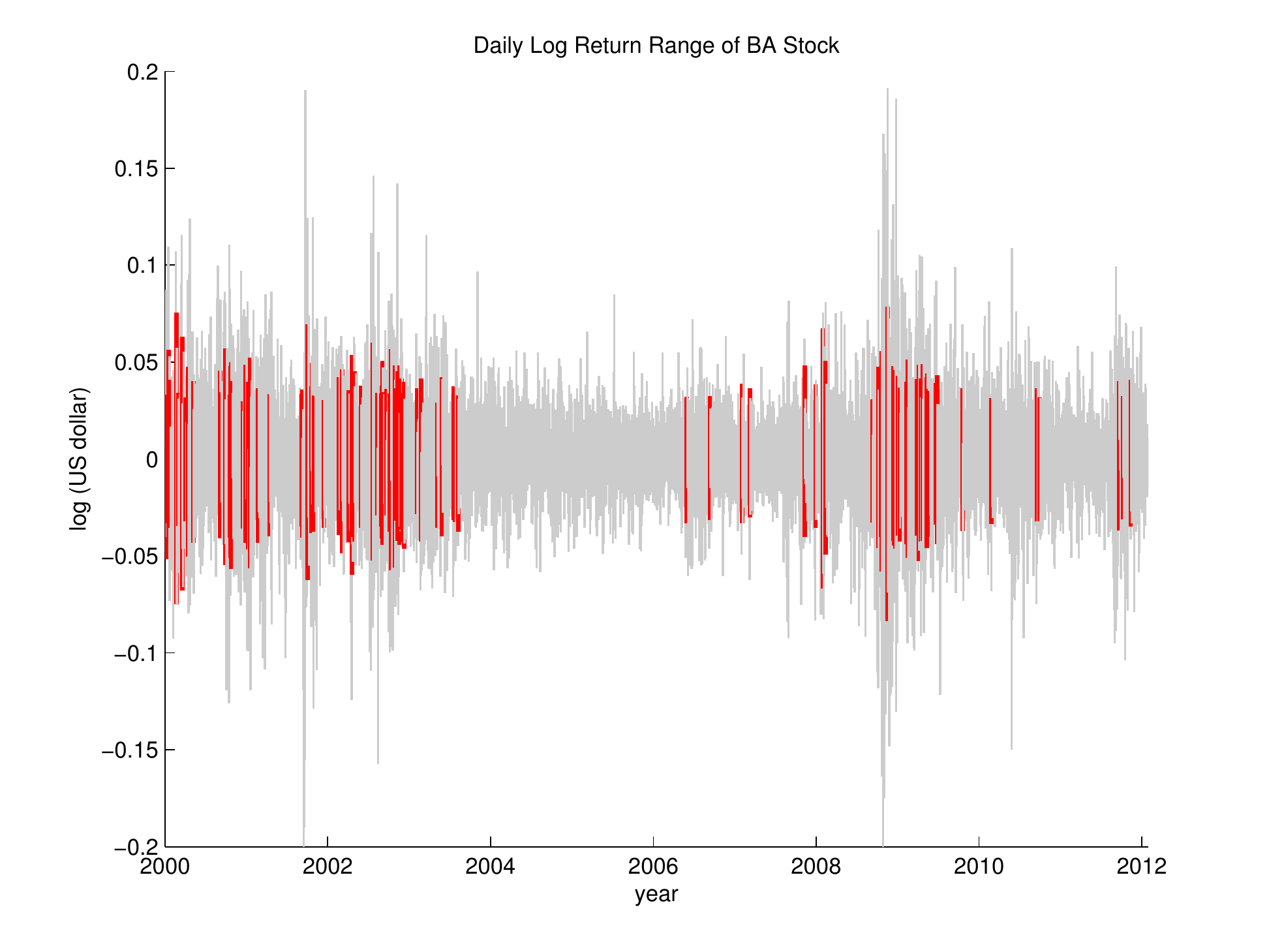}
\includegraphics[ height=1.800in, width=2.300in]{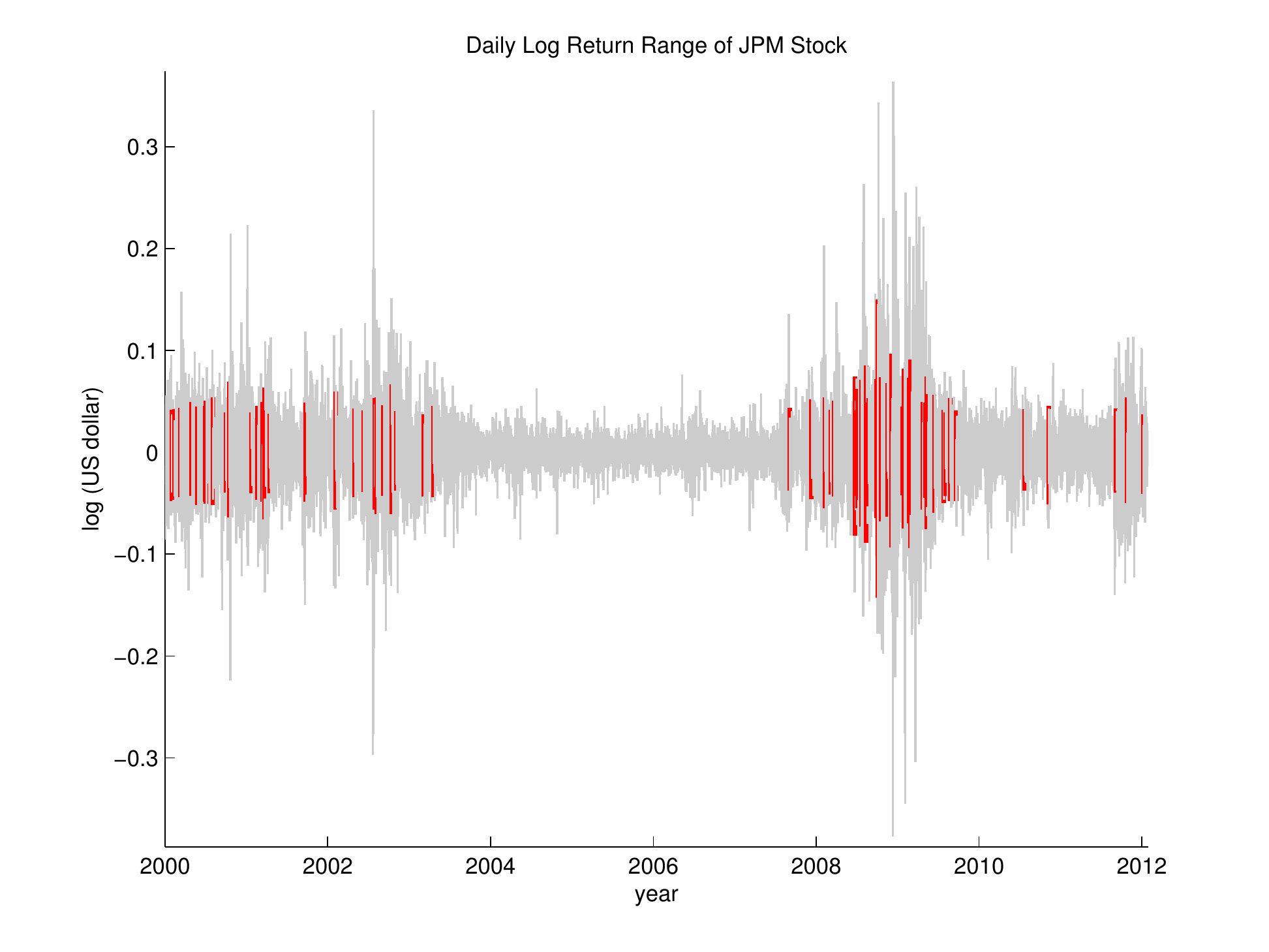}\\
\includegraphics[ height=1.800in, width=2.300in]{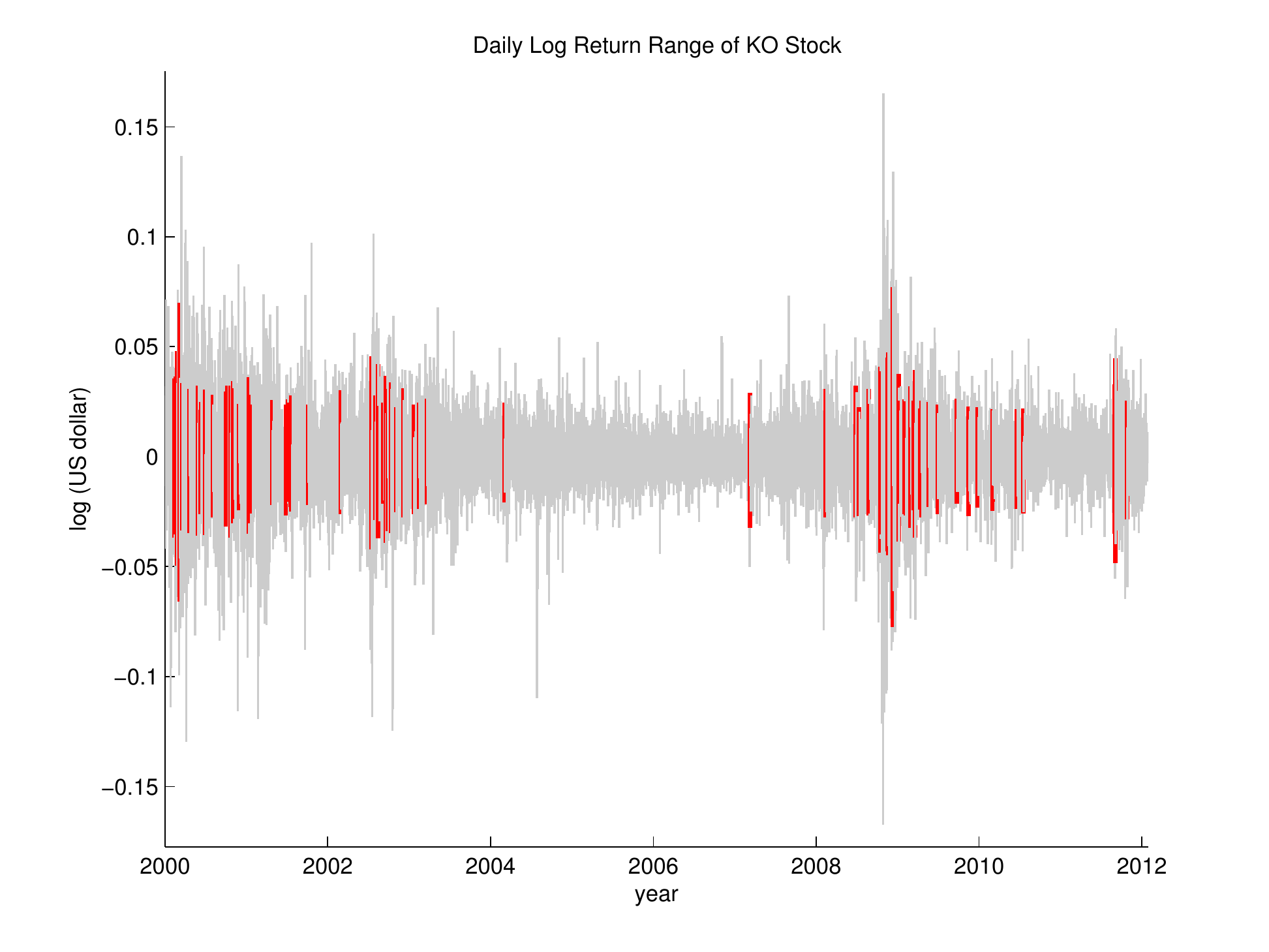}
\includegraphics[ height=1.800in, width=2.300in]{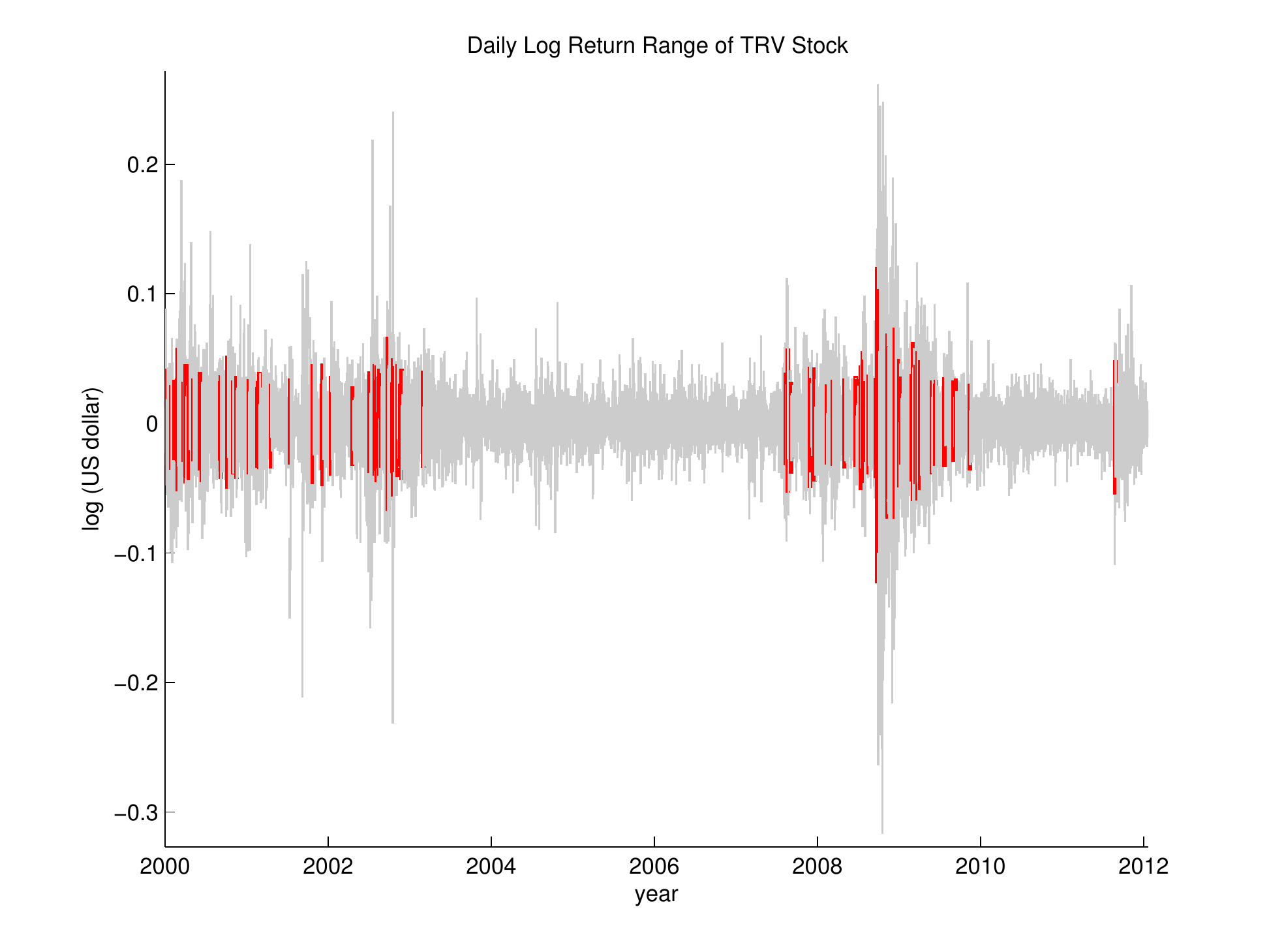}\\
\caption{Plots of the daily return range data for selected stocks from Dow Jones Index.}
\label{fig:plot-return}
\end{figure}

\begin{figure}[ht]
\centering
\includegraphics[ height=1.800in, width=2.300in]{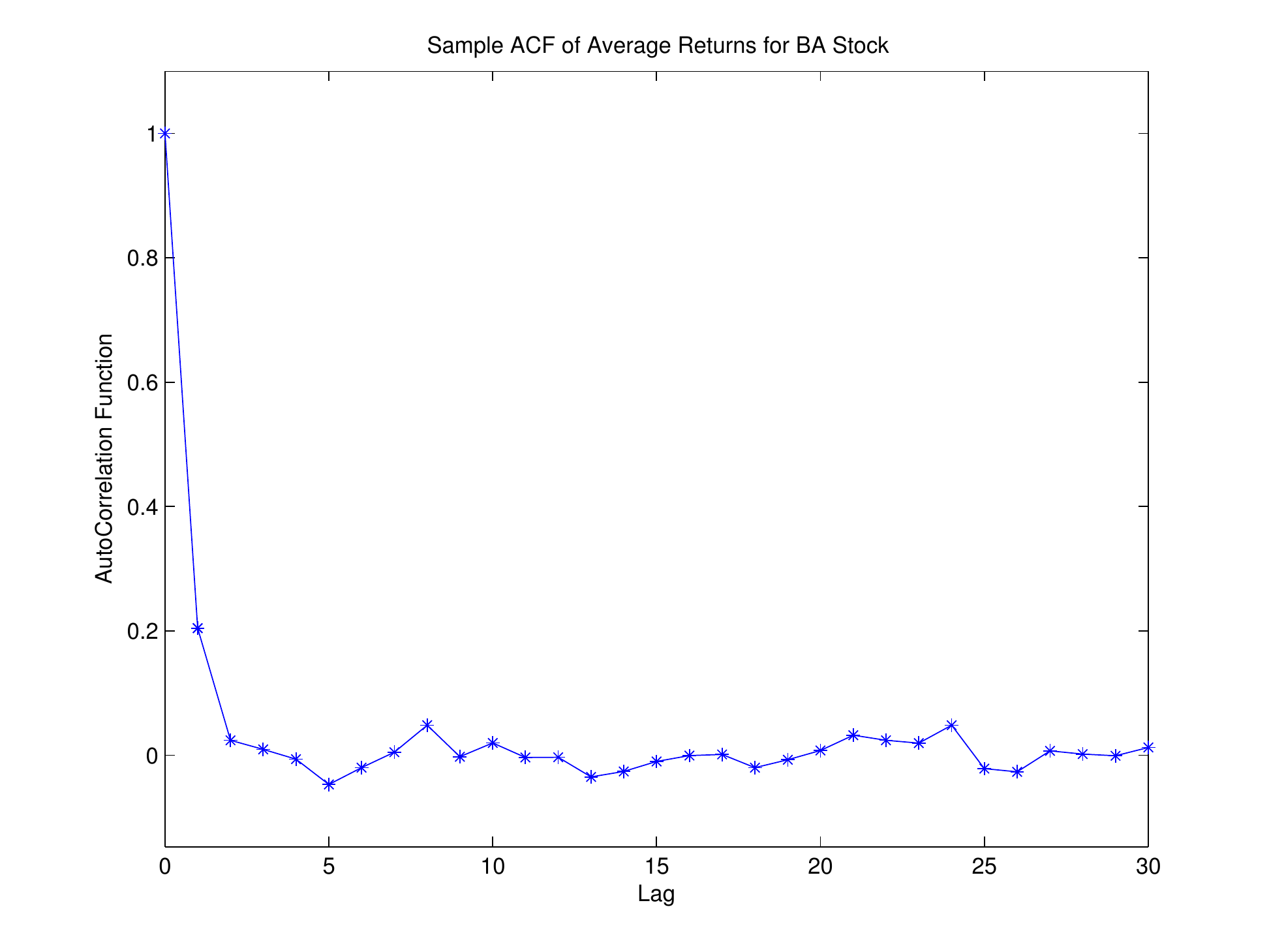}
\includegraphics[ height=1.800in, width=2.300in]{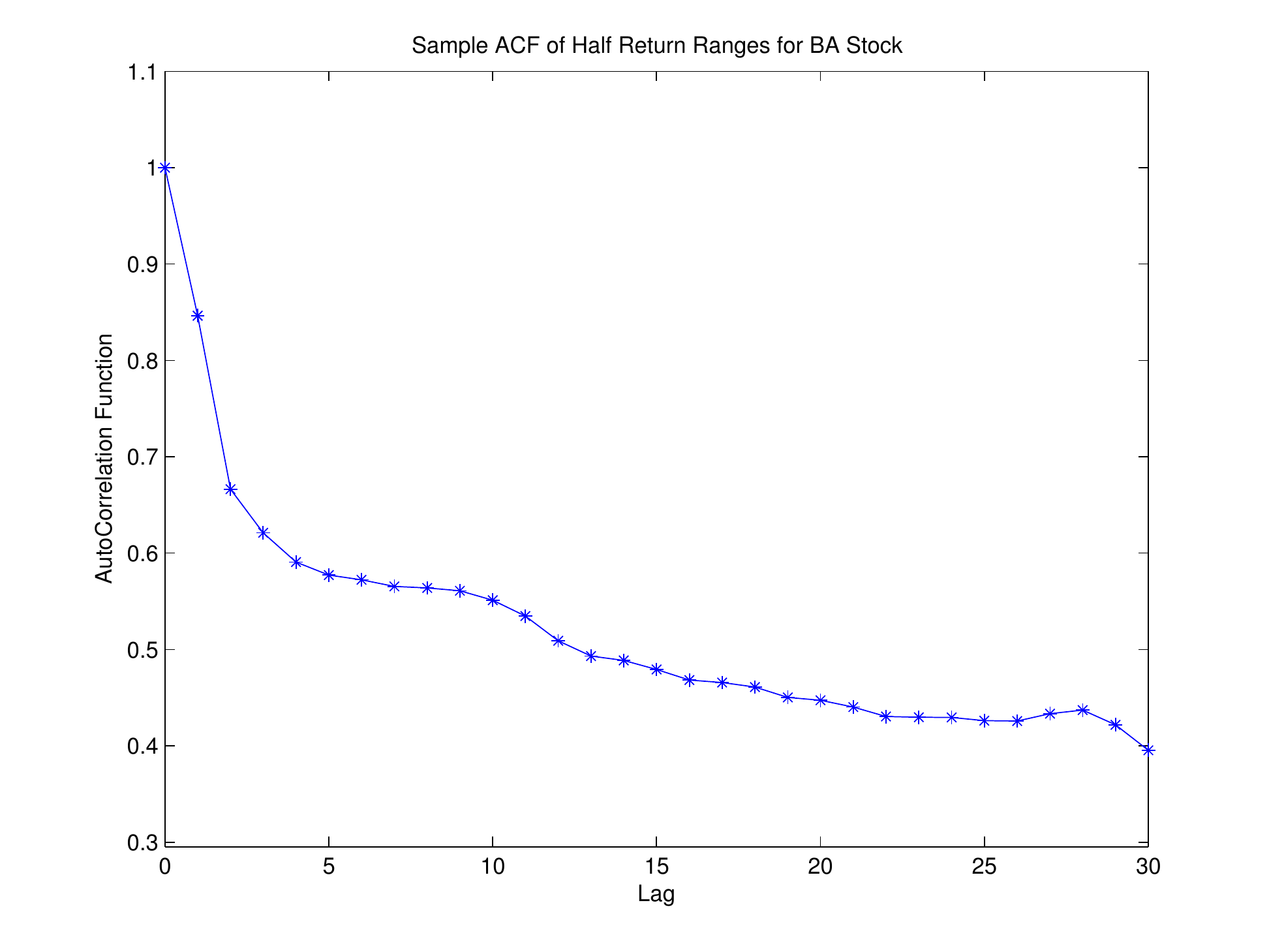}\\
\includegraphics[ height=1.800in, width=2.300in]{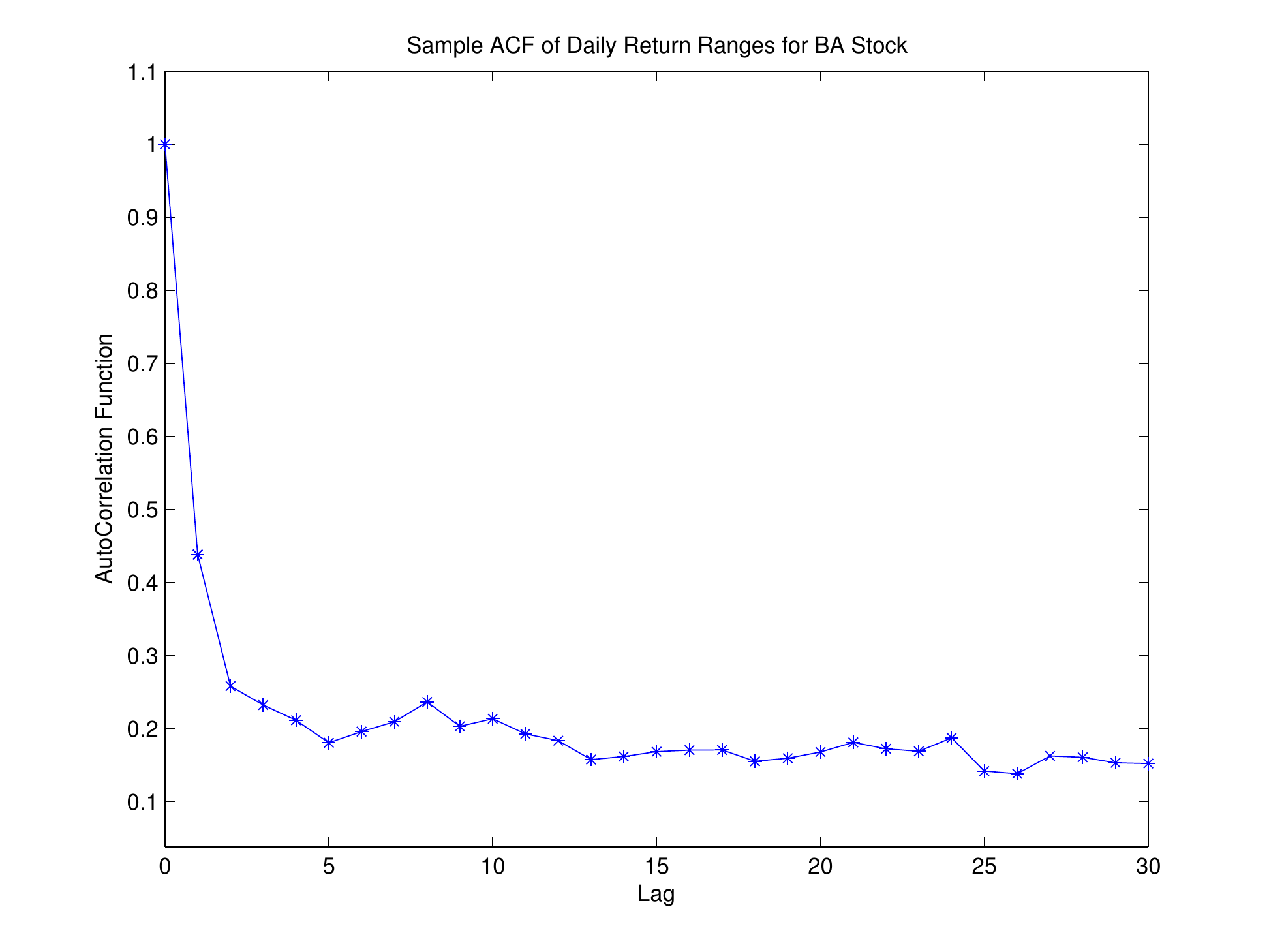}
\caption{Sample auto-correlation function of BA stock.}
\label{fig:acf-BA}
\end{figure}

\begin{table}[ht]
\center
\caption{Estimated Int-GARCH parameters for 10 selected stocks.}
\label{tab:est_model}
\bigskip
\begin{tabular}{rrrrrrrr}
%
\hline
           &            &            &                       \multicolumn{ 5}{c}{Parameter Estimates} \\

     Index &      Stock &            &          k &       $\mu$ &   $\alpha_1$ &    $\beta_1$ &   $\gamma_1$ \\
\hline
           &            &            &            &            &            &            &            \\

 Dow Jones &        AXP &            &     1.7524 &     0.0016 &     0.0635 &     0.4884 &     0.0072 \\

           &         BA &            &     1.6828 &     0.0025 &     0.0686 &     0.4687 &          0 \\

           &        BAC &            &     1.5909 &     0.0017 &     0.0762 &     0.5337 &     0.0084 \\

           &         DD &            &     1.7744 &     0.0019 &     0.0209 &     0.4817 &          0 \\

           &        JPM &            &     1.6873 &     0.0018 &     0.0379 &     0.5195 &     0.0021 \\

           &         KO &            &      1.8150 &     0.0013 &     0.0383 &      0.4620 &          0 \\

           &       MSFT &            &     1.7513 &     0.0017 &      0.0320 &     0.4878 &          0 \\

           &         T  &            &     1.8106 &     0.0021 &     0.0524 &     0.4475 &     0.0001 \\

           &        TRV &            &     1.8167 &     0.0015 &     0.0092 &     0.4877 &          0 \\

           &        WMT &            &     1.8415 &     0.0015 &     0.0376 &      0.4590 &          0 \\
\hline
\end{tabular}
\end{table}

As we mentioned above, most of the trading days with large return ranges but small average returns (i.e., around the years 2002 and 2008) happened during the periods of market crash. This is consistent with one's intuition: investors tend to behave irrationally and take swift actions frequently in the time of crisis, resulting in unusually large variability of the price. We are particularly interested in these periods when GARCH and our Int-GARCH models tend to differ significantly. In figure \ref{fig:int-garch-compare}, we plot the estimated volatility based on both models for two randomly selected stocks for each of the two periods. They are quite consistent in terms of the overall trend, but our Int-GARCH curve shows much more detailed fluctuations than the smoother GARCH curve. And, just as we anticipated, the GARCH-based volatility is generally smaller than our Int-GARCH-based volatility.  

\begin{figure}[ht]
\centering
\includegraphics[ height=2.000in, width=5.400in]{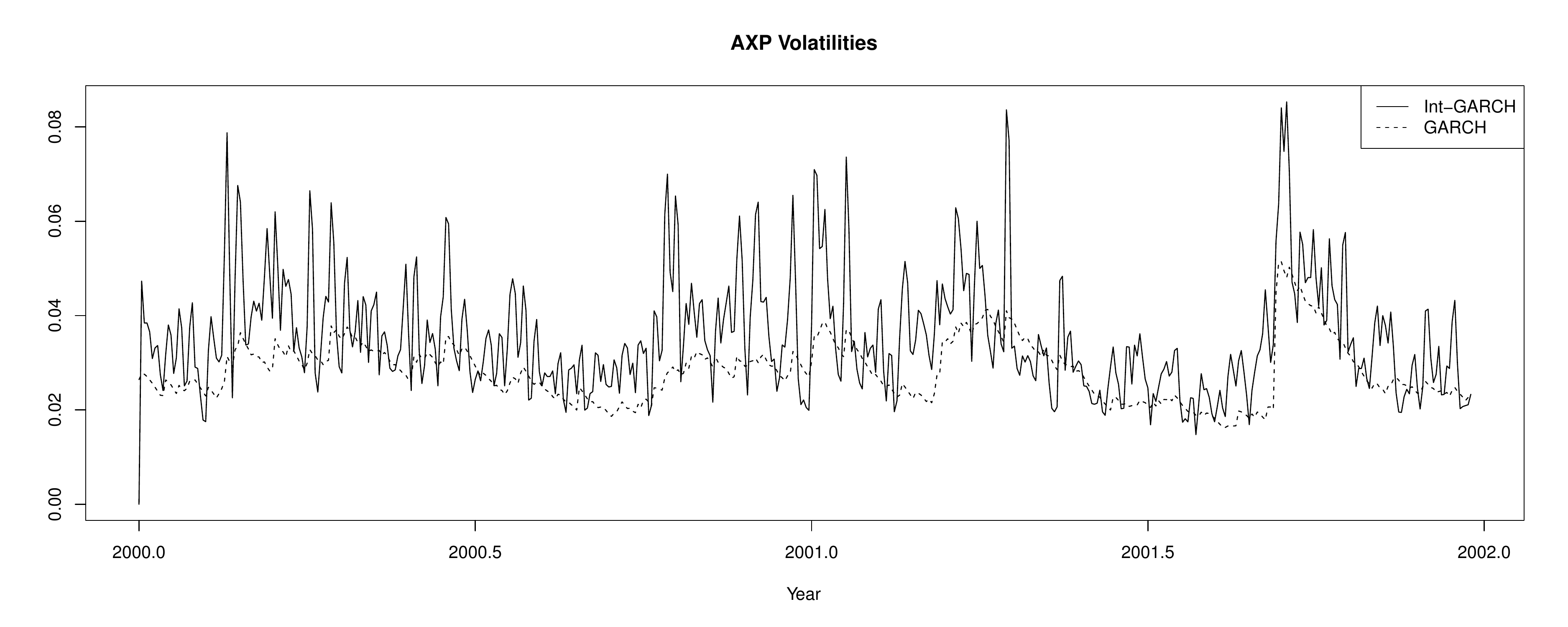}\\
\includegraphics[ height=2.000in, width=5.400in]{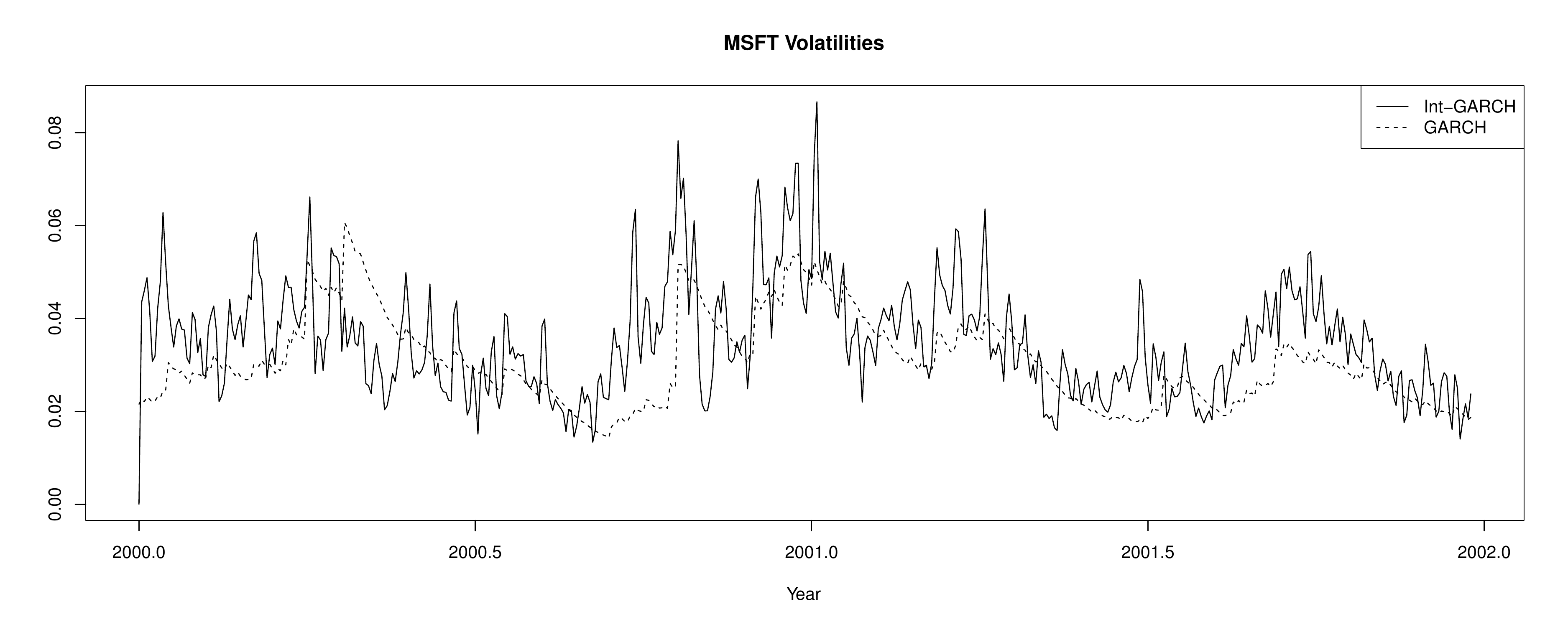}\\
\includegraphics[ height=2.000in, width=5.400in]{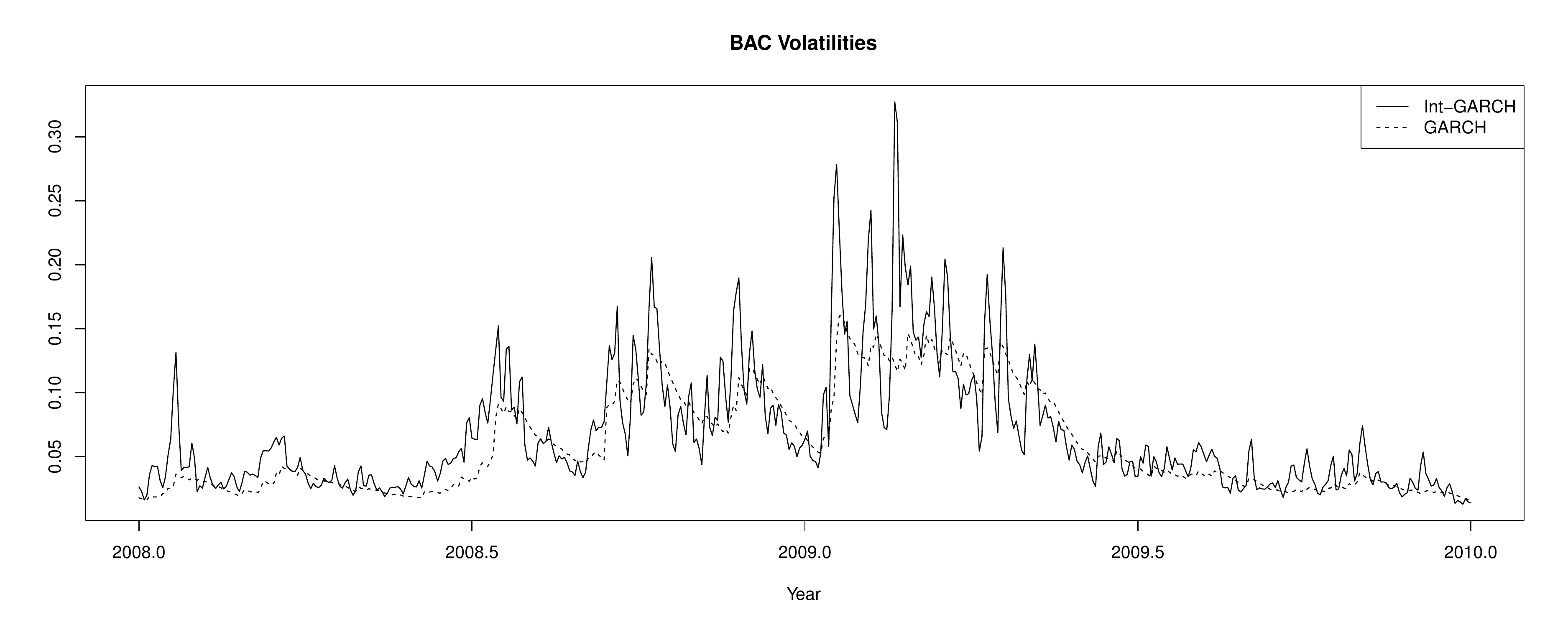}\\
\includegraphics[ height=2.000in, width=5.400in]{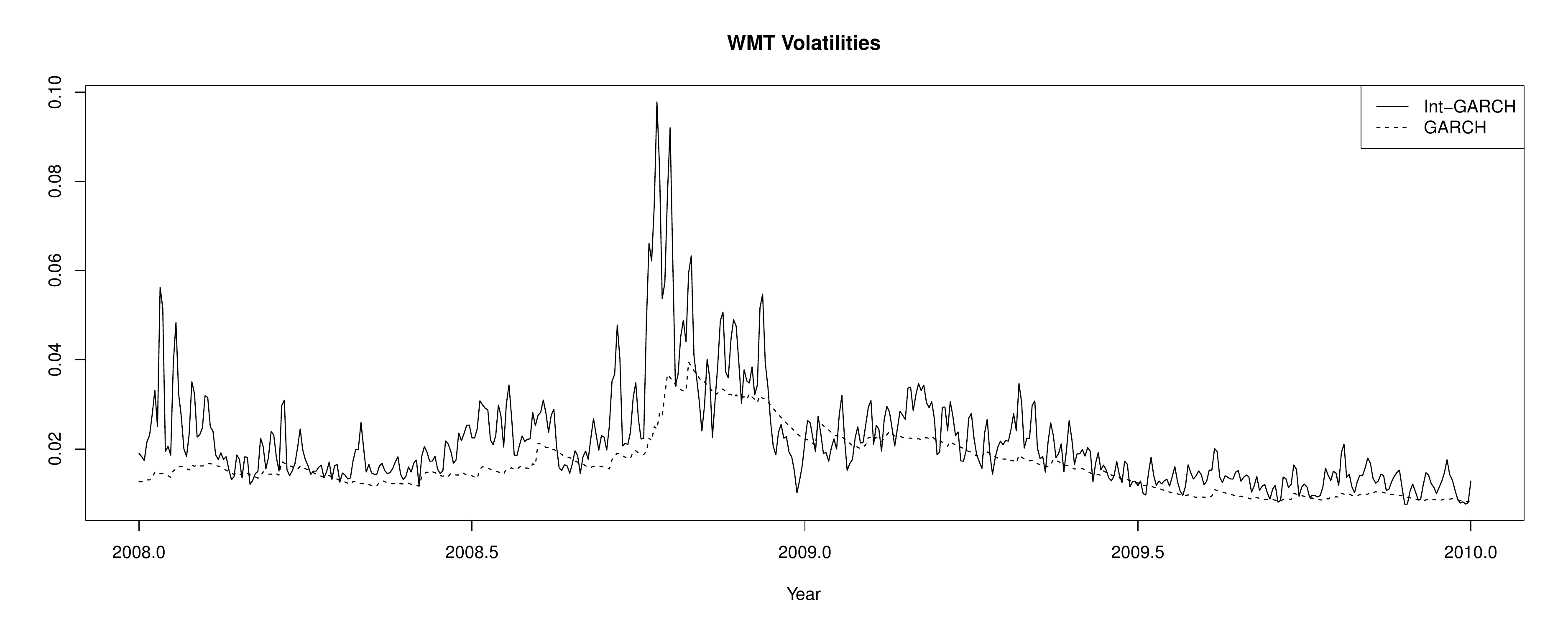}
\caption{Comparison of volatility estimation from Int-GARCH and GARCH models for selected stocks for the recession periods 2000-2002 and 2008-2010.}
\label{fig:int-garch-compare}
\end{figure}

It is hard to say which estimate is better in the plots of Figure \ref{fig:int-garch-compare}. In principle, the quality of volatility modeling is only judged by its usage in the financial applications. So our main purpose here is to analyze and interpret the unique characteristics of our Int-GARCH process, so as to provide practical guidance. To this end, we further compare our Int-GARCH estimate to the realized volatility (RV) with 5-min sampling frequency. Both of these two methods use intraday data to estimate volatility, but they have subtle differences. RV is a model free estimator, and possesses very nice theoretical properties such as strong consistency (e.g., Andersen et al. 2001, Barndorff-Nielsen and Shephard 2002). It uses all or a large portion of the intraday transaction data, and therefore is expected to have a ``fair" reflection of the intraday price variability. However, because of the microstructure noises, it usually cannot achieve its optimal performance. For example, it is generally believed that RV with moderate sampling frequency (e.g., 5-min) has a downward bias (e.g., Bandi and Russell 2006, 2008). Our Int-GARCH model reads in the intraday information through the high-low range. As we discussed before, although range is only a summary statistics, it does contain rich information about the intraday activities. So Int-GARCH model is expected to inherit to a considerable degree the advantage of RV -revealing detailed intraday variability. What is important, it is robust to the microstructure noises, thereby avoiding the intrinsic (downward) bias associated with RV. 

We display the comparative plots of Int-GARCH and RV for two stocks, AXP and WMT, for the period 1/2/2000-12/31/2001, in Figure \ref{fig:int-garch-rv}. The two estimates show very similar fluctuation patterns, the RV being smaller than the Int-GARCH in general most likely due to its downward bias. Particularly, we notice a few interesting trading dates from these plots to illustrate the advantages of Int-GARCH. On April 26, 2001, the RV of AXP shows a very large spike of \$0.127, while the Int-GARCH and GARCH models give much smaller estimates of \$0.031 and \$0.035, respectively. The transaction data on that day reveals that the price mostly fluctuated around \$41-42 for the whole day except for about half an hour in the afternoon it suddenly dropped to \$38.43 but then quickly rose back to \$41.90. The same situation happened to the WMT stock too on April 9, 2001, on which day the price suddenly soared from the stable range of \$50-\$51 to a much higher level of \$82.5487 but only stayed there for 5 minutes. Such a lightening change of price usually is not due to the market volatility, so for this case RV is probably being too sensitive and giving a ``false alarm''. Separately, another interesting phenomenon is observed for AXP stock on March 10, 2000 and for WMT stock on April 4, 2000, respectively. For these two cases, RV and Int-GARCH estimate are relatively close (\$0.059 and \$0.048 for AXP, and \$0.063 and \$0.076 for WMT, respectively), but both are much higher than the GARCH estimate (\$0.030 and \$0.039 for AXP and WMT, respectively). To get insight into this, we plot the corresponding 5-min intraday prices for the entire days in Figure \ref{fig:intraday-data}. In both plots, some significant fluctuations are shown in the later portion of the day, compared to which the closing-to-closing returns are quite small. So clearly the return-based GARCH model is underestimating the volatility here, and the RV and Int-GARCH estimate are more fair. There is yet another situation when Int-GARCH shows typical advantages. On January 22, 2001, the GARCH estimated volatility for the AXP stock is \$0.037, while RV and Int-GARCH yield \$0.072 and \$0.074, respectively. We list the adjusted closing prices for that day as well as a few days nearby in Table \ref{tab:axp-close}. Notice that the return on January 22 is not really small, but the nearby returns are. So we reasonably conjecture that GARCH in this case is trapped in the ``volatility clustering" mechanism and too rigid to capture the high volatility that is not in a ``cluster".

\begin{figure}[ht]
\centering
\includegraphics[ height=2.400in, width=5.800in]{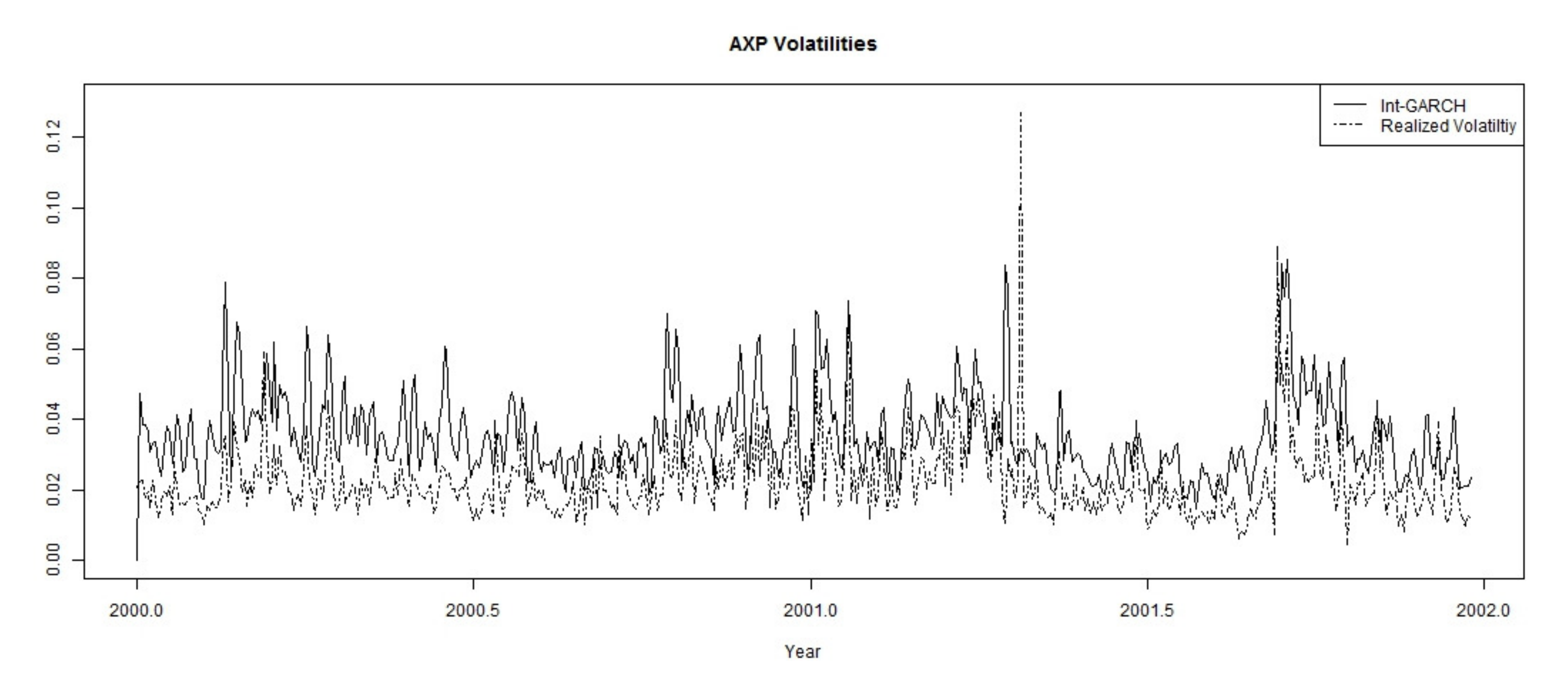}
\includegraphics[ height=2.400in, width=5.800in]{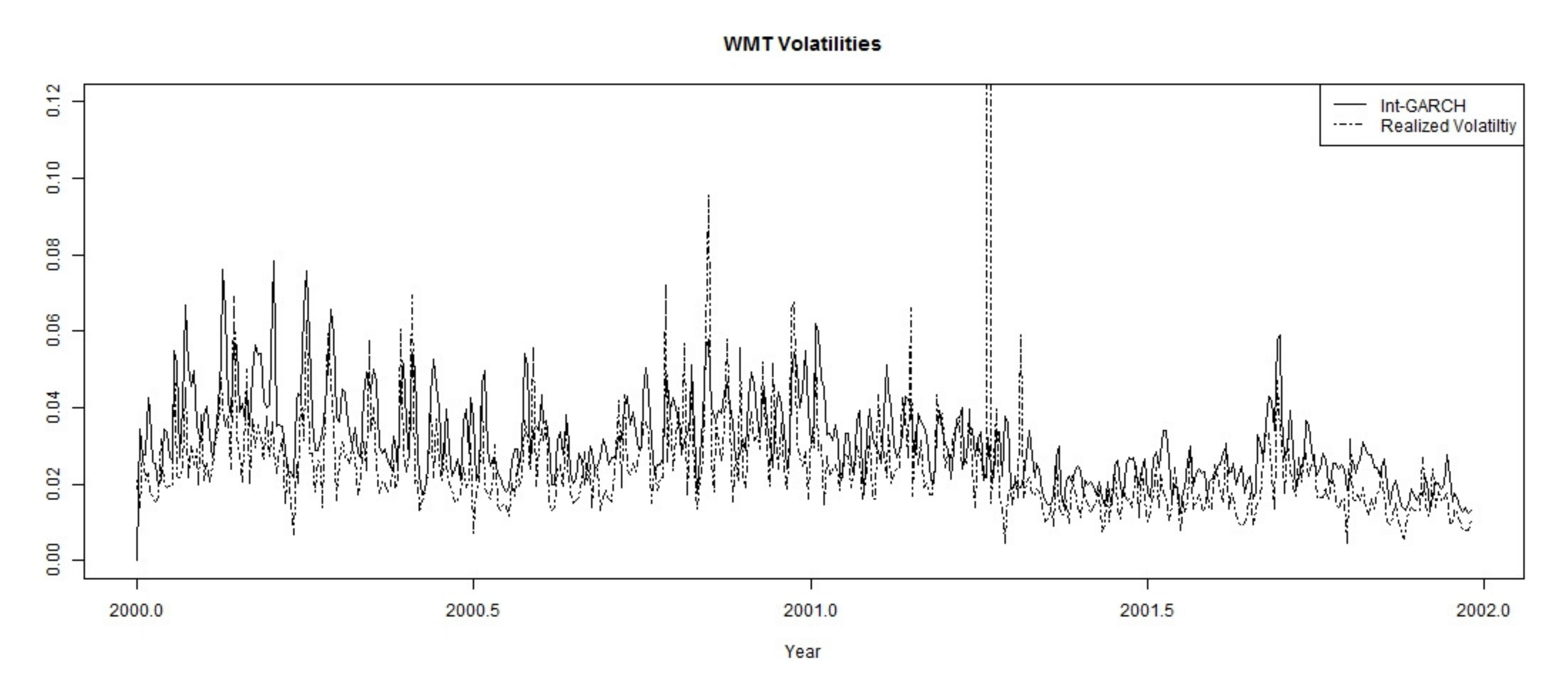}\\
\caption{Plots of the Int-GARCH estimated volatility versus RV  for AXP and WMT stocks for the period 1/2/2000 - 12/31/2001.}
\label{fig:int-garch-rv}
\end{figure}

\begin{figure}[ht]
\centering
\includegraphics[ height=2.000in, width=2.600in]{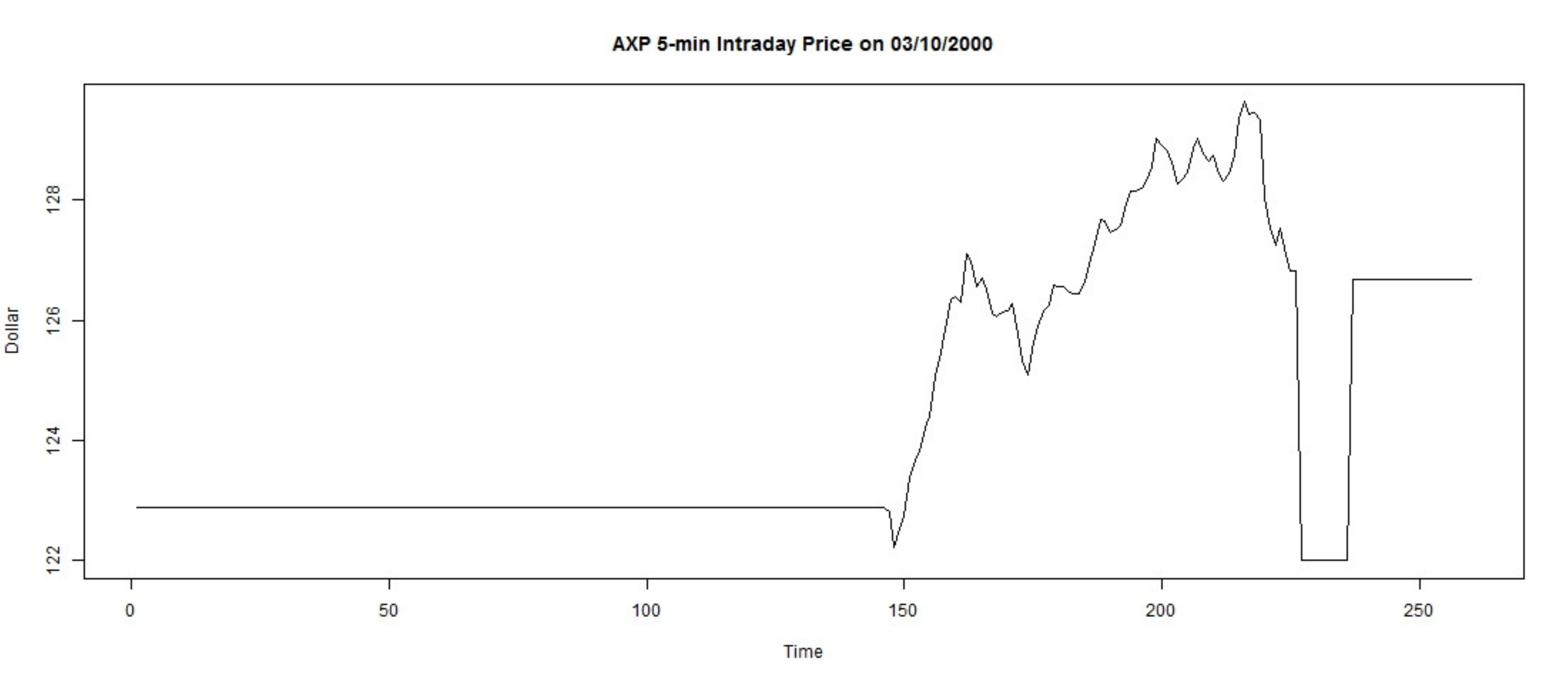}
\includegraphics[ height=2.000in, width=2.600in]{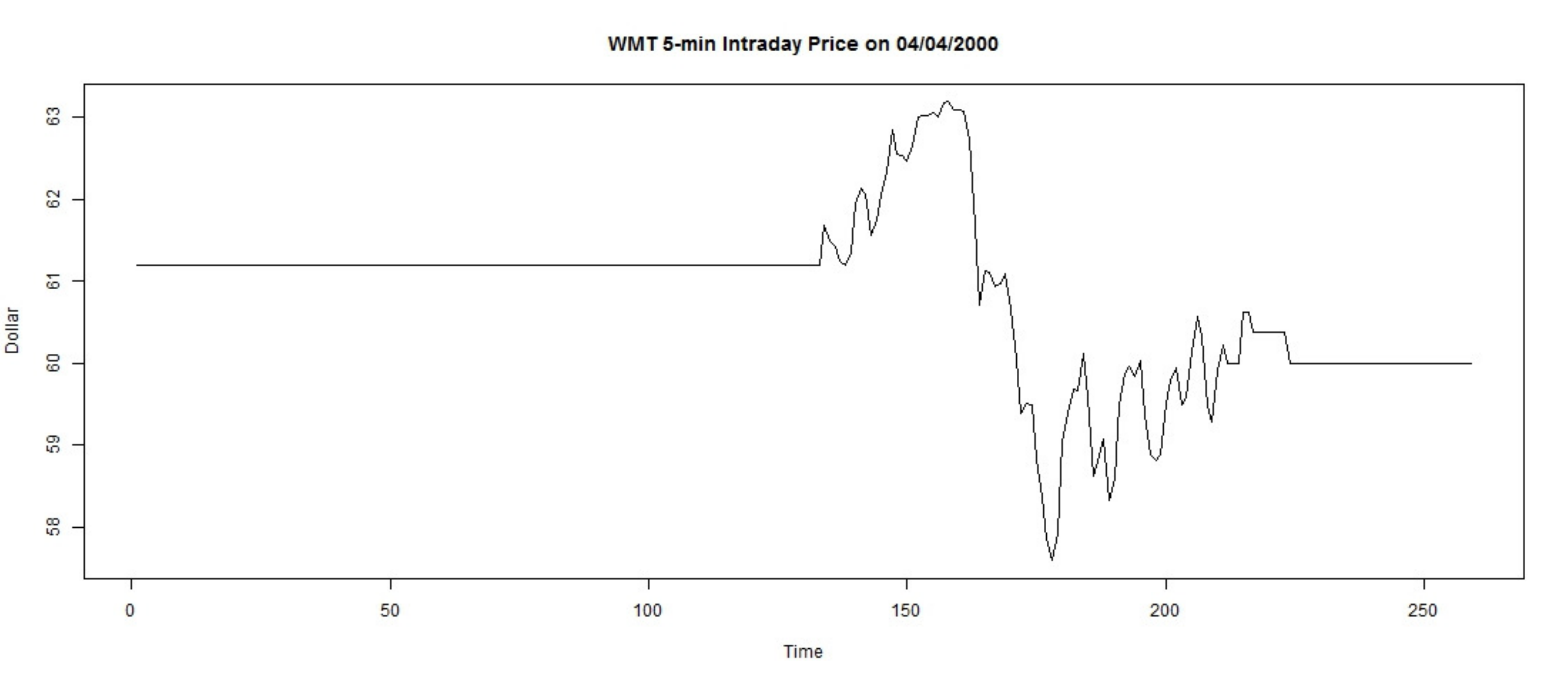}
\caption{Plots of the 5-min intraday price for AXP stock on March 10, 2000 and WMT stock on April 4, 2000, respectively.}
\label{fig:intraday-data}
\end{figure}

\begin{table}[ht]
\center
\caption{AXP daily adjusted closing prices for a selected time range.}
\label{tab:axp-close}
\bigskip
\begin{tabular}{lrrrrrrrrr}
\hline
      Date &    1/16/01 &    1/17/01 &    1/18/01 & {\bf 1/19/01} & {\bf 1/22/01} &    1/23/01 &    1/24/01 &    1/25/01 &    1/26/01 \\
\hline
 Adj Close &      35.34 &      36.16 &       36.3 & {\bf 35.57} & {\bf 32.92} &      34.02 &      33.83 &      33.92 &      34.11 \\
\hline
\end{tabular}
\end{table}

\section{Conclusion}
The financial market today generates a huge amount of data every second, necessitating the development of new models and methods to analyze and take advantage of them. In this paper, we developed an interval-valued GARCH model for analyzing return range processes. It can be viewed as an extension of the point-valued GARCH model that allows for interval-valued returns. Yet its most important contribution lies in its capability to integrate realized measures and return-based model mechanism to produce ``information-rich" estimation of the volatility. Inferences of our Int-GARCH model can be made based on the conditional least squares method we proposed, which was shown to have quite stable and reliable performances. Our empirical study of the Dow Jones stocks data demonstrated the advantage of Int-GARCH model: it enriches the low-frequency volatility model such as GARCH with the high-frequency information, without having to suffer from the microstructure noises. This is especially valuable during crash time, or in general when the market is unstable, because drastic intraday variation is hardly captured by the low-frequency return only. We compared the performance of Int-GARCH to those of GARCH and RV. It was further shown that Int-GARCH has improved model flexibility: the GARCH mechanism prevents it from being too sensitive to data like the model-free RV;  the role of the intraday range also makes it less rigid than GARCH itself.



\appendix

\section{Proofs}
\subsection{Proof of Theorem \ref{thm:mean}}
\begin{proof}
We notice, by recursive calculations, that
\begin{eqnarray*}
h_{t} & = & \mu+x_{t-1}h_{t-1}\\
 & = & \mu+x_{t-1}\left(\mu+x_{t-2}h_{t-2}\right)\\
 & = & \mu+\mu x_{t-1}+x_{t-1}x_{t-2}h_{t-2}\\
 & = & \cdots\\
 & = & \mu+\mu x_{t-1}+\mu x_{t-1}x_{t-2}+\cdots+\mu x_{t-1}x_{t-2}\cdots x_{t-N}\\
 &  & +x_{t-1}x_{t-2}\cdots x_{t-(N+1)}h_{t-(N+1)}\\
 & = & \mu\left(1+\sum_{i=1}^{N}\prod_{j=1}^{i}x_{t-j}\right)+h_{t-(N+1)}\prod_{j=1}^{N+1}x_{t-j},\qquad\forall N\in\mathbb{N}.
\end{eqnarray*}
Taking expectations on both sides, we get
\begin{eqnarray*}
Eh_{t} & = & E\left(\mu\left(1+\sum_{i=1}^{N}\prod_{j=1}^{i}x_{t-j}\right)+h_{t-(N+1)}\prod_{j=1}^{N+1}x_{t-j}\right)\\
 & = & \mu\sum_{i=0}^{N}\left(Ex_{t}\right)^{i}+\left(Ex_{t}\right)^{N+1}Eh_{t-(N+1)},
\end{eqnarray*}
for all $N\in\mathbb{N}$. Letting $N\to\infty$, 
\begin{eqnarray*}
Eh_{t} & = & \mu\sum_{i=0}^{\infty}\left(Ex_{t}\right)^{i}+\lim_{N\rightarrow\infty}\left(Ex_{t}\right)^{N+1}Eh_{t-(N+1)}\\
 & = & \mu\sum_{i=0}^{\infty}\left(Ex_{t}\right)^{i}\\
 & = &  \dfrac{\mu}{1-E\left(x_{t}\right)}\\
 & = & \dfrac{\mu}{1-\alpha_{1}\sqrt{2/\pi}-\beta_{1}k-\gamma_{1}},
\end{eqnarray*}
since $|Ex_t|<1$. On the other hand, if $Ex_t\geq 1$, then $Eh_t>\mu\sum_{i=0}^{\infty}(Ex_t)^i=\infty$. Therefore, $Eh_t<\infty$ if and only if
\begin{eqnarray*}
\left|Ex_{t}\right| & = & \left|E\left(\alpha_{1}\left|\varepsilon_{t-1}\right|+\beta_{1}\eta_{t-1}+\gamma_{1}\right)\right|\\
 & = & \left|\alpha_{1}\sqrt{2/\pi}+\beta_{1}k+\gamma_{1}\right|\\
 & < & 1,
\end{eqnarray*}
and when this is satisfied, the Aumann expectation of $r_{t}$ is found to be
\begin{eqnarray*}
Er_{t} 
 & = & E\left[h_{t}\left(\varepsilon_{t}-\eta_{t}\right),h_{t}\left(\varepsilon_{t}+\eta_{t}\right)\right]\\
 & = & \left[E\left(h_{t}\right)E\left(\varepsilon_{t}-\eta_{t}\right), E\left(h_{t}\right)E\left(\varepsilon_{t}+\eta_{t}\right)\right]\\
 & = & \left[-kE\left(h_{t}\right),kE\left(h_{t}\right)\right].
\end{eqnarray*}
\end{proof}

\subsection{Proof of Theorem \ref{thm:var}}
\begin{proof}
Recall, from the proof of Theorem \ref{thm:mean}, that 
$$h_{t}=\mu\left(1+\sum_{i=1}^{N}\prod_{j=1}^{i}x_{t-j}\right)+h_{t-(N+1)}\prod_{j=1}^{N+1}x_{t-j},\ \forall N\in\mathbb{N}.$$
Consequently, 
\begin{eqnarray*}
h_{t}^{2} & = & \left(\mu\left(1+\sum_{i=1}^{N}\prod_{j=1}^{i}x_{t-j}\right)+h_{t-(N+1)}\prod_{j=1}^{N+1}x_{t-j}\right)^{2}\\
 & = & \mu^{2}\left(1+\sum_{i=1}^{N}\prod_{j=1}^{i}x_{t-j}\right)^{2}+2\mu\left(1+\sum_{i=1}^{N}\prod_{j=1}^{i}x_{t-j}\right)\left(h_{t-(N+1)}\prod_{j=1}^{N+1}x_{t-j}\right)\\
 &  & +\left(h_{t-(N+1)}\prod_{j=1}^{N+1}x_{t-j}\right)^{2}\\
 & = & \mu^{2}\left(1+\sum_{i=1}^{N}\prod_{j=1}^{i}x_{t-j}\right)^{2}+2\mu h_{t-(N+1)}\prod_{j=1}^{N+1}x_{t-j}\\
 &  & +2\mu h_{t-(N+1)}\prod_{j=1}^{N+1}x_{t-j}\left(\sum_{i=1}^{N}\prod_{j=1}^{i}x_{t-j}\right)+h_{t-(N+1)}^{2}\prod_{j=1}^{N+1}x_{t-j}^{2},
 \forall N\in\mathbb{N}.
\end{eqnarray*}
Note that
\begin{eqnarray*}
 &  & \left(1+\sum_{i=1}^{N}\prod_{j=1}^{i}x_{t-j}\right)^{2}\\
 & = & \left(1+x_{t-1}+x_{t-1}x_{t-2}+\cdots+x_{t-1}x_{t-2}\cdots x_{t-N}\right)^{2}\\
 & = & 1+\sum_{i=1}^{N}\prod_{j=1}^{i}x_{t-j}^{2}+2\sum_{i=1}^{N}\prod_{j=1}^{i}x_{t-j}+2\sum_{i=1}^{N-1}\left[\left(\prod_{j=1}^{i}x_{t-j}^{2}\right)\left(\sum_{k=i+1}^{N}\,\prod_{l=i+1}^{k}x_{t-l}\right)\right],
\end{eqnarray*}
and
\begin{equation*}
  \,\,\,\,\prod_{j=1}^{N+1}x_{t-j}\left(\sum_{i=1}^{N}\prod_{j=1}^{i}x_{t-j}\right)
  =\sum_{i=1}^{N}\left[\left(\prod_{j=1}^{i}x_{t-j}^{2}\right)\left(\prod_{k=i+1}^{N+1}x_{t-k}\right)\right].
\end{equation*}
Therefore, 
\begin{eqnarray*}
h_{t}^{2} & = & \mu^{2}\left\{1+\sum_{i=1}^{N}\left(\prod_{j=1}^{i}x_{t-j}^{2}+2\prod_{j=1}^{i}x_{t-j}\right)+2\sum_{i=1}^{N-1}\left[\left(\prod_{j=1}^{i}x_{t-j}^{2}\right)\left(\sum_{k=i+1}^{N}\,\prod_{l=i+1}^{k}x_{t-l}\right)\right]\right\}\\
 &  & +2\mu h_{t-(N+1)}\prod_{j=1}^{N+1}x_{t-j}+2\mu h_{t-(N+1)}\sum_{i=1}^{N}\left[\left(\prod_{j=1`}^{i}x_{t-j}^{2}\right)\left(\prod_{k=i+1}^{N+1}x_{t-k}\right)\right]\\
 &  & +h_{t-(N+1)}^{2}\prod_{j=1}^{N+1}x_{t-j}^{2}\\
 & = & \mu^{2}\left\{1+\sum_{i=1}^{N}\left(\prod_{j=1}^{i}x_{t-j}^{2}+2\prod_{j=1}^{i}x_{t-j}\right)+2\sum_{i=1}^{N-1}\left[\left(\prod_{j=1}^{i}x_{t-j}^{2}\right)\left(\sum_{k=i+1}^{N}\,\prod_{l=i+1}^{k}x_{t-l}\right)\right]\right\}\\
 &  & +2\mu h_{t-(N+1)}\left(\prod_{j=1}^{N+1}x_{t-j}+\sum_{i=1}^{N}\left[\left(\prod_{j=1`}^{i}x_{t-j}^{2}\right)\left(\prod_{k=i+1}^{N+1}x_{t-k}\right)\right]\right)+h_{t-(N+1)}^{2}\prod_{j=1}^{N+1}x_{t-j}^{2},\\
 & & \forall N\in\mathbb{N}.
\end{eqnarray*}
Let $C_{1}=E\left(x_{t}\right)$ and $C_{2}=E\left(x_{t}^{2}\right)$.
By the independence of $h_{t-(N+1)}$ and $x_{t-j}$, $\forall j\leq N+1$ and the fact that $\left\{ x_{t}\right\} $ are iid, we obtain,
\begin{eqnarray*}
E\left(h_{t}^{2}\right) & = & E\left\{\mu^{2}\left(1+\sum_{i=1}^{N}\left(\prod_{j=1}^{i}x_{t-j}^{2}+2\prod_{j=1}^{i}x_{t-j}\right)+2\sum_{i=1}^{N-1}\left[\left(\prod_{j=1}^{i}x_{t-j}^{2}\right)\left(\sum_{k=i+1}^{N}\,\prod_{l=i+1}^{k}x_{t-l}\right)\right]\right)\right.\\
 &  & \left.+2\mu h_{t-(N+1)}\left(\prod_{j=1}^{N+1}x_{t-j}+\sum_{i=1}^{N}\left[\left(\prod_{j=1`}^{i}x_{t-j}^{2}\right)\left(\prod_{k=i+1}^{N+1}x_{t-k}\right)\right]\right)+h_{t-(N+1)}^{2}\prod_{j=1}^{N+1}x_{t-j}^{2}\right\}\\
 & = & \mu^{2}+\mu^{2}\sum_{i=1}^{N}E\left[\prod_{j=1}^{i}x_{t-j}^{2}\right]+2\mu^{2}\sum_{i=1}^{N}E\left[\prod_{j=1}^{i}x_{t-j}\right]\\
 &  & +2\mu^{2}\sum_{i=1}^{N-1}E\left[\left(\prod_{j=1}^{i}x_{t-j}^{2}\right)\left(\sum_{k=i+1}^{N}\,\prod_{l=i+1}^{k}x_{t-l}\right)\right]\\
 &  & +2\mu E\left[h_{t-(N+1)}\right]\cdot\left(E\left[\prod_{j=1}^{N+1}x_{t-j}\right]+\sum_{i=1}^{N}E\left[\left(\prod_{j=1}^{i}x_{t-j}^{2}\right)\left(\prod_{k=i+1}^{N+1}x_{t-k}\right)\right]\right)\\
 &  & +E\left[h_{t-(N+1)}^{2}\right]\cdot E\left[\prod_{j=1}^{N+1}x_{t-j}^{2}\right]\\
 & = & \mu^{2}+\mu^{2}\sum_{i=1}^{N}C_{2}^{i}+2\mu^{2}\sum_{i=1}^{N}C_{1}^{i}+2\mu^{2}\sum_{i=1}^{N-1}\left[C_{2}^{i}\left(\sum_{k=i+1}^{N}C_{1}^{k-i}\right)\right]\\
 &  & +2\mu E\left[h_{t-(N+1)}\right]\cdot\left(C_{1}^{N+1}+\sum_{i=1}^{N}C_{2}^{i}C_{1}^{N-i+1}\right)+E\left[h_{t-(N+1)}^{2}\right]\cdot C_{2}^{N+1},\\
 &  & \forall N\in\mathbb{N}.
\end{eqnarray*}
The geometric series $\sum_{i=1}^{N}C_{2}^{i}$ and $\sum_{i=1}^{N}C_{1}^{i}$ converge if and only if $\left|C_{1}\right|<1$ and $\left|C_{2}\right|<1$. For the fourth term in the above expression,
\begin{eqnarray*}
\sum_{i=1}^{N-1}\left[C_{2}^{i}\left(\sum_{k=i+1}^{N}C_{1}^{k-i}\right)\right] 
 & = & \sum_{i=1}^{N-1}\left[C_{2}^{i}\left(\dfrac{C_{1}\left(C_{1}^{N-i}-1\right)}{C_{1}-1}\right)\right]\\
 & = & \dfrac{C_{1}}{C_{1}-1}\sum_{i=1}^{N-1}\left[C_{2}^{i}C_{1}^{N-i}-C_{2}^{i}\right]\\
 & = & \dfrac{C_{1}}{C_{1}-1}\left[C_{1}^{N}\cdot\sum_{i=1}^{N-1}\left(\dfrac{C_{2}}{C_{1}}\right)^{i}-\sum_{i=1}^{N-1}C_{2}^{i}\right],
 \forall N\in\mathbb{N}.\\
\end{eqnarray*}
This is a finite number when $N\to\infty$ if and only if $|C_2|<|C_1|$. Finally,
\begin{eqnarray*}
\sum_{i=1}^{N}C_{2}^{i}C_{1}^{N-i+1} & = & C_{1}^{N+1}\sum_{i=1}^{N}\left(\dfrac{C_{2}}{C_{1}}\right)^{i},
\forall N\in\mathbb{N}
\end{eqnarray*}
is finite for $N\to\infty$ if and only if $|C_{2}|<|C_{1}|$. Therefore, $E\left(h_t^2\right)<\infty$ if and only if $|C_2|<|C_1|<1$, or $C_2<C_1<1$, since both $C_1$ and $C_2$ are positive.\\ 
\indent Under this assumption and in addition that $Eh^2_{-\infty}<\infty$, letting $N\to\infty$, we find the second moment of $h_t^2$ to be
\begin{eqnarray*}
E\left(h_{t}^{2}\right) 
 & = & \mu^{2}\left(1-\dfrac{C_{2}}{C_{2}-1}-2\dfrac{C_{1}}{C_{1}-1}+2C_{1}C_{2}\dfrac{1}{\left(C_{2}-1\right)\left(C_{1}-1\right)}\right)\\
 & = & \mu^{2}\left(\dfrac{C_{1}C_{2}-C_{1}-C_{2}+1-C_{2}C_{1}+C_{2}-2C_{1}C_{2}+2C_{1}+2C_{1}C_{2}}{\left(C_{2}-1\right)\left(C_{1}-1\right)}\right)\\
 & = & \mu^{2}\dfrac{C_{1}+1}{\left(C_{2}-1\right)\left(C_{1}-1\right)}.
\end{eqnarray*}
Consequently, the unconditional variance of $r_t$ is
\begin{eqnarray*}
\mbox{Var}(r_{t})  & = & \mbox{Var}\left(h_{t}\varepsilon_{t}\right)+\mbox{Var}\left(h_{t}\eta_{t}\right)\\
 & = & \left(1+k+k^{2}\right)E\left(h_{t}^{2}\right)-k^{2}\left[E\left(h_{t}\right)\right]^{2}.
\end{eqnarray*}
This completes the proof.
\end{proof}

\subsection{Proof of Theorem \ref{thm:cov}}
\begin{proof}
First we notice, $\forall t, s\in\mathbb{N}$,
\begin{eqnarray*}
  r_{t} &= &\left[h_{t}\varepsilon_{t}-h_{t}\eta_{t},h_{t}\varepsilon_{t}+h_{t}\eta_{t}\right],\\
  r_{t+s} &=&\left[h_{t+s}\varepsilon_{t+s}-h_{t+s}\eta_{t+s},h_{t+s}\varepsilon_{t+s}+h_{t+s}\eta_{t+s}\right],
\end{eqnarray*}
and therefore,
\begin{equation}\label{eqn1}
\mbox{Cov}\left(r_{t},r_{t+s}\right)=\mbox{Cov}\left(h_{t}\varepsilon_{t},h_{t+s}\varepsilon_{t+s}\right)+\mbox{Cov}\left(h_{t}\eta_{t},h_{t+s}\eta_{t+s}\right).
\end{equation}
The first term
\begin{eqnarray}
\mbox{Cov}\left(h_{t}\varepsilon_{t},h_{t+s}\varepsilon_{t+s}\right) 
& = & E\left[\left(h_{t}\varepsilon_{t}-E\left(h_{t}\epsilon_{t}\right)\right)\left(h_{t+s}\varepsilon_{t+s}-E\left(h_{t+s}\epsilon_{t+s}\right)\right)\right]\nonumber\\
 & = & E\left(h_{t}h_{t+s}\varepsilon_{t}\cdot\varepsilon_{t+s}\right)\nonumber\\
 & = & \begin{cases}
E\left(h_{t}^{2}\varepsilon_{t}^{2}\right),\,\, & s=0\\
E\left(h_{t}h_{t+s}\varepsilon_{t}\right)\cdot E\left(\varepsilon_{t+s}\right),\,\, & |s|>0
\end{cases}\nonumber\\
 & = & \begin{cases}
E\left(\varepsilon_{t}^{2}\right)\cdot E\left(h_{t}^{2}\right),\,\, & s=0\\
0, & |s|>0
\end{cases}\nonumber\\
 & = & \begin{cases}\label{eqn2}
E\left(h_{t}^{2}\right),\,\, & s=0\\
0, & |s|>0
\end{cases}
\end{eqnarray}
since $\left\{ \varepsilon_{t}\right\} $ are i.i.d. \\

Similarly, the second term becomes
\begin{eqnarray}
 &  & \mbox{Cov}\left(h_{t}\eta_{t},h_{t+s}\eta_{t+s}\right)\nonumber\\
 & = & E\left[\left(h_{t}\eta_{t}-kE\left(h_{t}\right)\right)\left(h_{t+s}\eta_{t+s}-kE\left(h_{t+s}\right)\right)\right]\nonumber\\
 & = & E\left(h_{t}h_{t+s}\eta_{t}\eta_{t+s}\right)-kE\left(h_{t}\right)E\left(h_{t+s}\eta_{t+s}\right)-kE\left(h_{t+s}\right)E\left(h_{t}\eta_{t}\right)+k^{2}E\left(h_{t}\right)E\left(h_{t+s}\right)\nonumber\\
 & = & E\left(h_{t}h_{t+s}\eta_{t}\eta_{t+s}\right)-kE\left(h_{t}\right)kE\left(h_{t+s}\right)-kE\left(h_{t+s}\right)kE\left(h_{t}\right)+k^{2}E\left(h_{t}\right)E\left(h_{t+s}\right)
 \nonumber\\
 & = & E\left(h_{t}h_{t+s}\eta_{t}\eta_{t+s}\right)-k^{2}E\left(h_{t}\right)E\left(h_{t+s}\right)\nonumber\\
 & = & \begin{cases}
E\left(h_{t}^{2}\eta_{t}^{2}\right)-k^{2}\left[E\left(h_{t}\right)\right]^{2}, & s=0\\
E\left(h_{t}h_{t+s}\eta_{t}\right)E\left(\eta_{t+s}\right)-k^{2}E\left(h_{t}\right)E\left(h_{t+s}\right), & |s|>0
\end{cases}\nonumber\\
 & = & \begin{cases}\label{eqn3}
\left(k+k^{2}\right)E\left(h_{t}^{2}\right)-k^{2}\left[E\left(h_{t}\right)\right]^{2}, & s=0\\
kE\left(h_{t}h_{t+s}\eta_{t}\right)-k^{2}\left[E\left(h_{t}\right)\right]^2, & |s|>0.
\end{cases}
\end{eqnarray}
Plugging (\ref{eqn2}) and (\ref{eqn3}) into (\ref{eqn1}), we obtain
\begin{align*}
\mbox{Cov}\left(r_{t},r_{t+s}\right) 
 & =\begin{cases}
\left(1+k+k^{2}\right)E\left(h_{t}^{2}\right)-k^{2}\left[E\left(h_{t}\right)\right]^{2}, & s=0\\
kE\left(h_{t}h_{t+s}\eta_{t}\right)-k^{2}\left[E\left(h_{t}\right)\right]^{2}, & |s|>0, 
\end{cases}
\end{align*}
where $E\left(h_{t}\right)=\dfrac{\mu}{1-C_{1}}$, $E\left(h_{t}^{2}\right)=\mu^{2}\dfrac{C_{1}+1}{\left(C_{2}-1\right)\left(C_{1}-1\right)}$,
and 
\begin{equation*}
  E\left(h_{t}h_{t+s}\eta_{t}\right)
  =\dfrac{\mu^{2}k}{C_{1}-1}\left(-\dfrac{C_{1}^{s}-1}{C_{1}-1}+\dfrac{C_{1}^{s}+C_{1}^{s-1}}{C_{2}-1}\cdot\left[\alpha_{1}\sqrt{\dfrac{2}{\pi}}
  +  \beta_{1}\left(1+k\right)+\gamma_{1}\right]\right)
\end{equation*}
(see Lemma~\ref{eta-X and h-h-eta}, Appendix).
\end{proof}

\subsection{Proof of Theorem \ref{thm:mean-gen}}
\begin{proof}
By the definition of (\ref{def:X_it}), (\ref{igarch_5}) can be rewritten as
\begin{eqnarray*}
h_t&=&\mu+\sum_{i=1}^{p}\alpha_i|\epsilon_{t-i}|h_{t-i}+\sum_{i=1}^{q}\beta_i\eta_{t-i}h_{t-i}+\sum_{i=1}^{w}\gamma_ih_{t-i}\\
&=&\mu+\sum_{i=1}^{k}x_{i,t-i}h_{t-i}.
\end{eqnarray*}
Expanding $h_t$ recursively, we obtain
\begin{eqnarray}
h_t&=&\mu+\sum_{i=1}^{k}x_{i,t-i}\left(\mu+\sum_{j=1}^{k}x_{j,t-i-j}h_{t-i-j}\right)\nonumber\\
&=&\mu\left(1+\sum_{i=1}^{k}x_{i,t-i}\right)+\sum_{i=1}^{k}\sum_{j=1}^{k}x_{i,t-i}x_{j,t-i-j}h_{t-i-j}\nonumber\\
&=&\cdots\nonumber\\
&=&\mu\left[1+\sum_{n=1}^{N}\sum_{i_1=1}^{k}\cdots\sum_{i_n=1}^{k}\left(\prod_{j=1}^{n}x_{i_j, t-i_1-\cdots-i_j}\right)\right]\nonumber\\
&&+\sum_{i_1=1}^{k}\cdots\sum_{i_n{N+1}=1}^{k}\left(\prod_{j=1}^{N+1}x_{i_j, t-i_1-\cdots-i_j}\right)h_{t-i_1-\cdots-i_{N+1}}.\label{exp-ht}
\end{eqnarray}
Notice that $x_{i,t}$ and $x_{j,s}$ are independent, $\forall i,j\in\mathbb{N}$ and $t\neq s$. Taking expectations on both sizes of (\ref{exp-ht}), we get
\begin{eqnarray*}
E\left(h_t\right)&=&\mu\left[1+\sum_{n=1}^{N}\sum_{i_1=1}^{k}\cdots\sum_{i_n=1}^{k}\left(\prod_{j=1}^{n}\mu_{i_j}\right)\right]\\
&&+\sum_{i_1=1}^{k}\cdots\sum_{i_{N+1}=1}^{k}\left(\prod_{j=1}^{N+1}\mu_{i_j}\right)E\left(h_{t-i_1-\cdots-i_{N+1}}\right)\\
&\leq&\mu\left[1+\sum_{n=1}^{N}\left(\sum_{j=1}^{k}\mu_{i_j}\right)^n\right]\\
&&+\sum_{i_1=1}^{k}\cdots\sum_{i_{N+1}=1}^{k}\left(\prod_{j=1}^{N+1}\mu_{i_j}\right)\max\left\{E\left(h_{t-l}\right): N+1\leq l\leq k(N+1)\right\}\\
&\leq&\mu\left[1+\sum_{n=1}^{N}\left(\sum_{j=1}^{k}\mu_{i_j}\right)^n\right]\\
&&+\max\left\{E\left(h_{t-l}\right): N+1\leq l\leq k(N+1)\right\}\left(\sum_{j=1}^{k}\mu_{i_j}\right)^{N+1},
\end{eqnarray*}
$\forall\ N\in\mathbb{N}$. Therefore, $E\left(h_t\right)\leq\infty$ if and only if 
\begin{equation*}
  \sum_{i=1}^{k}\mu_i<1.
\end{equation*}
When it is satisfied,
\begin{eqnarray*}
E\left(h_t\right)&=&\lim_{N\to\infty}\mu\left[1+\sum_{n=1}^{N}\left(\sum_{j=1}^{k}\mu_{i_j}\right)^n\right]\\
&&+\lim_{N\to\infty}\max\left\{E\left(h_{t-l}\right): N+1\leq l\leq k(N+1)\right\}\left(\sum_{j=1}^{k}\mu_{i_j}\right)^{N+1}\\
&=&\frac{\mu}{1-\sum_{i=1}^{k}\mu_i},
\end{eqnarray*}
by the finiteness of $E\left(h_{-\infty}\right)$. The formula for $E\left(r_t\right)$ follows immediately from the Aumann expectation.
\end{proof}

\subsection{Proof of Corollary \ref{cor:stat}}
\begin{proof}
It is immediate from Theorem \ref{thm:mean}, \ref{thm:var}, and \ref{thm:cov}.
\end{proof}

\subsection{Proof of Corollary \ref{cor:acf}}
\begin{proof}
The Auto-correlation Function (ACF) of $\left\{ r_{t}\right\} $ is
defined to be $\rho(s)=\mbox{Corr}\left(r_{t},r_{t+s}\right)=\dfrac{\gamma(s)}{\gamma(0)}$.
Then the ACF of $\left\{ r_{t}\right\} $ is 
\[
\rho(s)=\begin{cases}
1, & s=0\\
\dfrac{kE\left(h_{t}h_{t+s}\eta_{t}\right)-k^{2}\left[E\left(h_{t}\right)\right]^{2}}{\left(1+k+k^{2}\right)E\left(h_{t}^{2}\right)-k^{2}\left[E\left(h_{t}\right)\right]^{2}}, & |s|>0.
\end{cases}
\]
\end{proof}

\section{Lemmas}

\begin{lemma}\label{eta-X and h-h-eta}
\begin{eqnarray*}
  (i)\ &&E\left(\eta_{t}x_{t}\right)=\alpha_{1}\sqrt{\dfrac{2}{\pi}}\cdot k+\beta_{1}\left(k+k^{2}\right)+\gamma_{1}k;\\
 (ii)\ &&E\left(h_{t}h_{t+s}\eta_{t}\right)=\dfrac{\mu^{2}k}{C_{1}-1}\left(-\dfrac{C_{1}^{s}-1}{C_{1}-1}+\dfrac{C_{1}^{s}+C_{1}^{s-1}}{C_{2}-1}\cdot\left[\alpha_{1}\sqrt{\dfrac{2}{\pi}}+\beta_{1}\left(1+k\right)+\gamma_{1}\right]\right).
\end{eqnarray*}
 \end{lemma}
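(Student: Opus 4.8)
The plan is to obtain (i) by a one-line moment computation and to reduce (ii) to a finite sum of such moments by unrolling the recursion for $h_{t+s}$ exactly $s$ steps, down to $h_t$. Throughout I use the notation of the proof of Theorem~\ref{thm:mean}: the variables $x_\tau=\alpha_1|\varepsilon_\tau|+\beta_1\eta_\tau+\gamma_1$ are i.i.d., $h_\tau=\mu+x_{\tau-1}h_{\tau-1}$, $C_1=E(x_\tau)=\alpha_1\sqrt{2/\pi}+\beta_1k+\gamma_1$, and $C_2=E(x_\tau^2)$; in particular $h_t$ is a function of $x_{t-1},x_{t-2},\dots$ only, hence independent of $\eta_t$ and of every $x_\tau$ with $\tau\ge t$, while $\eta_t$ enters the single variable $x_t$.

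For (i), by linearity $E(\eta_tx_t)=\alpha_1E(\eta_t|\varepsilon_t|)+\beta_1E(\eta_t^2)+\gamma_1E(\eta_t)$. Independence of $\varepsilon_t$ and $\eta_t$ gives $E(\eta_t|\varepsilon_t|)=E(\eta_t)\,E|\varepsilon_t|=k\sqrt{2/\pi}$; and since $\eta_t\sim\Gamma(k,1)$, $E(\eta_t)=k$ and $E(\eta_t^2)=\mathrm{Var}(\eta_t)+k^2=k+k^2$. Substituting yields $E(\eta_tx_t)=\alpha_1\sqrt{2/\pi}\,k+\beta_1(k+k^2)+\gamma_1k$. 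Only $E\eta_t=k$, $E\eta_t^2=k+k^2$, and the independence of $\varepsilon_t,\eta_t$ are used, so (i) persists under the relaxed specification (\ref{igarch_3*})--(\ref{igarch_5*}).

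For (ii), fix $s\ge1$ and iterate $h_{t+s}=\mu+x_{t+s-1}h_{t+s-1}$ exactly $s$ times to reach $h_t$; this gives the \emph{exact} (no limiting argument needed) identity
\[
h_{t+s}=\mu\Big(1+\sum_{i=1}^{s-1}\prod_{j=1}^{i}x_{t+s-j}\Big)+\Big(\prod_{j=1}^{s}x_{t+s-j}\Big)h_t,
\]
so that $E(h_th_{t+s}\eta_t)=\mu\sum_{i=0}^{s-1}E\big[h_t\eta_t\prod_{j=1}^{i}x_{t+s-j}\big]+E\big[h_t^2\eta_t\prod_{j=1}^{s}x_{t+s-j}\big]$. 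The step I expect to require the most care is the dependence bookkeeping. For $i\le s-1$ the product $\prod_{j=1}^{i}x_{t+s-j}$ involves only $x_\tau$ with $t+1\le\tau\le t+s-1$, hence is independent of the pair $(h_t,\eta_t)$; but the last product $\prod_{j=1}^{s}x_{t+s-j}=x_{t+s-1}\cdots x_{t+1}\,x_t$ contains the factor $x_t$, which carries $\eta_t$ and is therefore \emph{not} independent of $\eta_t$ — this is the only source of nontrivial correlation in the whole expression.

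Granting this: each term of the first sum factors as $E(h_t)E(\eta_t)C_1^{i}=\frac{\mu k}{1-C_1}C_1^{i}$, and the geometric sum yields $\mu\sum_{i=0}^{s-1}\frac{\mu k}{1-C_1}C_1^{i}=\frac{\mu^2k(1-C_1^{s})}{(1-C_1)^2}=\frac{\mu^2k}{C_1-1}\big(-\frac{C_1^{s}-1}{C_1-1}\big)$. For the last term, pulling out $x_t$ and the mutually independent $x_{t+1},\dots,x_{t+s-1}$ gives $E(h_t^2)\,E(\eta_tx_t)\,C_1^{s-1}$; inserting $E(h_t^2)=\mu^2\frac{C_1+1}{(C_2-1)(C_1-1)}$ from Theorem~\ref{thm:var} and $E(\eta_tx_t)=k[\alpha_1\sqrt{2/\pi}+\beta_1(1+k)+\gamma_1]$ from (i), and using $C_1^{s}+C_1^{s-1}=C_1^{s-1}(C_1+1)$, rewrites it as $\frac{\mu^2k}{C_1-1}\cdot\frac{C_1^{s}+C_1^{s-1}}{C_2-1}\,[\alpha_1\sqrt{2/\pi}+\beta_1(1+k)+\gamma_1]$. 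Adding the two pieces reproduces the claimed formula. Besides the dependence accounting, everything is substitution of quantities already computed in Theorems~\ref{thm:mean} and~\ref{thm:var} and in (i), together with a geometric series; finiteness of $E(h_t)$ and $E(h_t^2)$ under the standing assumptions is all that is needed for every expectation above to make sense.
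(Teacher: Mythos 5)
Your proposal is correct and follows essentially the same route as the paper's own proof: the same $s$-step unrolling of $h_{t+s}$ down to $h_t$, the same independence bookkeeping isolating $E(\eta_t x_t)$ in the final product, and the same substitutions of $E(h_t)$ and $E(h_t^2)$ followed by a geometric series. The only cosmetic difference is that you merge the paper's first two terms into a single sum starting at $i=0$.
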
 

\begin{proof} 

(i)
\begin{eqnarray*}
E\left(\eta_{t}x_{t}\right) & = & E\left[\eta_{t}\cdot\left(\alpha_{1}\left|\varepsilon_{t}\right|+\beta_{1}\eta_{t}+\gamma_{1}\right)\right]\\
 & = & \alpha_{1}E\left(\left|\varepsilon_{t}\right|\right)\cdot E\left(\eta_{t}\right)+\beta_{1}\cdot E\left(\eta_{t}^{2}\right)+\gamma_{1}E\left(\eta_{t}\right)\\
 & = & \alpha_{1}\sqrt{\dfrac{2}{\pi}}\cdot k+\beta_{1}\left(k+k^{2}\right)+\gamma_{1}k
\end{eqnarray*}

(ii) First, we expand $h_{t+s}$ recursively:
\begin{eqnarray*}
h_{t+s} & = & \mu+x_{t+s-1}h_{t+s-1}\\
 & = & \mu+\mu x_{t+s-1}+x_{t+s-1}x_{t+s-2}h_{t+s-2}\\
 & = & \cdots\\
 & = & \mu+\mu x_{t+s-1}+\mu x_{t+s-1}x_{t+s-2}+\cdots\\
 &  & +\mu x_{t+s-1}x_{t+s-2}\cdots x_{t+1}+x_{t+s-1}x_{t+s-2}\cdots x_{t}h_{t}\\
 & = & \mu\left(1+\sum_{i=1}^{s-1}\prod_{j=1}^{i}x_{t+s-j}\right)+h_{t}\prod_{j=1}^{s}x_{t+s-j}.
\end{eqnarray*}
Consequently,
\begin{eqnarray*}
h_{t}h_{t+s}\eta_{t} & = & h_{t}\eta_{t}\cdot\left(\mu\left(1+\sum_{i=1}^{s-1}\prod_{j=1}^{i}x_{t+s-j}\right)+h_{t}\prod_{j=1}^{s}x_{t+s-j}\right)\\
 & = & \mu h_{t}\eta_{t}+\mu h_{t}\eta_{t}\sum_{i=1}^{s-1}\prod_{j=1}^{i}x_{t+s-j}+h_{t}^{2}\eta_{t}\prod_{j=1}^{s}x_{t+s-j}.
\end{eqnarray*}

Then the expected value is found to be
\begin{eqnarray*}
E\left(h_{t}h_{t+s}\eta_{t}\right) & = & \mu\cdot E\left(h_{t}\right)\cdot E\left(\eta_{t}\right)+\mu\cdot E\left(h_{t}\right)\cdot E\left(\eta_{t}\right)\cdot E\left[\sum_{i=1}^{s-1}\prod_{j=1}^{i}\left(x_{t+s-j}\right)\right]\\
 &  & +E\left(h_{t}^{2}\right)\cdot E\left(\eta_{t}\prod_{j=1}^{s}x_{t+s-j}\right)\\
 & = & \mu k\cdot E\left(h_{t}\right)+\mu k\cdot E\left(h_{t}\right)\cdot\left[\sum_{i=1}^{s-1}\prod_{j=1}^{i}E\left(x_{t+s-j}\right)\right]\\
 &  & +E\left(h_{t}^{2}\right)\cdot E\left(\eta_{t}x_{t}\right)\cdot\prod_{j=1}^{s-1}E\left(x_{t+s-j}\right)\\
 & = & \mu k\cdot E\left(h_{t}\right)+\mu k\cdot E\left(h_{t}\right)\cdot\dfrac{C_{1}\left(C_{1}^{s-1}-1\right)}{C_{1}-1}\\
 &  & +E\left(h_{t}^{2}\right)\cdot E\left(\eta_{t}x_{t}\right)\cdot C_{1}^{s-1},
\end{eqnarray*}
where $C_{1}=E\left(x_{t}\right)$.\\

Finally, remembering that $E\left(h_{t}^{2}\right)=\mu^{2}\dfrac{C_{1}+1}{\left(C_{2}-1\right)\left(C_{1}-1\right)}$, the above calculation is simplified to
\begin{eqnarray*}
E\left(h_{t}h_{t+s}\eta_{t}\right) & = & \mu k\cdot\dfrac{\mu}{1-C_{1}}\cdot\left[1+\dfrac{C_{1}\left(C_{1}^{s-1}-1\right)}{C_{1}-1}\right]\\
 &  & +E\left(h_{t}^{2}\right)\cdot\left[\alpha_{1}\sqrt{\dfrac{2}{\pi}}\cdot k+\beta_{1}\left(k+k^{2}\right)+\gamma_{1}k\right]\cdot C_{1}^{s-1}\\
 & = & \mu^{2}k\cdot\dfrac{1}{1-C_{1}}\cdot\left[\dfrac{C_{1}-1+C_{1}\left(C_{1}^{s-1}-1\right)}{C_{1}-1}\right]\\
 &  & +E\left(h_{t}^{2}\right)\cdot\left[\alpha_{1}\sqrt{\dfrac{2}{\pi}}\cdot k+\beta_{1}\left(k+k^{2}\right)+\gamma_{1}k\right]\cdot C_{1}^{s-1}\\
 & = & -\mu^{2}k\cdot\dfrac{C_{1}^{s}-1}{\left(C_{1}-1\right)^{2}}+\mu^{2}\dfrac{C_{1}+1}{\left(C_{2}-1\right)\left(C_{1}-1\right)}\cdot\left[\alpha_{1}\sqrt{\dfrac{2}{\pi}}\cdot k+\beta_{1}\left(k+k^{2}\right)+\gamma_{1}k\right]\cdot C_{1}^{s-1}\\
 & = & \dfrac{\mu^{2}k}{C_{1}-1}\left(-\dfrac{C_{1}^{s}-1}{C_{1}-1}+\dfrac{C_{1}^{s}+C_{1}^{s-1}}{C_{2}-1}\cdot\left[\alpha_{1}\sqrt{\dfrac{2}{\pi}}+\beta_{1}\left(1+k\right)+\gamma_{1}\right]\right)
\end{eqnarray*}
\end{proof}

\begin{lemma}\label{expected value of x_t ^2}
\begin{eqnarray*}
E\left(x_{t}^{2}\right) & = & \alpha_{1}^{2}+\beta_{1}^{2}\left(k+k^{2}\right)+\gamma_{1}^{2}+2\alpha_{1}\beta_{1}\sqrt{\dfrac{2}{\pi}}k+2\alpha_{1}\gamma_{1}\sqrt{\dfrac{2}{\pi}}+2\beta_{1}\gamma_{1}k.
\end{eqnarray*}
\end{lemma}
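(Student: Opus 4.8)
The plan is purely computational: the identity is the second moment of the i.i.d.\ variable $x_t = \alpha_1|\epsilon_{t-1}| + \beta_1\eta_{t-1} + \gamma_1$, so I would simply expand and substitute. First I would write out the square, $x_t^2 = \alpha_1^2\epsilon_{t-1}^2 + \beta_1^2\eta_{t-1}^2 + \gamma_1^2 + 2\alpha_1\beta_1|\epsilon_{t-1}|\eta_{t-1} + 2\alpha_1\gamma_1|\epsilon_{t-1}| + 2\beta_1\gamma_1\eta_{t-1}$, and take expectations term by term using linearity.

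Second, I would handle the only mixed term, $E(|\epsilon_{t-1}|\eta_{t-1})$, by invoking the independence of $\epsilon_t$ and $\eta_t$ built into (\ref{igarch_3})--(\ref{igarch_4}), so that it factors as $E|\epsilon_{t-1}|\cdot E(\eta_{t-1})$; the two purely-$\epsilon$ terms and the two purely-$\eta$ terms require no factoring. Third, I would substitute the elementary scalar moments: for $\epsilon_{t-1}\sim N(0,1)$ one has $E(\epsilon_{t-1}^2)=1$ and $E|\epsilon_{t-1}|=\sqrt{2/\pi}$ (the half-normal mean); for $\eta_{t-1}\sim\Gamma(k,1)$ one has $E(\eta_{t-1})=k$ and $E(\eta_{t-1}^2)=\mathrm{Var}(\eta_{t-1})+(E\eta_{t-1})^2=k+k^2$. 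Plugging these six values into the expanded expectation yields exactly $\alpha_1^2 + \beta_1^2(k+k^2) + \gamma_1^2 + 2\alpha_1\beta_1\sqrt{2/\pi}\,k + 2\alpha_1\gamma_1\sqrt{2/\pi} + 2\beta_1\gamma_1 k$.

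There is essentially no obstacle here; the computation is routine, and indeed the same moments already appear piecemeal in Lemma~\ref{eta-X and h-h-eta}(i). The only points worth flagging are (i) remembering to use the half-normal mean $\sqrt{2/\pi}$ rather than $E\epsilon_{t-1}=0$, and (ii) that the term $\beta_1^2(k+k^2)$ genuinely relies on the $\Gamma(k,1)$ specification --- under the weaker moment conditions (\ref{igarch_3*})--(\ref{igarch_5*}), which pin down only $E(\eta_t)$, one would have to carry $E(\eta_t^2)$ as an additional given quantity.
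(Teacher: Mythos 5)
Your proposal is correct and matches the paper's proof essentially verbatim: expand the square, use linearity and the independence of $\epsilon_t$ and $\eta_t$ to factor the cross term, and substitute $E(\epsilon_t^2)=1$, $E|\epsilon_t|=\sqrt{2/\pi}$, $E(\eta_t)=k$, and $E(\eta_t^2)=k+k^2$. Your closing remark that the result depends on the $\Gamma(k,1)$ specification through $E(\eta_t^2)$ is a sensible observation the paper does not make explicit, but it does not change the argument.
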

\begin{proof}
\begin{eqnarray*}
E\left(x_{t}^{2}\right) & = & E\left(\left(\alpha_{1}\left|\varepsilon_{t}\right|+\beta_{1}\eta_{t}+\gamma_{1}\right)^{2}\right)\\
 & = & E\left(\alpha_{1}^{2}\varepsilon_{t}^{2}+\beta_{1}^{2}\eta_{t}^{2}+\gamma_{1}^{2}+2\alpha_{1}\beta_{1}\left|\varepsilon_{t}\right|\eta_{t}+2\alpha_{1}\gamma_{1}\left|\varepsilon_{t}\right|+2\beta_{1}\gamma_{1}\eta_{t}\right)\\
 & = & \alpha_{1}^{2}E\left(\varepsilon_{t}^{2}\right)+\beta_{1}^{2}E\left(\eta_{t}^{2}\right)+\gamma_{1}^{2}+2\alpha_{1}\beta_{1}E\left(\left|\varepsilon_{t}\right|\right)\cdot E\left(\eta_{t}\right)+2\alpha_{1}\gamma_{1}E\left(\left|\varepsilon_{t}\right|\right)+2\beta_{1}\gamma_{1}E\left(\eta_{t}\right)\\
 & = & \alpha_{1}^{2}+\beta_{1}^{2}\left(k+k^{2}\right)+\gamma_{1}^{2}+2\alpha_{1}\beta_{1}\sqrt{\dfrac{2}{\pi}}k+2\alpha_{1}\gamma_{1}\sqrt{\dfrac{2}{\pi}}+2\beta_{1}\gamma_{1}k.
\end{eqnarray*}
\end{proof}


\end{document}